\begin{document}

\newcommand{\red}[1]{\textcolor{red}{#1}}
\newcommand{\?}{{\bf ???}}

\definecolor{thmcolor}{rgb}{0,0,.4} 
\definecolor{remarkcolor}{rgb}{0,.2,0} 
\definecolor{proofcolor}{rgb}{.4,0,0} 
\definecolor{quecolor}{rgb}{.2,.2,0} 
\definecolor{axcolor}{rgb}{.3,0,.3}
\definecolor{thmbgcolor}{rgb}{0.9,0.9,1} 
\definecolor{rmbgcolor}{rgb}{0.9,1,0.9} 
\definecolor{proofbgcolor}{rgb}{1,0.9,0.9} 
 
\renewcommand{\proofname}{\colorbox{proofbgcolor}{\textcolor{proofcolor}{Proof}}} 
\renewcommand{\qedsymbol}{\colorbox{proofbgcolor}{\textcolor{proofcolor}{$\blacksquare$}}} 
 
\theoremstyle{definition} \newtheorem{thm}{\colorbox{thmbgcolor}{\textcolor{thmcolor}{Theorem}}}[section] 
\theoremstyle{definition} \newtheorem{cor}[thm]{\colorbox{thmbgcolor}{\textcolor{thmcolor}{Corollary}}} 
\theoremstyle{definition} \newtheorem{lem}[thm]{\colorbox{thmbgcolor}{\textcolor{thmcolor}{Lemma}}}
\theoremstyle{definition} \newtheorem{prop}[thm]{\colorbox{thmbgcolor}{\textcolor{thmcolor}{Proposition}}}
 
\theoremstyle{remark} \newtheorem{conv}[thm]{\colorbox{rmbgcolor}{\sc\textcolor{remarkcolor}{Convention}}} 
\theoremstyle{remark} \newtheorem{conj}[thm]{\colorbox{rmbgcolor}{\sc\textcolor{remarkcolor}{Conjecture}}}
\theoremstyle{definition} \newtheorem{que}[thm]{\colorbox{rmbgcolor}{\textcolor{quecolor}{Question}}} 
\theoremstyle{definition} \newtheorem{rem}[thm]{\colorbox{rmbgcolor}{\textcolor{remarkcolor}{Remark}}}
\theoremstyle{definition} \newtheorem{example}[thm]{\colorbox{rmbgcolor}{\textcolor{remarkcolor}{Example}}}

\newcommand{\limit}{\mathsf{Limit}}
\newcommand{\Hyp}{\mathsf{Hyp}}
\newcommand{\taxis}{\mathsf{t\text{-}axis}}
\newcommand{\com}{\mathsf{com}}
\newcommand{\sh}{\mathsf{sh}}
\newcommand{\ch}{\mathsf{ch}}
\newcommand{\sqslope}{\mathsf{slope}^2}
\newcommand{\sqdist}{\mathsf{dist}^2}
\newcommand{\timed}{\mathsf{time}}
\newcommand{\sqspace}{\mathsf{space}^2}
\newcommand{\sqlength}{\mathsf{length}^2}
\newcommand{\sqspeed}{\mathsf{speed}^2}
\newcommand{\ls}{\mathsf{c}}
\newcommand{\q}{\mathsf{q}}
\newcommand{\vo}{\bar o}
\newcommand{\vp}{\bar p}
\newcommand{\vq}{\bar q}
\newcommand{\vr}{\bar r}
\newcommand{\vx}{\bar x}
\newcommand{\vy}{\bar y}
\newcommand{\vz}{\bar z}
\newcommand{\vv}{\bar v}
\newcommand{\vu}{\bar u}
\newcommand{\vw}{\bar w}
\newcommand{\ve}{\bar e}
\newcommand{\va}{\bar a}
\newcommand{\vex}{\bar 1_x}
\newcommand{\vet}{\bar 1_t}
\newcommand{\then}{\rightarrow}
\newcommand{\oszt}{\slash}
\newcommand{\gyok}{\sqrt{\phantom{n}}}
\newcommand{\de}{:=}
\newcommand{\rac}{\mathbb{Q}}
\newcommand{\R}{\mathbb{R}}
\newcommand{\A}{\mathbb{A}}

\newcommand{\defiff}{\ \stackrel{\;def}{\Longleftrightarrow}\ }

\newcommand{\IOb}{\ensuremath{\mathsf{IOb}}} 
\newcommand{\IB}{\ensuremath{\mathsf{IB}}} 
\newcommand{\Ob}{\ensuremath{\mathsf{Ob}}} 
\newcommand{\B}{\ensuremath{\mathit{B}}} 
\newcommand{\Ph}{\ensuremath{\mathsf{Ph}}} 
\newcommand{\F}{\ensuremath{\mathsf{F}}} 
\newcommand{\Q}{\ensuremath{\mathit{Q}}} 
\newcommand{\W}{\ensuremath{\mathsf{W}}} 
\newcommand{\E}{\ensuremath{\mathsf{E}}} 
\newcommand{\SpecRel}{\ensuremath{\mathsf{SpecRel}}} 
\newcommand{\AccRel}{\ensuremath{\mathsf{AccRel}}} 
\newcommand{\GenRel}{\ensuremath{\mathsf{GenRel}}} 
\newcommand{\ev}{\ensuremath{\mathsf{ev}}} 
\newcommand{\cl}{\ensuremath{\mathsf{cl}}} 
\newcommand{\Apr}{\ensuremath{\mathsf{Apr}}} 
\newcommand{\Id}{\ensuremath{\mathsf{Id}}} 
\newcommand{\dom}{\ensuremath{\mathsf{Dom}\,}} 
\newcommand{\ran}{\ensuremath{\mathsf{Ran}\,}} 
\newcommand{\M}{\ensuremath{\mathfrak{M}}} 
\newcommand{\ax}[1]{\textcolor{axcolor}{\ensuremath{\mathsf{#1}}}} 
\newcommand{\AxEv}{\ensuremath{\mathsf{AxEv}}}
\newcommand{\wl}{\ensuremath{\mathsf{wl}}}
\newcommand{\lc}{\ensuremath{\mathsf{lc}}}
\newcommand{\w}{\ensuremath{\mathsf{w}}}
\newcommand{\num}{\textit{Num}}

\title{What are the numbers in which spacetime?}
\author{H.~Andr\'eka, J.~X.Madar\'asz, I.~N\'emeti, G.~Sz\'ekely}
\date{\today}

\begin{abstract}
Within an axiomatic framework, we investigate the possible structures
of numbers (as physical quantities) in different theories of
relativity.
\end{abstract}

\maketitle

\section{Introduction}

Basically, we would like to investigate the following metaphysical question:

\centerline{\it What are the numbers in the physical world?}
\smallskip

Without making this question more precise we can make the following
two natural guesses which contradict each other:
\begin{itemize}
\item Obviously, the physical numbers are  the real (or the complex) numbers since at least 99\% of our physical theories are using these numbers.
\item Obviously, the set of physical numbers is a subset of the rational numbers (or even the integers) since the outcomes of the measurements have finite decimal representations. 
\end{itemize}

Clearly, this informal level is too naive to meaningfully investigate
our question. However, that does not mean that it is impossible to
scientifically investigate our question within some logical
framework. In this paper, we are going to reformulate and investigate
this question (restricted to spacetime theories) within a rigorous
logical framework.

First of all, what do numbers have to do with the geometry of
spacetime?  The concepts related to numbers can be defined by the
concepts of geometry by Hilbert's coordinatization, see, e.g.,
\cite[pp.23-27]{Gol}. Moreover, purely geometrical statements can
correspond to statements about the structure of numbers. For example,
in Cartesian planes over ordered fields, the statement {\it ``every line
which contains a point from the interior of a circle intersects the
circle''} is equivalent to that {\it``every positive number has a square
root,''} see, e.g., \cite[Prop.16.2., p.144]{hartshorne}.
In the spirit of this example, here we investigate the question 
\begin{center}
{\it``How are some 
properties of spacetime reflected\\ on the structure
  of numbers?''}
\end{center}

Among others, we will see axioms on observers also implying that
positive numbers have square roots.  Ordered fields in which positive
numbers have square roots are called {\bf Euclidean fields}, which got
their names after their role in Tarski's first-order logic
axiomatization of Euclidean geometry \cite{TarskiElge}.

Let \ax{Th} be a theory of space-time which
contains the concept of numbers (as physical quantities) together with
some algebraic operations on them, such as addition ($+$),
multiplication ($\cdot$) (or at least these concepts are definable in
\ax{Th}.). In this case, we can introduce notation $\num(\ax{Th})$
for the class of the quantity parts (quantity structures) of the models of theory \ax{Th}:
\begin{equation*}
\num(\ax{Th})=\{\text{The quantity parts of the models of \ax{Th}}\}.
\end{equation*}
We use the notation $\mathfrak{Q}\in\num(\ax{Th})$ for algebraic
structure $\mathfrak{Q}$ the same way as the model theoretic notation
$\mathfrak{Q}\in Mod(\ax{AxField})$, e.g., $\rac\in\num(\ax{Th})$
means that $\rac$, the field of rational numbers, can be the
structure of quantities (numbers) in \ax{Th}.
Now we can scientifically investigate the question   
\begin{center}
\it ``What are the numbers in physical theory \ax{Th}?''
\end{center}
by studying what algebraic structures occur in $\num(\ax{Th})$.  

In this paper, we investigate this question only in the case when
\ax{Th} is a theory of spacetimes. However, this question can be
investigated in any other physical theory the same way.

We will see that the answer to our question often depends on the
dimension of spacetime. Therefore, we will introduce notation
$\num_{n}(\ax{Th})$ at page \pageref{num} for the class of the
possible quantity structures of theory \ax{Th} if all the investigated
spacetimes are $n$-dimensional.

In the logic language of Section~\ref{lang-s}, we will introduce
several theories and axioms of relativity theory.  For example, our
starting axiom system for special relativity (called \ax{SpecRel},
see page \pageref{specrel}) captures the kinematics of
special relativity perfectly, see Theorem~\ref{thm-poi} and Corollary
\ref{cor-slow}.  Furthermore, without any extra assumptions
\ax{SpecRel} has a model over every ordered field, i.e.,
\begin{equation*}
\num(\ax{SpecRel})=\{\,\text{ordered fields} \,\},
\end{equation*}
see Remark~\ref{rem-of}. Therefore, \ax{SpecRel} has a model over
$\rac$, too.  However, if we assume that inertial observes can move
with arbitrary speed less than that of light (in any direction every
where), see \ax{AxThExp} at page \pageref{axthexp}, then every
positive number has to have a square root if $n\ge 3$ by
Theorem~\ref{thm-eof}, i.e.,
\begin{equation*}
\num_n(\ax{SpecRel} + \ax{AxThExp})=\{\text{\,Euclidean  fields\,}\}.
\end{equation*}
In particular, the number structure cannot be the field of rational
numbers, but it can be the field of real algebraic numbers.

We will also see that our axiom system of special relativity has a
model over $\rac$ if we assume axiom \ax{AxThExp} only approximately
(which is reasonable as we cannot be sure in anything perfectly
accurately in physics), see Theorem~\ref{thm-rac},
Corollary~\ref{cor-arch} and Conjecture~\ref{conj-of}.

It is interesting that, if the spacetime dimension is 3, then we do
not need the symmetry axiom of \ax{SpecRel} to prove that every
positive number has a square root if \ax{AxThExp} is assumed, see
Theorem~\ref{thm-eof3}. However, in even dimensions, it is
possible that some numbers do not have square roots, see
Theorem~\ref{thm-eof4} and Questions~\ref{que-thx} and \ref{que-2}.

Moving toward general relativity we will see that our theory of
accelerated observes (\ax{AccRel}) requires the structure of
quantities to be a real closed field, i.e., a Euclidean field in
which every odd degree polynomial has a root, see
Theorem~\ref{thm-rc}. However, any real closed field, e.g., the field
of real algebraic numbers, can be the quantity structure of \ax{AccRel}. 

If we extend \ax{AccRel} by extra axiom \ax{Ax\exists UnifOb} stating
that there are uniformly accelerated observers, then the field of real
algebraic numbers cannot be the structure of quantities any more if
$n\ge3$, see Theorem~\ref{thm-noalg}.  A surprising consequence of this
result is that $\num_n(\ax{AccRel}+\ax{Ax\exists UnifOb})$ is not a
first-order logic definable class of fields, see Remark~\ref{rem-noalg}. 

In Section~\ref{sec-gr}, we introduce an axiom system of general
relativity \ax{GenRel} and investigate our question a bit for
\ax{GenRel}.

\section{The language of our theories}
\label{lang-s}

To investigate our reformulated question, we need an
axiomatic theory of spacetimes.  The first important decision in
writing up an axiom system is to choose the set of basic
symbols of our logic language, i.e., what objects and relations between
them we will use as basic concepts.

Here we will use the following two-sorted\footnote{That our theory is
  two-sorted means only that there are two types of basic objects
  (bodies and quantities) as opposed to, e.g., Zermelo--Fraenkel set
  theory where there is only one type of basic objects (sets).}
language of first-order logic (FOL) parametrized by a natural number
$d\ge 2$ representing the dimension of spacetime:
\begin{equation*}
\{\, \B,\Q\,; \Ob, \IOb, \Ph,+,\cdot,\le,\W\,\},
\end{equation*}
where $\B$ (bodies\footnote{By bodies we mean anything which can move,
  e.g., test-particles, reference frames, electromagnetic waves,
  centers of mass, etc.}) and $\Q$ (quantities) are the two sorts,
$\Ob$ (observers), $\IOb$ (inertial observers) and $\Ph$ (light
signals) are one-place relation symbols of
sort $\B$, $+$ and $\cdot$ are two-place function symbols of sort
$\Q$, $\le$ is a two-place relation symbol of sort $\Q$, and $\W$ (the
worldview relation) is a $d+2$-place relation symbol the first two
arguments of which are of sort $\B$ and the rest are of sort $\Q$.

Relations $\Ob(o)$, $\IOb(m)$ and $\Ph(p)$ are translated as
``\textit{$o$ is an observer},'' ``\textit{$m$ is an inertial
  observer},'' and ``\textit{$p$ is a light signal},''
respectively. To speak about coordinatization of observers, we
translate relation $\W(k,b,x_1,x_2,\ldots,x_d)$ as ``\textit{body $k$
  coordinatizes body $b$ at space-time location $\langle x_1,
  x_2,\ldots,x_d\rangle$},'' (i.e., at space location $\langle
x_2,\ldots,x_d\rangle$ and instant $x_1$).

{\bf Quantity  terms} are the variables of sort $\Q$ and what can be
built from them by using the two-place operations $+$ and $\cdot$,
{\bf body terms} are only the variables of sort $\B$.
$\IOb(m)$, $\Ph(p,b)$, $\W(m,b,x_1,\ldots,x_d)$, $x=y$, and $x\le y$
where $m$, $p$, $b$, $x$, $y$, $x_1$, \ldots, $x_d$ are arbitrary
terms of the respective sorts are so-called {\bf atomic formulas} of
our first-order logic language. The {\bf formulas} are built up from
these atomic formulas by using the logical connectives \textit{not}
($\lnot$), \textit{and} ($\land$), \textit{or} ($\lor$),
\textit{implies} ($\rightarrow$), \textit{if-and-only-if}
($\leftrightarrow$) and the quantifiers \textit{exists} ($\exists$)
and \textit{for all} ($\forall$).

To make them easier to read, we omit the outermost universal
quantifiers from the formalizations of our axioms, i.e., all the free
variables are universally quantified.

We use the notation $\Q^n$ for the set of all $n$-tuples of elements
of $\Q$. If $\vx\in \Q^n$, we assume that $\vx=\langle
x_1,\ldots,x_n\rangle$, i.e., $x_i$ denotes the
$i$-th component of the $n$-tuple $\vx$. Specially, we write $\W(m,b,\vx)$ in
place of $\W(m,b,x_1,\dots,x_d)$, and we write $\forall \vx$ in place
of $\forall x_1\dots\forall x_d$, etc.

We use first-order logic set theory as a meta theory to speak about model
theoretical terms, such as models, validity, etc.  The {\bf models} of
this language are of the form
\begin{equation*}
{\mathfrak{M}} = \langle \B, \Q;\Ob_\mathfrak{M}, \IOb_\mathfrak{M},\Ph_\mathfrak{M},+_\mathfrak{M},\cdot_\mathfrak{M},\le_\mathfrak{M},\W_\mathfrak{M}\rangle,
\end{equation*}
where $\B$ and $\Q$ are nonempty sets, $\Ob_\mathfrak{M}$,
$\IOb_\mathfrak{M}$ and $\Ph_\mathfrak{M}$ are subsets of $\B$,
$+_\mathfrak{M}$ and $\cdot_\mathfrak{M}$ are binary functions and
$\le_\mathfrak{M}$ is a binary relation on $\Q$, and $\W_\mathfrak{M}$
is a subset of $\B\times \B\times \Q^d$.  Formulas are interpreted in
$\mathfrak{M}$ in the usual way.  For the precise definition of the
syntax and semantics of first-order logic, see, e.g., \cite[\S
  1.3]{CK}, \cite[\S 2.1, \S 2.2]{End}.

\section{Numbers required by special relativity}
\label{ax-s}

In this section, we will investigate our main question within special
relativity. To do so, first we formulate axioms for special relativity
in the logic language of the previous section.

Since the language above contains the concept of quantities (and that
of addition, multiplication and ordering), we can formulate statements
about numbers directly.  In our first axiom, we state some basic
properties of addition, multiplication and ordering true for real
numbers.\footnote{Using axiom \ax{AxOFiled} instead of assuming that
  the structure of quantities is the field of real numbers not just
  makes our theory more flexible, but also makes it possible to
  investigate our main question.}

\begin{description}
\item[\underline{\ax{AxOField}}]
 The quantity part $\langle \Q,+,\cdot,\le \rangle$ is an ordered field, i.e.,
\begin{itemize}
\item  $\langle\Q,+,\cdot\rangle$ is a field in the sense of abstract
algebra; and
\item 
the relation $\le$ is a linear ordering on $\Q$ such that  
\begin{itemize}
\item[i)] $x \le y\then x + z \le y + z$ and 
\item[ii)] $0 \le x \land 0 \le y\then 0 \le xy$
holds.
\end{itemize}
\end{itemize}
\end{description}

\ax{AxOField} is a ``mathematical" axiom in spirit. However, it has
physical (even empirical) relevance. Its physical relevance is that we
can add and multiply the outcomes of our measurements and some basic
rules apply to these operations. Physicists use all properties of the
real numbers tacitly, without stating explicitly which property is
assumed and why. The two properties of real numbers which are the most
difficult to defend from empirical point of view are the Archimedean
property, see \cite{Rosinger08}, \cite[\S 3.1]{Rosinger09},
\cite{Rosinger11a},\cite{Rosinger11b}, and the supremum
property,\footnote{The supremum property (i.e., every nonempty and
  bounded subset of the numbers has a least upper bound) implies the
  Archimedean property. So if we want to get ourselves free from the
  Archimedean property, we have to leave this one, too.} see the
remark after the introduction of \ax{CONT} on page \pageref{p-cont}.

The rest of our axioms on special relativity will speak about the
worldviews of inertial observers.  To formulate them, we use the
following concepts.  The {\bf time difference} of coordinate points
$\vx,\vy\in\Q^d$ is defined as:
\begin{equation*}
\timed(\vx,\vy)\de x_1-y_1. 
\end{equation*}
To speak about the spatial distance of any two coordinate points, we
have to use squared distance since it is possible that the distance of
two points is not amongst the quantities. For example, the distance of
points $\langle 0,0\rangle$ and $\langle 1,1\rangle$ is $\sqrt{2}$. So
in the field of rational numbers, $\langle 0,0\rangle$ and $\langle
1,1\rangle$ do not have distance but they have squared distance.  Therefore, we
define the {\bf squared spatial distance} of $\vx,\vy\in\Q^d$ as:
\begin{equation*}
  \sqspace(\vx,\vy)\de (x_2-y_2)^2+\ldots+(x_d-y_d)^2.
\end{equation*}
We denote the {\bf origin} of $\Q^n$ by $\vo$, i.e., $\vo\de \langle
0,\ldots,0\rangle$.  

The next axiom is the key axiom of our axiom system for special
relativity, it has an immediate physical meaning. This axiom is the
outcome of the Michelson-Morley experiment. It has been continuously
tested ever since then.  Nowadays it is tested by GPS technology.

\begin{description}
\item[\underline{\ax{AxPh}}] For any inertial observer, the speed of
  light is the same everywhere and in every direction (and it is
  finite). Furthermore, it is possible to send out a light signal in
  any direction (existing according to the coordinate system)
  everywhere:
\begin{multline*}
\IOb(m)\rightarrow \exists c_m\Big[c_m>0\land \forall \vx\vy
\\ \big(\exists p \big[\Ph(p)\land \W(m,p,\vx)\land
    \W(m,p,\vy)\big] \leftrightarrow \sqspace(\vx,\vy)=
  c_m^2\cdot\timed(\vx,\vy)^2\big)\Big].
\end{multline*}
\end{description}

Let us note here that \ax{AxPh} does not require (by itself) that the
speed of light is the same for every inertial observer.  It requires
only that the speed of light according to a fixed inertial observer is
a positive quantity which does not depend on the direction or the
location.

By \ax{AxPh}, we can define the {\bf speed of light} according to
inertial observer $m$ as the following binary relation:
\begin{multline*}
\ls(m,v)\defiff v>0  \land \forall \vx\vy\big(
\exists p \big[\Ph(p)\land \W(m,p,\vx)\land \W(m,p,\vy)\big]\\
\then \sqspace(\vx,\vy)= v^2\cdot\timed(\vx,\vy)^2\big). 
\end{multline*}

By \ax{AxPh}, there is one and only one speed $v$ for every inertial
observer $m$ such that $\ls(m,v)$ holds. From now on, we will denote
this unique speed by $\ls_m$.

Our next axiom connects the worldviews of different inertial observers
by saying that all observers coordinatize the same ``external" reality
(the same set of events).  By the {\bf event} occurring for observer
$m$ at point $\vx$, we mean the set of bodies $m$
coordinatizes at $\vx$:
\begin{equation*}
\ev_m(\vx)\de\{ b : \W(m,b,\vx)\}.
\end{equation*}

\begin{description}
\item[\underline{\ax{AxEv}}]
All inertial observers coordinatize the same set of events:
\begin{equation*}
\IOb(m)\land\IOb(k)\rightarrow  \exists \vy\, \forall b
\big[\W(m,b,\vx)\leftrightarrow\W(k,b,\vy)\big].
\end{equation*}
\end{description}
From now on, we will use $\ev_m(\vx)=\ev_k(\vy)$ to abbreviate the
subformula $\forall b [\W(m,b,\vx)\leftrightarrow\W(k,b,\vy)]$ of
\ax{AxEv}. The next two axioms are only simplifying ones.

\begin{description}
\item[\underline{\ax{AxSelf}}]
Any inertial observer is stationary relative to himself:
\begin{equation*}
\IOb(m)\rightarrow \forall \vx\big[\W(m,m,\vx) \leftrightarrow x_2=\ldots=x_d=0\big].
\end{equation*}
\end{description}
Our last axiom on inertial observers is a symmetry axiom saying that
they use the same units of measurement.

\begin{description}
\item[\underline{\ax{AxSymD}}]
Any two inertial observers agree as to the spatial distance between
two events if these two events are simultaneous for both of them;
furthermore, the speed of light is 1 for all observers:
\begin{multline*}
\IOb(m)\land\IOb(k) \land x_1=y_1\land x'_1=y'_1\land
\ev_m(\vx)=\ev_k(\vx')\\ \land
\ev_m(\vy)=\ev_k(\vy')\rightarrow \sqspace(\vx,\vy)=\sqspace(\vx',\vy'),
\text{ and }\\
\IOb(m)\rightarrow\exists
p\big[\Ph(p)\land\W(m,p,0,\ldots,0)\land\W(m,p,1,1,0,\ldots,0)\big].
\end{multline*}
\end{description}

Let us introduce an axiom system for special relativity as the collection
of the axioms above, if $d\ge 3$:
\begin{equation*}\label{specrel}
\ax{SpecRel} \de \{ \ax{AxOField}, \ax{AxPh}, \ax{AxEv}, \ax{AxSelf},
\ax{AxSymD}\}.
\end{equation*}

In relativity theory, we are often interested in comparing the
worldviews of two different observers. To do so, we introduce the
worldview transformation between observers $m$ and $k$ (in symbols,
$\w_{mk}$) as the binary relation on $\Q^d$ connecting the coordinate
points where $m$ and $k$ coordinatize the same (nonempty) events:
\begin{equation*}
\w_{mk}(\vx,\vy)\defiff \ev_m(\bar
x)=\ev_k(\vy)\neq\emptyset.
\end{equation*}

Map $P:\Q^d\rightarrow\Q^d$ is called a Poincar\'e transformation iff
it is an affine bijection having the following property
\begin{equation*}
\timed(\vx,\vy)^2-\sqspace(\vx,\vy)=\timed(\vx',\vy')^2-\sqspace(\vx',\vy')
\end{equation*}
for all $\vx,\vy,\vx',\vy'\in\Q^d$ for which  $P(\vx)=\vx'$ and $P(\vy)=\vy'$.

Theorem~\ref{thm-poi} shows that our streamlined axiom system
\ax{SpecRel} perfectly captures the kinematics of special relativity
since it implies that the worldview transformations between inertial
observers are the same as in the standard non-axiomatic approaches.
\begin{thm}\label{thm-poi}
Let $d\ge3$. Assume \ax{SpecRel}. Then $\w_{mk}$ is a Poincar\'e
transformation if $m$ and $k$ are inertial
observers.\footnote{Actually, axioms \ax{AxOField}, \ax{AxPh},
  \ax{AxEv}, and \ax{AxSymD} are enough to prove this statement, see
  Theorem~\ref{thm-poi0}.}
\end{thm}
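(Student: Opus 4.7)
The plan is to reduce the theorem to three facts about the relation $\w_{mk}$: that it is in fact a bijection of $\Q^d$ onto itself, that it preserves the Minkowski null-cone relation $\timed(\vx,\vy)^2 = \sqspace(\vx,\vy)$, and that it preserves spatial distance between points that are simultaneous both for $m$ and for $k$. Once these three items are in hand, an Alexandrov--Zeeman style rigidity result in dimension $d \ge 3$ will force $\w_{mk}$ to be an affine bijection preserving the full Minkowski form, which is precisely what it means to be a Poincaré transformation.

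First I would verify that $\w_{mk}$ is a bijection. Existence of an image point for each source point follows directly from \ax{AxEv}, applied symmetrically with $m$ and $k$ swapped. That $\w_{mk}$ is actually a function and not merely a relation uses \ax{AxPh}: if $\ev_k(\vy) = \ev_k(\vy')$ coincided at two distinct coordinate points, sending photons through $\vy$ in enough independent directions would, via \ax{AxEv} and the speed-of-light equation, pin $\vy$ and $\vy'$ to the same point. Totality of the domain uses the photon-existence clause of \ax{AxSymD}, ensuring each coordinate point of each inertial observer hosts some event visible to the other.

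Next I would establish null-cone preservation, which is the real geometric content supplied by relativity. If $\vx$ and $\vy$ are joined by the worldline of a photon $p$ as seen by $m$, then by \ax{AxPh} applied to $m$, together with $\ls_m = 1$ forced by the photon clause of \ax{AxSymD}, we get $\sqspace(\vx,\vy) = \timed(\vx,\vy)^2$. Since $p$ is the same body, \ax{AxEv} places $p$ on both $\w_{mk}(\vx)$ and $\w_{mk}(\vy)$ in $k$'s worldview, so the analogous equation holds on the image side because $\ls_k = 1$ as well. The reverse direction, namely that every null-separated pair of image points comes from a null-separated source pair, follows symmetrically since the photon-existence part of \ax{AxPh} is an equivalence rather than a one-sided implication. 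Preservation of spatial distance between events simultaneous for both observers is exactly the first clause of \ax{AxSymD}.

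The main obstacle is the final step: promoting a null-cone-preserving bijection of $\Q^d$ to an affine map that preserves the Minkowski quadratic form. This is an Alexandrov--Zeeman type theorem, and the hypothesis $d \ge 3$ is essential because in dimension $2$ the conformal group of the light cone is vastly larger. The route I would pursue is to first show that $\w_{mk}$ carries null lines to null lines, extracted from the incidence geometry of how pairs of null cones meet when $d \ge 3$, then upgrade this to preservation of arbitrary affine lines, then apply the fundamental theorem of affine geometry to obtain affinity up to a field automorphism of $\Q$, which the order structure guaranteed by \ax{AxOField} eliminates on the relevant one-parameter subgroups. Finally the simultaneous-distance clause of \ax{AxSymD} kills the residual scale ambiguity, leaving only an element of the Poincaré group.
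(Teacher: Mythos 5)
Your overall strategy --- bijectivity from \ax{AxEv} and \ax{AxPh}, null-cone preservation from \ax{AxPh} with $\ls_m=\ls_k=1$, and an Alexandrov--Zeeman rigidity step in $d\ge3$ normalized by the distance clause of \ax{AxSymD} --- is essentially the paper's (which routes the rigidity step through Vroegindewey's field-general Alexandrov--Zeeman theorem and a ``no light-like triangle'' lemma rather than through the incidence geometry of intersecting cones). But there is a genuine gap where you dispose of the field automorphism. The fundamental theorem of affine geometry (equivalently, Vroegindewey's theorem) only yields a \emph{semilinear} map: affine up to a field automorphism $\alpha$ of $\Q$, with $\mu^2_1\big(f(\vx)\big)=\lambda\,\alpha\big(\mu^2_1(\vx)\big)$. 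You claim that ``the order structure guaranteed by \ax{AxOField} eliminates'' $\alpha$. That is false in this setting: \ax{AxOField} only says $\Q$ is an ordered field, and non-Archimedean ordered fields --- which this framework deliberately admits, since asking what the numbers can be is the whole point --- have nontrivial automorphisms, including order-preserving ones (e.g.\ $t\mapsto 2t$ extends to an order-preserving automorphism of $\rac(t)$ ordered with $t$ infinitely large). Moreover nothing in your argument forces $\alpha$ to be order-preserving in the first place; the paper explicitly notes that the automorphism produced by the rigidity theorem need not respect $\le$. So order alone cannot rule out $\alpha$.

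The paper's fix is to extract more from the very clause of \ax{AxSymD} that you reserve only for killing the residual scale: it first proves (Theorem~\ref{thm-async}) that two events are simultaneous for both $m$ and $k$ iff they are separated orthogonally to the plane of motion of $k$, picks such a horizontal line $l$ (this is where $d\ge3$ is used again), and applies the distance clause of \ax{AxSymD} along $l$ together with semilinearity to get
\begin{equation*}
x^2\,\sqlength(\vv)=\alpha(\lambda)\,\alpha(x)^2\,\alpha\big(\sqlength(\vv)\big)
\end{equation*}
for all $x\in\Q$, where $\vv$ is a direction vector of $l$. Setting $x=1$ and dividing gives $\alpha(x)^2=x^2$ for all $x$, whence $\alpha=\Id$ by an elementary lemma valid whenever $1+1\neq0$, and then $\lambda=1$, so $A$ is a genuine Lorentz transformation. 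The same computation thus kills the automorphism and the scale at once. Your proof can be repaired by replacing the appeal to the order structure with this use of \ax{AxSymD}; as written, it only establishes that $\w_{mk}$ is a Poincar\'e transformation composed with an automorphism-induced map, which is strictly weaker.
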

We postpone the proof of Theorem~\ref{thm-poi} to
Section~\ref{proof-poi}, where we will prove a slightly stronger
result, see Theorem~\ref{thm-poi0}. For a similar result over
Euclidean fields, see, e.g., \cite[Thms. 1.4 \& 1.2]{AMNSamples},
\cite[Thm. 11.10]{logst}, \cite[Thm.3.1.4]{SzPhd}.

The so-called {\bf worldline} of body $b$ according to observer $m$ is
defined as follows:
\begin{equation*}
\wl_m(b)\de\{ \vx: \W(m,b,\vx)\}.
\end{equation*}
\begin{cor}\label{cor-line}
Let $d\ge3$. Assume \ax{SpecRel}. The $\wl_m(k)$ is a straight line if
$m$ and $k$ are inertial observers.\footnote{Axioms \ax{AxOField},
  \ax{AxPh}, \ax{AxEv}, and \ax{AxSelf} are enough to prove this
  statement since, by Theorem~\ref{thm-decomp}, axioms \ax{AxOField},
  \ax{AxPh}, and \ax{AxEv} imply that the worldview transformations
  take lines to lines and $\w_m(k)$ is the $\w_{km}$ image of the
  time-axis by axiom \ax{AxSelf}.}
\end{cor}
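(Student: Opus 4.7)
The plan is to derive the corollary as a short consequence of Theorem~\ref{thm-poi} together with \ax{AxSelf} and \ax{AxEv}. The key observation is that the worldline $\wl_m(k)$ should be the image, under the worldview transformation $\w_{km}$, of $k$'s worldline in his own coordinates, and by \ax{AxSelf} the latter is simply the $t$-axis.

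\textbf{Step 1 (identify $\wl_k(k)$).} From \ax{AxSelf} applied to $k$, we immediately read off
$\wl_k(k)=\{\vy\in\Q^d: y_2=\dots=y_d=0\}$, i.e.\ the $t$-axis in $k$'s coordinates.

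\textbf{Step 2 (transfer to $m$'s coordinates via \ax{AxEv}).} For any $\vx\in\wl_m(k)$ we have $\W(m,k,\vx)$, so $k\in\ev_m(\vx)\neq\emptyset$. By \ax{AxEv} there exists $\vy$ with $\ev_m(\vx)=\ev_k(\vy)$, so $k\in\ev_k(\vy)$, which by \ax{AxSelf} forces $\vy$ to lie on the $t$-axis. Conversely, if $\vy$ lies on the $t$-axis and $\w_{km}(\vy,\vx)$, then $k\in\ev_k(\vy)=\ev_m(\vx)$, hence $\vx\in\wl_m(k)$. Thus
\[
\wl_m(k)\;=\;\{\vx: \exists\vy\,[\w_{km}(\vy,\vx)\land y_2=\dots=y_d=0]\},
\]
i.e.\ $\wl_m(k)$ is the image of $\wl_k(k)$ under the relation $\w_{km}$.

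\textbf{Step 3 (apply Theorem~\ref{thm-poi}).} Since $m$ and $k$ are inertial and $d\ge 3$, Theorem~\ref{thm-poi} tells us that $\w_{km}$ is a Poincar\'e transformation, and in particular an affine bijection $\Q^d\to\Q^d$. The image of a straight line under an affine bijection is a straight line, so $\wl_m(k)$ is a straight line, as claimed.

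I do not expect any serious obstacle here: the only substantive input is Theorem~\ref{thm-poi}, and the rest is a bookkeeping identification of $\wl_m(k)$ with $\w_{km}[\wl_k(k)]$. The one point worth stating carefully is nonemptiness of $\ev_m(\vx)$ for $\vx\in\wl_m(k)$ (which is what lets \ax{AxEv} supply the matching $\vy$); this is immediate since $k$ itself witnesses the event. As the footnote indicates, the symmetry axiom \ax{AxSymD} is not actually used in this argument --- only \ax{AxOField}, \ax{AxPh}, \ax{AxEv} and \ax{AxSelf} --- but quoting Theorem~\ref{thm-poi} directly keeps the proof a one-liner.
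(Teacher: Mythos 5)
Your proof is correct and matches the paper's intended argument: identify $\wl_m(k)$ as the $\w_{km}$-image of the time-axis (via \ax{AxSelf} and \ax{AxEv}, with $k$ itself witnessing nonemptiness of the events) and then use that $\w_{km}$ is an affine bijection taking lines to lines. The only cosmetic difference is that you invoke Theorem~\ref{thm-poi}, whereas the paper's footnote routes through Theorem~\ref{thm-decomp} to make explicit that \ax{AxSymD} is not needed --- a point you also acknowledge.
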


Let $m$ and $k$ be inertial observers. The {\bf squared speed} of $k$
according to $m$ is defined as follows:
\begin{multline*}
\sqspeed(m,k,v) \defiff \\ \exists \vx\vy\big[
\vx\neq\vy\land\W(m,k,\vx)\land\W(m,k,\vy)\land \sqspace(\vx,\vy)
=v\cdot\timed(\vx,\vy)^2\big].
\end{multline*}
By Corollary~\ref{cor-line}, \ax{SpecRel} implies that, for each $m,k\in\IOb$,
there is one and only one $v$ such that $\sqspeed(m,k,v)$ holds. From now on
let us denote this unique $v$ by $\sqspeed_m(k)$.
\begin{rem}
Even if $\langle\Q,+,\cdot,\le\rangle$ is the ordered field of
rational numbers, it is possible that the squared speed of an observer
is $2$. For example, $\sqspeed_m(k)=2$ if $d=3$ and inertial observers
$k$ goes trough points $\langle 0,0,0\rangle,\langle
1,1,1\rangle\in\rac^3$ according to inertial observer $m$. However,
some quantity cannot be the squared speed in some fields. For
example, the squared speed cannot be $3$ if
$\langle\Q,+,\cdot,\le\rangle$ is the ordered field of rational
numbers and $d=3$. This is so, because the equation $x^2+y^2=3 z^2$
does not have a nonzero solution over the natural numbers (if $x$, $y$
and $z$ are solutions, then $x$, $y$, and $z$ are divisible by $3^n$ for
all natural numbers $n$; hence $x=y=z=0$). Consequently, it does not
have a nonzero solution over the field of rational numbers.
\end{rem}
Corollary~\ref{cor-slow} states  basically that relatively moving inertial
observers' clocks slow down by the Lorentz factor
$\gamma=(1-v^2/c^2)^{-1/2}$ where $v$ is the relative speed of the
  observers.

\begin{cor}\label{cor-slow}
Let $d\ge 3$. Assume \ax{SpecRel}. Let $m,k\in\IOb$ and let
$\vx,\vy,\vx',\vy'\in\Q^d$ such that $\vx,\vy\in\wl_k(k)$,
$\w_{km}(\vx)=\vx'$ and $\w_{km}(\vy)=\vy'$. Then
\begin{equation}\label{e-slow}
\timed(\vx',\vy')^2 =\frac{\timed(\vx,\vy)^2}{1-\sqspeed_m(k)}.
\end{equation}
\end{cor}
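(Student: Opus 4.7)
The plan is to reduce the corollary to Theorem~\ref{thm-poi} (so that $\w_{km}$ is a Poincar\'e transformation) together with \ax{AxSelf} (which places $k$'s own worldline on the ``time-axis'' of $k$'s coordinate system). The key observation is that once we know $\w_{km}$ preserves the Minkowski quadratic form, the time-dilation formula follows just by evaluating this form on two points of $\wl_k(k)$ versus their images on $\wl_m(k)$.

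Concretely, I would proceed in four short steps. First, from $\vx,\vy\in\wl_k(k)$ and \ax{AxSelf}, I read off $x_2=\ldots=x_d=y_2=\ldots=y_d=0$, so $\sqspace(\vx,\vy)=0$. Second, by Theorem~\ref{thm-poi}, $\w_{km}$ is a Poincar\'e transformation; applying the defining identity to $\vx\mapsto\vx'$ and $\vy\mapsto\vy'$ yields
\begin{equation*}
\timed(\vx,\vy)^2 \;=\; \timed(\vx',\vy')^2-\sqspace(\vx',\vy').
\end{equation*}
Third, since $\W(k,k,\vx)$ and $\ev_k(\vx)=\ev_m(\vx')$ we get $\W(m,k,\vx')$, i.e.\ $\vx'\in\wl_m(k)$; similarly $\vy'\in\wl_m(k)$. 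By Corollary~\ref{cor-line}, $\wl_m(k)$ is a straight line, so its squared slope is well defined and equal to $\sqspeed_m(k)$; that is,
\begin{equation*}
\sqspace(\vx',\vy') \;=\; \sqspeed_m(k)\cdot\timed(\vx',\vy')^2.
\end{equation*}
Finally, substituting this into the Poincar\'e identity gives $\timed(\vx,\vy)^2=\bigl(1-\sqspeed_m(k)\bigr)\cdot\timed(\vx',\vy')^2$, from which \eqref{e-slow} follows by division.

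The only nontrivial point I expect to need care with is ensuring that division by $1-\sqspeed_m(k)$ is legitimate, i.e.\ that an inertial observer cannot move at the speed of light relative to another. This however is essentially automatic from Theorem~\ref{thm-poi}: the Poincar\'e transformation $\w_{km}$ is an affine bijection that sends the timelike line $\wl_k(k)$ (on which $\timed^2-\sqspace>0$ for distinct points) to $\wl_m(k)$, so the latter is also timelike and hence $\sqspeed_m(k)<1$. The degenerate case $\vx=\vy$ forces $\vx'=\vy'$ and makes both sides of \eqref{e-slow} zero, so the statement holds trivially there. Everything else is straightforward bookkeeping on the definitions of $\timed$, $\sqspace$, $\wl_m$ and $\sqspeed_m$.
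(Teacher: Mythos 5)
Your proposal is correct and follows essentially the same route as the paper's own proof: \ax{AxSelf} gives $\sqspace(\vx,\vy)=0$, Theorem~\ref{thm-poi} gives the Poincar\'e identity, and the definition of $\sqspeed_m(k)$ (via Corollary~\ref{cor-line}) converts $\sqspace(\vx',\vy')$ into $\sqspeed_m(k)\cdot\timed(\vx',\vy')^2$, after which one divides by $1-\sqspeed_m(k)$. Your explicit handling of the degenerate case $\vx=\vy$ and of why $\sqspeed_m(k)\neq 1$ matches the remarks the paper makes at the start of its proof.
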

\begin{proof}
Formula \eqref{e-slow} is always defined since $\sqspeed_m(k)$ cannot
be $1$ by Theorem~\ref{thm-poi}.  The case $\vx=\vy$ is trivial since,
in this case, both $\timed(\vx,\vy)$ and $\timed(\vx',\vy')$ are
$0$. So let us assume that $\vx\neq\vy$.  Since $\vx,\vy\in\wl_k(k)$,
we have that $\sqspace(\vx,\vy)=0$ by \ax{AxSelf}. By Theorem
\ref{thm-poi}, $\w_{km}$ is a Poincar\'e transformation. Therefore,
\begin{equation*}
\timed(\vx,\vy)^2=\timed(\vx',\vy')^2-\sqspace(\vx',\vy').
\end{equation*}
Consequently,
\begin{equation*}
\timed(\vx,\vy)^2=\timed(\vx',\vy')^2\left(1-\frac{\sqspace(\vx',\vy')}{\timed(\vx',\vy')^2}\right).
\end{equation*}
Hence, by the definition of $\sqspeed_m(k)$, we get
\begin{equation*}
\timed(\vx,\vy)^2=\timed(\vx',\vy')^2\left(1-\sqspeed_m(k)\right).
\end{equation*}
since  $\w_{km}(\vx)\neq\w_{km}(\vy)$ and $\w_{km}(\vx),\w_{km}(\vy)\in\wl_m(k)$.
\end{proof}

Theorem~\ref{thm-poi} and its consequences show that \ax{SpecRel}
captures special relativity well over every ordered field. It is a
natural question to ask what happens with these theorems if we assume
less about the quantities. This is one side of the question ``what are
the numbers?'', which is a whole research direction:
\begin{que}[Research direction]\label{que-rd}
What remains from the theorems of  \ax{SpecRel}, if we replace ordered fields with other algebraic structures, e.g., with ordered rings?
\end{que}

Here we concentrate on the other side of our question; namely,
``how can some physical assumptions implicitly enrich the structure of
quantities?''.  To investigate this question, let us now introduce
notation $\num_{n}(\ax{Th})$ for the class of the quantity parts of the
models of theory \ax{Th} if $d=n$:
\label{num}
\begin{multline*}
\num_{n}(\ax{Th})=\{\text{The quantity parts }\\ \langle
\Q,+,\cdot,\leq\rangle \text{ of the models of \ax{Th} if }
d=n \}.
\end{multline*}
The same way we use the notation $\mathfrak{Q}\in\num_{n}(\ax{Th})$
for ordered field $\mathfrak{Q}$ as the model theoretic notation
$\mathfrak{Q}\in Mod(\ax{AxField})$.

\begin{description}\label{axthexp}
\item[\underline{\ax{AxThExp}}] Inertial observers can move along any
  straight line with any speed less than the speed of light:
\begin{multline*}
\exists h \enskip \IOb(h)\land
\big(\IOb(m)\land \sqspace(\vx,\vy)<\ls_m^2\cdot\timed(\vx,\vy)^2
  \\ \then \exists k \big[\IOb(k)\land \W(m,k,\vx)\land\W(m,k,\vy)\big]\big).
\end{multline*}
\end{description}

Theorem~\ref{thm-eof} below shows that axiom
\ax{AxThExp} implies that positive numbers have square roots if
\ax{SpecRel} is assumed.
\begin{thm}\label{thm-eof}If $n\ge3$, 
\begin{equation*} 
\num_{n}(\ax{SpecRel} + \ax{AxThExp})=\{\text{\,Euclidean fields\,}\}.
\end{equation*}
\end{thm}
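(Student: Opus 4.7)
The statement asserts the equality of two classes of ordered fields, so the proof splits into two inclusions. The non-trivial direction is showing that every quantity field of a model of $\ax{SpecRel}+\ax{AxThExp}$ is Euclidean; the reverse is a standard model construction.

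\emph{Forward direction.} Fix a model of $\ax{SpecRel}+\ax{AxThExp}$ with $d=n$ and let $q\in\Q$ be positive; the task is to exhibit $t\in\Q$ with $t^2=q$. The idea is to realise $q$ as a squared proper-time interval via time dilation. By $\ax{AxThExp}$ there exists an inertial observer $m$, and by $\ax{AxSymD}$ we have $\ls_m=1$. Set $\vx\de\langle(1+q)/2,\,(q-1)/2,\,0,\ldots,0\rangle\in\Q^d$. Applying $(u+v)(u-v)=u^2-v^2$ with $u=(1+q)/2$ and $v=(q-1)/2$ (so $u+v=q$ and $u-v=1$) gives $\timed(\vo,\vx)^2-\sqspace(\vo,\vx)=q>0$, so $\vo$ and $\vx$ are timelike separated for $m$. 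Then $\ax{AxThExp}$ supplies an inertial observer $k$ with $\W(m,k,\vo)$ and $\W(m,k,\vx)$, whence $\sqspeed_m(k)=((q-1)/2)^2/((1+q)/2)^2$. By $\ax{AxEv}$ together with $\ax{AxSelf}$, the events $\ev_m(\vo)$ and $\ev_m(\vx)$ correspond to two points $\vx_0,\vy_0$ on $k$'s own time-axis, and Corollary~\ref{cor-slow} applied to these points yields $\timed(\vx_0,\vy_0)^2=\timed(\vo,\vx)^2\bigl(1-\sqspeed_m(k)\bigr)=((1+q)/2)^2-((q-1)/2)^2=q$. Since $\timed(\vx_0,\vy_0)\in\Q$, this shows $\sqrt{q}\in\Q$, so $\Q$ is Euclidean.

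\emph{Reverse direction.} Given any Euclidean field $\mathfrak{F}$, I would build the standard Minkowski model over $\mathfrak{F}$ of dimension $n$: take $\Q=\mathfrak{F}$, let $\IOb$ be obtained from a base observer by all Poincar\'e transformations, let $\Ph$ consist of all affine null lines, and define $\W$ so that the worldview transformations between inertial observers are precisely the Poincar\'e transformations. The axioms $\ax{AxOField}$, $\ax{AxPh}$, $\ax{AxEv}$, $\ax{AxSelf}$, $\ax{AxSymD}$ are then routine to check. For $\ax{AxThExp}$, given any subluminal timelike direction $\vx-\vy$ in $\mathfrak{F}^d$, the Euclidean hypothesis supplies $\sqrt{\timed(\vx,\vy)^2-\sqspace(\vx,\vy)}\in\mathfrak{F}$, which is precisely what is needed to normalise that direction to a unit timelike vector and thereby exhibit a Lorentz boost realising an inertial observer along it.

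The only real obstacle is spotting the parametrisation of $\vx$ in the forward direction; the physical content, however, is natural---one chooses the slope of $k$'s worldline so that the time-dilation factor $1-\sqspeed_m(k)$ converts a convenient squared time interval into $q$. Once this is seen, the argument collapses to a single invocation of Corollary~\ref{cor-slow}, and the reverse inclusion is classical. The hypothesis $n\ge 3$ enters only because Corollary~\ref{cor-slow}, and hence Theorem~\ref{thm-poi} underlying it, requires $d\ge 3$.
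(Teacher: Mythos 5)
Your proposal is correct and follows essentially the same route as the paper: the paper first shows $1-v^2$ is a square for $0\le v<1$ via \ax{AxThExp} and Corollary~\ref{cor-slow}, then applies the identity $q=\left(\frac{q+1}{2}\right)^2\left(1-\left(\frac{q-1}{q+1}\right)^2\right)$, which is exactly what your choice of $\vx=\langle(1+q)/2,(q-1)/2,0,\ldots,0\rangle$ encodes in a single step; the reverse inclusion is handled in the paper by citing the standard Minkowski-model construction over a Euclidean field, which is what you sketch.
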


\begin{proof}
By Theorem 3.8.7 of \cite{pezsgo}, we have that
\ax{SpecRel} + \ax{AxThExp} has a model over every Euclidean
field. Consequently,
\begin{equation*}
\num_{n}(\ax{SpecRel} + \ax{AxThExp})\supseteq\{\text{\,Euclidean
  fields\,}\}.
\end{equation*}

To show the converse inclusion, we have to prove that every positive
quantity has a square root in every model of \ax{SpecRel} +
\ax{AxThExp}.  To do so, let $x\in Q$ be a positive quantity. We have
to show that $x$ has a square root in $\Q$.

First we will prove that $1-v^2$ has a square root if $v\in\Q$ and
$0\le v< 1$. To do so, let $v\in\Q$ for which $0\le v<1$. Let
$\vy=\langle 1, v,0,\ldots,0\rangle$. By \ax{AxTheExp} there are
inertial observers $m$ and $k$ such that $\vo,\vy\in\wl_m(k)$.  By
Corollary~\ref{cor-line}, $\wl_m(k)$ is a line. Thus
$\sqspeed_m(k)=v^2$.  Therefore, there is a $z\in\Q$ such that
$1-v^2=z^2$ (i.e., $1-v^2$ has a square root in $\Q$) by \ax{AxField}
and Corollary~\ref{cor-slow}.

From \ax{AxField}, it is easy to show that 
\begin{equation*}
x=\left({\frac{x+1}{2}}\right)^2\cdot\left(1-\left({\frac{x-1}{x+1}}\right)^2\right)
\end{equation*}
for all $x\in\Q$. There is
a $z\in\Q$ such that
\begin{equation*}
1-\left(\frac{x-1}{x+1}\right)^2=z^2
\end{equation*}
since $0\le 1-\left({\frac{x-1}{x+1}}\right)^2<1$. So there is a
quantity, namely $\frac{x+1}{2}\cdot z$, which is the square root of
$x$; and that is what we wanted to prove.\end{proof}

\begin{rem}\label{rem-of}
Axiom \ax{AxThExp} cannot be omitted from Theorem~\ref{thm-eof} since
\ax{SpecRel} has a model over every ordered field, i.e.,
\begin{equation*}
\num_n(\ax{SpecRel})=\{\,\text{ordered fields} \,\}
\end{equation*}
for all $n\ge2$.  Moreover, it also has non trivial models in which
there are several observers moving relative to each other. We
conjecture that there is a model of \ax{SpecRel} such that the
possible speeds of observers are dense in interval $[0,1]$, see
Corollary~\ref{cor-arch} and Conjecture~\ref{conj-of} at pages
\pageref{cor-arch} and \pageref{conj-of}.
\end{rem}

In the proof of Theorem~\ref{thm-eof}, axiom \ax{AxSymD} is strongly
used since \ax{SpecRel} without \ax{AxSymD} does not imply the exact
ratio of the slowing down of moving clocks; \ax{SpecRel} without
\ax{AxSymD} only implies that at least one of two relatively moving
inertial observers' clocks run slow according to the other, see
\cite[\S 2.5]{pezsgo}.  So it is natural to investigate what remains of
Theorem~\ref{thm-eof} if we leave the symmetry axiom out. 
It is surprising but, in the case of $d=3$, Theorem~\ref{thm-eof}
remains valid even if we assume only \ax{\ls_m=1} from \ax{AxSymD}, see
Andr\'eka--Madar\'asz--N\'emeti \cite[Thm 3.6.17]{pezsgo}.  Now we
will show that even the assumption \ax{\ls_m=1} is not necessary.  To do so,
let us introduce the next axiom system
\begin{equation*}
\ax{SpecRel_0}=\ax{SpecRel}-\ax{AxSymD}.
\end{equation*}

\begin{thm}\label{thm-eof3}
\begin{equation*}
\num_{3}(\ax{SpecRel_0} + \ax{AxThExp})=\{\text{\,Euclidean
  fields\,}\}
\end{equation*}
\end{thm}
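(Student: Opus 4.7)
My plan is to reduce the assertion to the earlier partial result of Andr\'eka--Madar\'asz--N\'emeti (\cite[Thm.~3.6.17]{pezsgo}), which establishes the same conclusion under the additional assumption $\ax{\ls_m=1}$ (the second conjunct of $\ax{AxSymD}$). Given any model $\M\models\ax{SpecRel_0}+\ax{AxThExp}$ with $d=3$, I will construct a new model $\M'$ over the \emph{same} quantity structure in which every inertial observer sees the speed of light as $1$; the prior theorem then forces the common quantity part to be a Euclidean field.

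Concretely, $\M'$ retains the sorts $\B,\Q$, the predicates $\Ob,\IOb,\Ph$, and the field operations $+,\cdot,\le$ unchanged, but modifies the worldview relation by rescaling each inertial observer's time axis by $1/\ls_m$:
\begin{equation*}
\W'(m,b,x_1,x_2,x_3)\defiff\W(m,b,x_1/\ls_m,x_2,x_3)\quad\text{when }\IOb(m),
\end{equation*}
and $\W'=\W$ otherwise; this is well-defined because $\ax{AxPh}$ guarantees $\ls_m>0$. I then verify $\M'\models\ax{SpecRel_0}+\ax{AxThExp}+\ax{\ls_m=1}$. Axiom $\ax{AxOField}$ is unaffected, $\ax{AxSelf}$ transfers because the rescaling fixes the spatial coordinates, and for $\ax{AxPh}$ the key identity is that writing $\vx'=(x_1/\ls_m,x_2,x_3)$ one has $\sqspace(\vx,\vy)=\sqspace(\vx',\vy')$ and $\timed(\vx,\vy)=\ls_m\cdot\timed(\vx',\vy')$, so the light-cone equation $\sqspace(\vx',\vy')=\ls_m^2\cdot\timed(\vx',\vy')^2$ in $\M$ is equivalent to $\sqspace(\vx,\vy)=\timed(\vx,\vy)^2$ in $\M'$, confirming $\ls'_m=1$. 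The very same identity transfers the sub-light-cone condition in $\ax{AxThExp}$. For $\ax{AxEv}$, given $\vx$ in $\M'$ I apply $\ax{AxEv}$ in $\M$ to $(x_1/\ls_m,x_2,x_3)$ to obtain a witness $\vy^{*}$ in $k$'s $\M$-frame, and then the required $\M'$-witness is $\vy=(\ls_k\cdot y_1^{*}, y_2^{*}, y_3^{*})$.

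Applying \cite[Thm.~3.6.17]{pezsgo} to $\M'$ yields that its quantity part is Euclidean; since that quantity part coincides with the quantity part of $\M$ by construction, we obtain $\num_{3}(\ax{SpecRel_0}+\ax{AxThExp})\subseteq\{\text{Euclidean fields}\}$. The reverse inclusion is immediate from Theorem~\ref{thm-eof}: every Euclidean field occurs as the quantity part of some model of the stronger theory $\ax{SpecRel}+\ax{AxThExp}$, and such a model is a fortiori a model of $\ax{SpecRel_0}+\ax{AxThExp}$. The main obstacle in carrying out this plan is the non-uniformity of the rescaling: each inertial observer has his own factor $1/\ls_m$, so verifying that the inter-observer axiom $\ax{AxEv}$ survives requires careful translation between the per-observer coordinate systems, the routine identities above being exactly what makes this work.
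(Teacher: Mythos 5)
Your proposal is correct and follows essentially the same route as the paper: reduce to \cite[Thm.~3.6.17]{pezsgo} by rescaling each inertial observer's time coordinate by his own light speed $\ls_m$ so that $\ax{\ls_m=1}$ holds, note that this leaves the quantity part untouched, and get the reverse inclusion from Theorem~\ref{thm-eof}. The only difference is that you spell out the per-axiom verification (in particular the $\ax{AxEv}$ witness translation) in more detail than the paper, which simply asserts that the rescaling preserves the axioms.
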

\begin{proof}
By Theorem \ref{thm-eof}, $\ax{SpecRel_0}$ + $\ax{AxThExp}$
has a model over every Euclidean field since even
$\ax{SpecRel}$ + $\ax{AxThExp}$ has one. So 
\begin{equation*}
\num_{3}(\ax{SpecRel_0} +
\ax{AxThExp})\supseteq \{\text{{Euclidean  fields\,}}\}.
\end{equation*}

To prove the converse inclusion, we have to prove that the quantity
structure of every model of \ax{SpecRel_0} + \ax{AxThExp} is a
Euclidean field if $d=3$.  By Theorem 3.6.17 of \cite{pezsgo}, the
quantity structures of the models of \ax{SpecRel_0} + \ax{AxThExp} +
\ax{\ls_m=1} are Euclidean fields if $d=3$. Therefore, it is enough to show that a model
of \ax{SpecRel_0} + \ax{AxThExp} + \ax{\ls_m=1} can be constructed from
every model of \ax{SpecRel_0} + \ax{AxThExp} without changing its
quantity structure.

Let $\M$ be an arbitrary $3$ dimensional model of \ax{SpecRel_0} +
\ax{AxThExp}. Let $\M^+$ be the model which is constructed from $\M$ by
rescaling the coordinatization of each inertial observer $m$ of $\M$
by the following map $\vx\mapsto\langle \ls_m x_1,x_2,\ldots
x_d\rangle$, i.e., rescaling the time of $m$ by the factor $\ls_m$. It
is clear that the speed of light becomes $1$ according to $m$ after the
rescaling. So \ax{\ls_m=1} holds in $\M^+$. It is also easy to see that
this rescaling does not change the validity of \ax{AxThExp} and the
other axioms of \ax{SpecRel_0}. Therefore, $\M^+$ is a model of axiom
system \ax{SpecRel_0} + \ax{AxThExp} + \ax{\ls_m=1}. By the
construction, the quantity parts of $\M^+$ and $\M$ are the
same. Consequently, the quantity part of $\M$ is a Euclidean
field. This completes our proof since $\M$ was an arbitrary model of
axiom system \ax{SpecRel_0} + \ax{AxThExp}.
\end{proof}

Until recently, it was unsolved whether Theorem \ref{thm-eof3} is
valid or not in any higher dimension (see \cite[Questions 3.6.17 and
  3.6.19]{pezsgo}) when Hajnal Andr\'eka has provided counterexamples
in the even dimensions, i.e., the following is true:
\begin{thm}\label{thm-eof4}
\begin{equation*}
\num_{2k}(\ax{SpecRel_0} +
  \ax{AxThExp} + \ls_m=1)\supsetneqq\{\text{{Euclidean  fields\,}}\}
\end{equation*}
\end{thm}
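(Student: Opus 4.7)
The inclusion $\supseteq$ is immediate: the Euclidean-field models produced by Theorem~\ref{thm-eof} validate the full $\ax{AxSymD}$, hence in particular $\ls_m{=}1$, so they already sit in $\num_{2k}(\ax{SpecRel_0}+\ax{AxThExp}+\ls_m{=}1)$. The entire content of Theorem~\ref{thm-eof4} is therefore to exhibit, for each $d=2k$, a model whose quantity part is a non-Euclidean ordered field $\mathfrak F$.

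My plan is to build such a model directly. The key observation is that, once the distance part of $\ax{AxSymD}$ is dropped, the worldview transformation $\w_{mk}$ between two inertial observers need only preserve light cones---it no longer has to be a Poincar\'e map. A null-cone-preserving affine bijection of $\mathfrak F^d$ factors as a Lorentz map composed with a dilation, and we are free to let that dilation absorb precisely the Lorentz factor $1/\sqrt{1-v^2}$ that was responsible for the square roots in Theorem~\ref{thm-eof}. Concretely, I would fix a non-Euclidean ordered field $\mathfrak F$ (ideally $\rac$), put $\Q=\mathfrak F$, introduce one inertial observer $m_{\vv}$ for each $\vv\in\mathfrak F^{d-1}$ with $v_1^2+\dots+v_{d-1}^2<1$ so that $\ax{AxThExp}$ holds by fiat, pick for each $m_{\vv}$ an $\mathfrak F$-affine null-cone-preserving bijection $T_{\vv}\colon\mathfrak F^d\to\mathfrak F^d$ carrying the $\taxis$ onto the line of direction $(1,\vv)$, and set $\w_{m_{\vv}\,m_{\vu}}\de T_{\vu}\circ T_{\vv}^{-1}$. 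Declaring $\Ph$ to contain one photon per null line of $\mathfrak F^d$ then makes $\ax{AxPh}$, $\ax{AxSelf}$, $\ax{AxEv}$ and $\ls_m{=}1$ hold by construction.

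The main obstacle, and where the parity of $d$ enters decisively, is exhibiting the family $\{T_{\vv}\}$ with entries in $\mathfrak F$ simultaneously for every sub-luminal $\vv$. In odd spacetime dimension (e.g.\ $d=3$, \cite[Thm.~3.6.17]{pezsgo}) the rigidity of the spatial orthogonal group inside the conformal-Lorentz cocycle forces $\sqrt{1-v^2}\in\mathfrak F$ for every sub-luminal $v$ and collapses the construction back onto Euclidean fields. In even $d=2k$ the odd spatial dimension $2k-1$ admits a distinguished splitting---a block pairing of the spatial coordinates, or equivalently an ``extra'' coordinate able to absorb the conformal factor---that should supply enough freedom to write each $T_{\vv}$ with purely $\mathfrak F$-rational entries, without ever needing $\sqrt{1-v^2}$. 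Once such a family $\{T_{\vv}\}$ is in hand, checking the axioms is routine, and the resulting model witnesses the strict inclusion.
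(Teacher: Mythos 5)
The paper does not actually prove Theorem~\ref{thm-eof4} --- it defers entirely to \cite{SqrtinSR} --- so your proposal has to stand on its own, and it does not yet. Your $\supseteq$ direction is fine, and your overall strategy (drop the distance part of \ax{AxSymD} so that worldview transformations need only preserve null cones, then exploit a $2\times 2$ block structure available in even dimension) is the right one. But the pivotal sentence ``a null-cone-preserving affine bijection of $\mathfrak F^d$ factors as a Lorentz map composed with a dilation'' is false over non-Euclidean fields: the Remark following Proposition~\ref{prop-aldecomp} exhibits exactly such a non-factoring map over $\rac^4$, and if the factorization did hold your plan would collapse, because a dilation ``absorbing $1/\sqrt{1-v^2}$'' must have its ratio $\sqrt{1-v^2}$ \emph{in} the field --- the very element you are trying to avoid. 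What you actually need are almost Lorentz transformations with \emph{non-square} conformal factor $\lambda$: a boost-with-scale block $\left(\begin{smallmatrix}1&v\\ v&1\end{smallmatrix}\right)$ on the plane of motion, scaling $\mu^2_1$ by $1-v^2$, together with rotation-dilation blocks $\left(\begin{smallmatrix}a&-b\\ b&a\end{smallmatrix}\right)$, $a^2+b^2=1-v^2$, on the remaining $2k-2$ spatial coordinates; evenness of $d$ is precisely what removes the discriminant obstruction (similar forms have determinants equal modulo squares, so $\lambda^{d}$ must be a square, which forces $\lambda$ itself to be a square exactly when $d$ is odd).

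Second, and decisively, such a map carrying the time axis to the line of direction $\langle 1,\vv\rangle$ exists if and only if $1-(v_2^2+\cdots+v_d^2)$ is a sum of two squares in $\mathfrak F$, and your preferred field $\rac$ provably fails this: for $d=4$ and $\vv=\langle 2/3,1/3,1/3\rangle$ one gets $1-(v_2^2+v_3^2+v_4^2)=1/3$, and since $\langle 1,1\rangle\not\cong\langle 3,3\rangle$ over $\rac$ ($3$ is not a sum of two rational squares), Witt cancellation shows that \emph{no} linear bijection scaling $\mu^2_1$ carries the time axis to that line. As $\rac$ has no nontrivial automorphisms, Theorem~\ref{thm-decomp} then forces \ax{AxThExp} to fail in every model over $\rac$ with $d=4$; so $\rac$ cannot witness the theorem. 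To close the gap you must \emph{construct} a non-Euclidean ordered field in which every positive element is a sum of two squares, e.g.\ a subfield $\mathfrak F\subseteq\R$ maximal with respect to not containing $\sqrt2$: maximality forces, for each positive $a\in\mathfrak F$, that $a$ or $2a$ is a square in $\mathfrak F$, hence $a$ is a sum of two squares, while $2$ itself is not a square. Only then does your family $\{T_{\vv}\}$ exist for all subluminal $\vv$ (in all directions, not merely along a coordinate axis), and the remaining verification of the axioms becomes the routine part you describe. As written, the phrase ``should supply enough freedom'' is covering for the entire mathematical content of the theorem.
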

For the proof of Theorem~\ref{thm-eof4}, see \cite{SqrtinSR}.

The existence of models of \ax{SpecRel_0} + \ax{AxThExp} over non
Euclidean fields is a surprising result since it is natural to
conjecture that a 3 dimensional model can be constructed from any
$d\ge 4$ dimensional model of \ax{SpecRel_0} + \ax{AxThExp} without
changing its quantity structure (by ``cutting out'' a 3 dimensional
part). Clearly, such a construction would imply Theorem~\ref{thm-eof3}
in any dimension higher than $3$, too.  It is interesting to note that
this kind of construction works if the quantity structure is a
Euclidean field.

Theorem \ref{thm-eof4} only shows that there are models of
\ax{SpecRel_0} + \ax{AxThExp} over some non-Euclidean
fields. However, the question ``what are the fields over which
\ax{SpecRel_0} + \ax{AxThExp} has a model?'' is still unsolved even in 4
dimension:
\begin{que}\label{que-thx}
Exactly which ordered fields are the elements of the class
$\num_{n}(\ax{SpecRel_0} + \ax{AxThExp})$ if $n\ge4$.
\end{que}

Without adding extra axioms to \ax{SpecRel} + \ax{AxThExp}, it does
not imply that the structure of numbers has to be a Euclidean field if
$d=2$. One of the reasons for this fact is that, if $d=2$, the axioms
of \ax{SpecRel} do not imply that the world lines of inertial
observers are straight lines. So we have to add it as an extra axiom
stating this (\ax{AxLine}). For a precise formulation of \ax{AxLine},
see, e.g., \cite[p.620]{logst}. Another reason is that, if $d=2$,
there are no two events which are simultaneous according to two
relatively moving observers.  Therefore, \ax{AxSymD} states nothing if
$d=2$. So we have to change this axiom. For example, we may replace
\ax{AxSymD} with the statement ``moving observers see each others
clock the same way and \ax{\ls_m=1}'' (\ax{AxSymT}). For a precise
formulation of the first part of \ax{AxSymT}, see, e.g.,
\cite[p.8]{AMNSamples}, \cite[p.20]{SzPhd}.  Actually, \ax{AxSymT} is
equivalent to \ax{AxSymD} if \ax{SpecRel_0} + \ax{\ls_m=1} is assumed
and $d\ge3$, see, e.g., \cite[Thm.3.1.4]{SzPhd}.
\begin{que}\label{que-2}
Does \ax{SpecRel} + \ax{AxThExp} + \ax{AxLine} + \ax{AxSymT} imply
that the quantities form a Euclidean field if $d=2$? If not, what
further natural axioms we have to assume to prove that the quantities
form a Euclidean field?
\end{que}

Since our measurements have only finite accuracy, it is natural to
assume \ax{AxThExp} only approximately. To introduce an approximated
version of \ax{AxThExp}, we need some definitions.  The {\bf space
  component} of coordinate point $\vx\in\Q^d$ is defined as $ \vx_s\de
\langle x_2,\ldots,x_d\rangle$.  The {\bf squared Euclidean distance}
of $\vx,\vy\in\Q^d$ is defined as
\begin{equation*}
\sqdist(\vx,\vy)\de (x_1-y_1)^2+\ldots+(x_d-y_d)^2
\end{equation*}
and the {\bf difference of} $\vx,\vy\in\Q^d$ is defined as
\begin{equation*}
  \vx-\vy\de\langle x_1-y_1,\ldots,x_d-y_d\rangle.
\end{equation*}

Let the {\bf squared Euclidean length} of $\vx\in\Q^d$ be defined as
\begin{equation*}
\sqlength(\vx)\de{x_1^2+\ldots+x_d^2}.
\end{equation*}

\begin{description}
\item[\underline{\ax{AxThExp^-}}] Inertial observers can move roughly
  with any speed less than the speed of light roughly in any
  direction:
\begin{multline*}  
\exists h \enskip \IOb(h) \land \Big(\IOb(m)\land \varepsilon >0 \land
\sqlength(\vv_s)<\ls_m^2 \\\land v_1=1 \rightarrow \exists \vw \Big[
\sqdist(\vw,\vv)<\varepsilon \land \forall \vx\vy\, 
\exists \lambda\big( \vx-\vy=\lambda \vw \\ \rightarrow \exists
k\big[ \IOb(m)\land \W(m,k,\vy)\land\W(m,k,\vy)\big]\big)\Big]\Big).
\end{multline*}
\end{description}

A model of \ax{SpecRel} + \ax{AxThExp^-} can be constructed over the
field of rational numbers, i.e., the following is true:
\begin{thm}\label{thm-rac}
\begin{equation*}
\rac\in\num_{n}(\ax{SpecRel} + \ax{AxThExp^-})
\end{equation*}
\end{thm}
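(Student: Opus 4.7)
The plan is to construct an explicit model of $\ax{SpecRel}+\ax{AxThExp^-}$ whose quantity part is $\rac$, taking observers to be parametrized by the rational Poincar\'e group and exploiting density of rational points on the Euclidean unit sphere to handle the approximate nature of $\ax{AxThExp^-}$.

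First I would let $G$ denote the subgroup of the $d$-dimensional Poincar\'e group consisting of those affine maps of $\rac^d$ whose linear part preserves the Minkowski form $t^2-\sum_{i\ge 2}x_i^2$ and whose matrix entries and translation vector all lie in $\rac$. The model $\M$ has quantity sort $\rac$, body sort $\B=G\sqcup L$ where $L$ is the set of Minkowski-lightlike affine lines in $\rac^d$, with $\IOb=\Ob=G$ and $\Ph=L$, and worldview relation defined by $\W(m,k,\vx)\Leftrightarrow m(\vx)\in k(\taxis)$ for $m,k\in G$ and $\W(m,p,\vx)\Leftrightarrow m(\vx)\in p$ for $m\in G$, $p\in L$. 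A direct calculation gives $\w_{mk}=m^{-1}\circ k\in G$, so all worldview transformations are rational Poincar\'e. Axioms $\ax{AxOField}$ and $\ax{AxSelf}$ hold trivially; $\ax{AxEv}$ follows by taking $\vy=k^{-1}m(\vx)$; and $\ax{AxPh}$ together with $\ax{AxSymD}$ reduce to the fact that $G$ preserves the Minkowski form and that any two lightlike-separated rational points determine a rational lightlike line in $L$.

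The main work is verifying $\ax{AxThExp^-}$. Given $m\in G$, a target $\vv=(1,\vv_s)\in\rac^d$ with $|\vv_s|<1$, and $\varepsilon>0$, I need to produce $\vw\in\rac^d$ within Euclidean distance $\varepsilon$ of $\vv$ such that every rational affine line in direction $\vw$ is the $m$-worldline of some observer in $G$. Unraveling $\W$, this amounts to producing, for the rational target velocity $\vu=\vw_s/w_1$, a Lorentz boost in $G$ with velocity $\vu$. A short computation with the standard boost formula shows this is possible precisely when $\vu\in\rac^{d-1}$ and $1-|\vu|^2$ is a square in $\rac$; the only apparently problematic matrix entry $(\gamma-1)/|\vu|^2$ simplifies to $\gamma^2/(\gamma+1)$, which is automatically rational once $\gamma=1/\sqrt{1-|\vu|^2}$ is.

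The main obstacle is therefore the density claim that such \emph{good} $\vu$ are dense in the open unit ball of $\rac^{d-1}$. This is a classical fact equivalent to density of rational points on the Euclidean sphere $S^{d-1}\subset\R^d$: if $(\vu,s)\in S^{d-1}\cap\rac^d$ then $1-|\vu|^2=s^2$ is a rational square, and every good $\vu$ arises this way. Density of $S^{d-1}\cap\rac^d$ in $S^{d-1}$ is proved by stereographic projection from a fixed rational point of the sphere, which gives a birational correspondence $\rac^{d-1}\leftrightarrow S^{d-1}\cap\rac^d$. Granted this density, I approximate $\vv_s$ by a good $\vu$, set $\vw=(1,\vu)$, and for each pair $\vx,\vy$ with $\vx-\vy$ a rational multiple of $\vw$ take $k$ to be the rational boost with velocity $\vu$ composed with the rational translation carrying $\vo$ to $m(\vx)$. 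This produces the required inertial observer in $G$ and finishes the verification of $\ax{AxThExp^-}$.
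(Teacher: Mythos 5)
Your construction is correct, and it is essentially self-contained, which is more than the paper offers: the paper does not prove Theorem~\ref{thm-rac} at all but defers it to the reference \cite{MSzRac}. Your model (bodies $=$ rational Poincar\'e maps together with rational lightlike lines, $\W(m,b,\vx)$ read off from $m(\vx)$) is the standard way to realize \ax{SpecRel} over an arbitrary ordered field, and it makes all worldview transformations elements of the rational Poincar\'e group, so \ax{AxPh}, \ax{AxEv}, \ax{AxSelf} and \ax{AxSymD} come out of invariance of the Minkowski form; one small point worth stating explicitly is that $\ev_m(\vx)$ determines the point $m(\vx)$ (two distinct lightlike lines pass through every point and meet only there), since that is what makes $\w_{mk}$ a well-defined bijection equal to $k^{-1}\circ m$ (note your $m^{-1}\circ k$ has the indices reversed relative to the definition $\ev_m(\vx)=\ev_k(\vy)$, a harmless slip). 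You also correctly isolate the two genuinely substantive ingredients: (i) a rational boost with velocity $\vu$ exists iff $1-|\vu|^2$ is a square in $\rac$, with the only suspect matrix entry $(\gamma-1)/|\vu|^2$ collapsing to $\gamma^2/(\gamma+1)$, and (ii) such ``Pythagorean'' velocities are dense in the unit ball because rational points are dense on $S^{d-1}$, by stereographic projection from a rational point. This is exactly the dividing line the surrounding text emphasizes: the exact axiom \ax{AxThExp} demands \emph{every} subluminal velocity and hence forces square roots (Theorem~\ref{thm-eof}), while the approximate \ax{AxThExp^-} only needs a dense set of realizable velocities, which $\rac$ supplies. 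Your argument would serve as a complete proof here.
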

For the proof of Theorem~\ref{thm-rac}, see \cite{MSzRac}.

An ordered field is called {\bf Archimedean ordered field} iff for all
$a$, there is a natural number $n$ such that
\begin{equation*}
a<\underbrace{1+\ldots+1}_n
\end{equation*}
holds.  By Pickert--Hion Theorem, every Archimedean ordered field is
isomorphic to subfield of the field of real numbers, see, e.g.,
\cite[\S VIII]{fuchs}, \cite[C.44.2]{CHA}. Consequently, the field of
rational numbers is dense in any Archimedean ordered field since it is
dense in the field of real numbers. Therefore, the following is a
corollary of Theorem \ref{thm-rac}.
\begin{cor}\label{cor-arch}
\begin{equation*}
\{\text{Archimedean ordered
  fields\,}\}\!\subsetneqq\!\num_{n}(\ax{SpecRel}\! +\!
\ax{AxThExp^-})\end{equation*}
\end{cor}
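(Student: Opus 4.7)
The plan is to establish the two containments in $\subsetneqq$ separately, using Theorem~\ref{thm-rac} for the easy direction and compactness for strictness.

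For $\{\text{Archimedean ordered fields}\}\subseteq\num_{n}(\ax{SpecRel}+\ax{AxThExp^-})$, let $\mathfrak{F}$ be an arbitrary Archimedean ordered field. By Pickert--Hion (as recalled just above the statement), $\mathfrak{F}$ embeds order-isomorphically into $\R$, and via that embedding $\rac$ sits densely inside $\mathfrak{F}$. I would lift the rational model $\M_\rac$ furnished by Theorem~\ref{thm-rac} to a model $\M_\mathfrak{F}$ whose quantity part is $\mathfrak{F}$: reuse the body sort of $\M_\rac$ together with its unary predicates for observers and photons, and reinterpret $\W$ via the same geometric recipe from \cite{MSzRac}, now applied to coordinate points in $\mathfrak{F}^d$ (so that photons trace null lines and each inertial observer's worldline is the prescribed $\mathfrak{F}$-affine line). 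Axioms \ax{AxOField}, \ax{AxEv}, \ax{AxSelf} and \ax{AxSymD} transfer by direct inspection. For \ax{AxThExp^-}, given an inertial observer $m$, a vector $\vv\in\mathfrak{F}^d$ with $v_1=1$ and $\sqlength(\vv_s)<\ls_m^2$, and $\varepsilon>0$ in $\mathfrak{F}$, density of $\rac$ in $\mathfrak{F}$ supplies $\vw\in\rac^d$ with $\sqdist(\vw,\vv)<\varepsilon$ and $\sqlength(\vw_s)<\ls_m^2$; the rational-direction observer already present in $\M_\rac$ witnesses the required inner existential.

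For the strict separation, I would apply the compactness theorem of first-order logic. Expand the language by a fresh constant symbol $c$ of sort $\Q$ and consider the theory
\begin{equation*}
T\de\ax{SpecRel}+\ax{AxThExp^-}\cup\bigl\{\,c>\underbrace{1+\cdots+1}_{n}:n\in\mathbb{N}\,\bigr\}.
\end{equation*}
By Theorem~\ref{thm-rac}, every finite subset of $T$ is satisfiable in the $\rac$-model by interpreting $c$ as a sufficiently large rational, so compactness produces a model $\mathcal{N}$ of all of $T$. Its quantity part is an ordered field containing an element (the interpretation of $c$) strictly larger than every natural number, hence non-Archimedean; so $\mathcal{N}$'s quantity part lies in $\num_{n}(\ax{SpecRel}+\ax{AxThExp^-})$ but outside the Archimedean class, yielding the strict separation.

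I expect the main obstacle to lie in the lift of $\M_\rac$ to $\M_\mathfrak{F}$: the rational photon and observer data must be coherently enriched to cover every $\mathfrak{F}$-lightlike pair and every $\mathfrak{F}$-rational direction without violating \ax{AxPh} or \ax{AxEv} (e.g., one needs a new photon for each pair $\vx,\vy\in\mathfrak{F}^d$ satisfying $\sqspace(\vx,\vy)=\ls_m^2\cdot\timed(\vx,\vy)^2$). This forces one to extract from the construction of \cite{MSzRac} a presentation that is \emph{uniform over ordered fields containing $\rac$}, rather than intrinsically tied to $\rac$; once this uniformity is in hand, density takes care of \ax{AxThExp^-} and the remaining verifications are routine, while compactness settles strictness.
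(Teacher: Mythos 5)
Your proposal is correct and, for the inclusion, follows the same route as the paper: both rest on Theorem~\ref{thm-rac} together with the Pickert--Hion theorem and the resulting density of $\rac$ in every Archimedean ordered field, and both leave the actual lifting of the $\rac$-model of \cite{MSzRac} to an arbitrary Archimedean field at the level of a sketch (you are in fact more explicit than the paper about where the work lies, namely in extracting from \cite{MSzRac} a construction that is uniform over ordered fields containing $\rac$, so that \ax{AxPh} and \ax{AxEv} survive the enlargement of the coordinate domain while density of the rational directions takes care of \ax{AxThExp^-}). The only divergence is in the strictness argument: you adjoin a constant $c$ with the axioms $c>1+\cdots+1$ and invoke compactness, whereas the paper applies the upward L\"owenheim--Skolem theorem to obtain models whose quantity part has cardinality exceeding that of $\R$ and hence, again by Pickert--Hion, cannot be Archimedean. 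The two devices are interchangeable here --- both use only that $\ax{SpecRel}+\ax{AxThExp^-}$ is a first-order theory with at least one (infinite) model --- and your compactness version has the mild advantage of directly exhibiting an element above all the naturals rather than passing through a cardinality bound.
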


By L\"ovenheim--Skolem Theorem it is clear that
$\num_{n}(\ax{SpecRel}\! +\! \ax{AxThExp^-})$ cannot be the class of
Archimedean ordered fields since it has elements of arbitrarily large
cardinality while Archimedean ordered fields are subsets of the field
of real numbers by Pickert--Hion Theorem.  The question ``exactly
which ordered fields can be the quantity structures of theory
\ax{SpecRel} + \ax{AxThExp^-}?'' is open. We conjecture that there is
a model of \ax{SpecRel} + \ax{AxThExp^-} over every ordered field,
i.e.:
\begin{conj}\label{conj-of}
\begin{equation*}
\num_{n}(\ax{SpecRel} + \ax{AxThExp^-})=\{\,\text{ordered fields\,}\}
\end{equation*}
\end{conj}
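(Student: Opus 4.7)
The inclusion $\num_n(\ax{SpecRel}+\ax{AxThExp^-}) \subseteq \{\text{ordered fields}\}$ is immediate from \ax{AxOField}, so the task is to construct, for an arbitrary ordered field $\mathfrak{Q}$, a model of \ax{SpecRel}+\ax{AxThExp^-} whose quantity part is \emph{exactly} $\mathfrak{Q}$. The plan is to generalize the construction underlying Theorem~\ref{thm-rac} (which settles the case $\mathfrak{Q}=\rac$) by using stereographic parameters drawn from all of $\mathfrak{Q}$ rather than only from $\rac$.

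More concretely, I would index the inertial observers of the candidate model by pairs $(\bar s,\bar p)$ with $\bar s\in \mathfrak{Q}^{d-1}$, $\sqlength(\bar s)<1$, and $\bar p\in \mathfrak{Q}^d$, assigning to $\bar s$ the velocity $2\bar s/(1+\sqlength(\bar s))$. Its Lorentz factor $(1+\sqlength(\bar s))/(1-\sqlength(\bar s))$ lies in $\mathfrak{Q}$, so the associated Lorentz boost has all entries in $\mathfrak{Q}$; composing such boosts with rotations built from analogous parameters and with translations by $\bar p$ keeps every worldview transformation inside $\mathfrak{Q}$. Axioms \ax{AxOField}, \ax{AxPh}, \ax{AxEv}, \ax{AxSelf} and \ax{AxSymD} then follow by standard Minkowskian calculations; the real content is concentrated in \ax{AxThExp^-}.

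Verifying \ax{AxThExp^-} amounts to showing that the image of the map $\bar s\mapsto 2\bar s/(1+\sqlength(\bar s))$ on $\{\bar s\in \mathfrak{Q}^{d-1}:\sqlength(\bar s)<1\}$ is order-topologically dense in the open unit ball of $\mathfrak{Q}^{d-1}$, where ``dense'' is measured by $\sqdist$ with arbitrary $\varepsilon>0$ from $\mathfrak{Q}$. For Archimedean $\mathfrak{Q}$ this reduces, through the Pickert--Hion embedding into $\R$, to the classical density of rational Pythagorean vectors, and recovers Corollary~\ref{cor-arch}. The delicate case, which I expect to be the main obstacle, is a non-Archimedean $\mathfrak{Q}$: there $\varepsilon$ may be infinitesimal, so one must approximate an arbitrary target velocity to infinitesimal precision by a Pythagorean one whose parameters live in $\mathfrak{Q}$ itself. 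A Hensel/Newton-style iteration along the conic $1-v^2=z^2$ would handle this were $\mathfrak{Q}$ henselian in the appropriate sense, but not in general; a softer route --- perhaps a compactness argument adjoining the diagram of $\mathfrak{Q}$ to \ax{SpecRel}+\ax{AxThExp^-} and bootstrapping from Theorem~\ref{thm-rac} on finite fragments, followed by a surgery on the resulting model to force its quantity part to be exactly $\mathfrak{Q}$ --- looks more promising, but the final identification step is precisely what has so far prevented a complete proof.
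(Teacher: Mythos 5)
The statement you are trying to prove is explicitly labelled a \emph{Conjecture} in the paper: the authors state that the question of exactly which ordered fields lie in $\num_{n}(\ax{SpecRel}+\ax{AxThExp^-})$ is open, and they offer no proof beyond the special cases recorded in Theorem~\ref{thm-rac} (the field $\rac$) and Corollary~\ref{cor-arch} (Archimedean ordered fields, via Pickert--Hion). So there is no proof in the paper to compare yours against, and your proposal --- as you yourself acknowledge in the last sentence --- does not close the gap either.

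The gap you identify is the genuine one, and it is worth being precise about why it is fatal to the direct construction. With observers indexed by Pythagorean parameters $\bar s$, the attainable velocities $\vw$ are exactly those for which $1-\sqlength(\vw_s)$ is a square in $\mathfrak{Q}$, and \ax{AxThExp^-} demands that these be $\varepsilon$-dense for \emph{every} positive $\varepsilon$ of $\mathfrak{Q}$, including infinitesimal ones. Over $\rac$ or any Archimedean field this is the classical density of rational points on the sphere; over a non-Archimedean field, say $\rac(t)$ with $t$ infinitesimal, there is no reason an arbitrary target velocity should be approximable to within $t^{100}$ by one whose Lorentz factor is a square, and no general mechanism (henselianity, real closure, \dots) is available in an arbitrary ordered field. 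Your fallback via compactness has a second, equally serious problem: adjoining the diagram of $\mathfrak{Q}$ and invoking compactness can at best produce a model whose quantity part \emph{contains} $\mathfrak{Q}$ (indeed the finite fragments would have to be realized over finitely generated ordered fields, which Theorem~\ref{thm-rac} does not cover), whereas the conjecture requires the quantity part to \emph{equal} $\mathfrak{Q}$; no first-order device forces that identification. In short, your reduction of the problem to a density statement is a reasonable framing, but neither branch of your argument resolves it, and the statement remains open.
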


\section{Numbers implied by accelerated observers}

Now we are going to investigate what happens with the possible
structures of quantities if we extend our theory \ax{SpecRel} with
accelerated observers. To do so, let us recall our first-order logic
axiom system of accelerated observers \ax{AccRel}.  The key axiom of
\ax{AccRel} is the following:
\begin{description}
\item[\underline{\ax{AxCmv}}]
 At each moment of its worldline, each observer
sees the nearby world for a short while as an inertial
   observer does.  
\end{description}
For formalization of \ax{AxCmv}, see \cite{SzPhd}. In \ax{AccRel} we
will also use the following localized version of axioms \ax{AxEv} and
\ax{AxSelf} of \ax{SpecRel}.

\begin{description}\label{axev-}
\item[\underline{\ax{AxEv^-}}] Observers coordinatize all the events in which they participate:
\begin{equation*}
\Ob(k)\land  \W(m,k,\vx)\rightarrow\exists
\vy\enskip \ev_m(\vx)=\ev_k(\vy).
\end{equation*}
\end{description}

\begin{description}
\item[\underline{\ax{AxSelf^-}}] In his own worldview, the worldline
  of any observer is an interval of the time-axis containing all the
  coordinate points of the time-axis where the observer sees
  something:
\begin{multline*}
 \big[\W(m,m,\vx)\then x_2=\ldots=x_d=0\big]
 \land\\ \big[\W(m,m,\vy)\land\W(m,m,\vz)\land x_1<t<y_1\then
   \W(m,m,t,0,\ldots,0)\big] \land\\ \exists b \big[\enskip
   \W(m,b,t,0,\ldots,0) \then \W(m,m,t,0,\ldots,0)\big].
\end{multline*}
\end{description}

Let us now introduce a promising theory of accelerated observers as
\ax{SpecRel} extended with the three axioms above.
\begin{equation*}
\ax{AccRel_0}= \ax{SpecRel}\cup
  \{\ax{AxCmv},\ax{AxEv^-},\ax{AxSf^-}\}
\end{equation*}

Since \ax{AxCmv} ties the behavior of accelerated observers to the
inertial ones and \ax{SpecRel} captures the kinematics of special
relativity perfectly by Theorem~\ref{thm-poi}, it is quite natural to
think that \ax{AccRel_0} is a strong enough theory of accelerated
observers to prove the most fundamental results about accelerated
observers.  However, \ax{AccRel_0} does not even imply the most basic
predictions about accelerated observers such as the twin paradox or
that stationary observers measure the same time between two events
\cite{twp}, \cite[\S 7]{SzPhd}. Moreover, it can be proved that even if
we add the whole firs-order logic theory of real numbers to
\ax{AccRel_0} is not enough to get a theory that implies the twin
paradox, see, e.g., \cite{twp}, \cite[\S 7]{SzPhd}.

In the models of \ax{AccRel_0} in which \ax{TwP} is not true there are
some definable gaps in the number line. Our next assumption is an
axiom scheme excluding these gaps.
\begin{description}
\label{p-cont}
\item[\underline{\ax{CONT}}] Every  parametrically definable,
  bounded and nonempty subset of $\Q$ has a supremum (i.e., least upper bound) with respect to $\le$.
\end{description}
\noindent In \ax{CONT} ``definable'' means ``definable in the language
of \ax{AccRel}, parametrically.'' For a precise formulation of \ax{CONT},
see \cite[p.692]{twp} or \cite[\S 10.1]{SzPhd}.

That \ax{CONT} requires the existence of supremum only for sets
definable in the language of \ax{AccRel} instead of every set is
important because it makes this postulate closer to the
physical/empirical level. This is true because \ax{CONT} does not
speak about ``any fancy subset'' of the quantities, but just about
those ``physically meaningful'' sets which can be defined in the
language of our (physical) theory.

Our axiom scheme of continuity (\ax{CONT}) is a ``mathematical
axiom" in spirit. It is Tarski's first-order logic version of
Hilbert's continuity axiom in his axiomatization of geometry, see
\cite[pp.161-162]{Gol}, fitted to the language of \ax{AccRel}.

When $\Q$ is the ordered field of real numbers, \ax{CONT} is automatically
true.
Let us introduce our axioms system \ax{AccRel} as the extension of \ax{AccRel_0} by axiom scheme \ax{CONT}. 
\begin{equation*}
\ax{AccRel}=\ax{AccRel_0} + \ax{CONT}
\end{equation*}

It can be proved that axiom system \ax{AccRel} implies the twin
paradox, see \cite{twp}, \cite[\S 7.2]{SzPhd}.

An ordered field is called {\bf real closed field} if a first-order
logic sentence of the language of ordered fields is true in it exactly
when it is true in the field of real numbers, or equivalently if it is
Euclidean and every polynomial of odd degree has a root in it, see,
e.g., \cite{tarski-dmethod}.

\begin{thm}\label{thm-rc}
\begin{equation*}
\num_{n}(\ax{AccRel})=\{\text{\,real closed fields\,}\}
\end{equation*}
\end{thm}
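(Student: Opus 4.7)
The statement has two inclusions. For the inclusion $\num_n(\ax{AccRel}) \subseteq \{\text{real closed fields}\}$, I would show that any model $\M$ of \ax{AccRel} has a quantity structure that is (a) Euclidean and (b) satisfies the ``odd-degree polynomials have roots'' property. For (a), given $x \in Q$ with $x > 0$, the set $S = \{y \in Q : 0 \le y \land y \cdot y \le x\}$ is parametrically definable in the language of \ax{AccRel} (it uses only the field operations and the parameter $x$), nonempty (contains $0$), and bounded above (by $\max(1,x)$). By \ax{CONT} it has a supremum $s$, and the usual $\varepsilon$-push argument (if $s^2 < x$, pick a small $\delta \in Q$ with $(s+\delta)^2 < x$; if $s^2 > x$, pick $\delta$ with $(s-\delta)^2 > x$) forces $s^2 = x$.

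For (b), let $p(y) = y^{2k+1} + a_{2k} y^{2k} + \cdots + a_0$ be a monic odd-degree polynomial with coefficients in $Q$. Standard bounds in any ordered field produce $N, M \in Q$ with $p(N) < 0 < p(M)$. Then $T = \{ y \in Q : N \le y \land p(y) \le 0\}$ is parametrically definable in the \ax{AccRel} language (polynomials are built from $+$ and $\cdot$, and the coefficients serve as parameters), nonempty, and bounded above by $M$. By \ax{CONT} it has a supremum $s$, and again an $\varepsilon$-push argument (using that polynomials are continuous in any ordered field in the sense that small changes in the input yield small changes in the output) rules out both $p(s) > 0$ and $p(s) < 0$, so $p(s) = 0$. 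Together (a) and (b) give real-closedness.

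For the opposite inclusion $\{\text{real closed fields}\} \subseteq \num_n(\ax{AccRel})$, given a real closed field $\mathfrak{R}$ I would build the standard Minkowski model $\M(\mathfrak{R})$ over $\mathfrak{R}^d$: inertial observers are indexed by Poincar\'e frames, photons are parametrized by lightlike lines, and accelerated observers are taken to be (sufficiently regular) timelike curves together with a choice of comoving frame at each point, so that \ax{AxCmv} holds by construction. The axioms of \ax{SpecRel} are verified just as they would be over $\R$, and \ax{AxEv^-}, \ax{AxSelf^-} follow from the design of the construction. The only delicate point is \ax{CONT}: any subset of $\mathfrak{R}$ parametrically definable in the full language of \ax{AccRel} must be shown to have suprema when bounded and nonempty.

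The main obstacle is verifying \ax{CONT} in $\M(\mathfrak{R})$. My plan is a two-step reduction: first, quantifiers over bodies in the constructed model can be replaced by quantifiers over their coordinate-data (tuples from $\mathfrak{R}$), since every body is explicitly parametrized; this reduces any \ax{AccRel}-definable subset of $\mathfrak{R}$ to a subset definable in the pure first-order language of ordered fields with parameters from $\mathfrak{R}$. Second, by Tarski's theorem that real closed fields are o-minimal, every such subset is a finite union of points and open intervals, hence every bounded nonempty such set has a supremum in $\mathfrak{R}$. This reduction is where the subtlety lies, and doing it cleanly may require a careful inductive translation of formulas (bodies $\mapsto$ coordinate tuples, worldview relation $\mapsto$ its explicit definition in coordinates), but no new mathematical ideas beyond Tarski's theorem.
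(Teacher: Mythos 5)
Your overall strategy coincides with the paper's: one inclusion comes from showing that \ax{CONT} together with \ax{AxOField} forces real-closedness, the other from exhibiting a model over an arbitrary real closed field. For the first inclusion the paper merely cites the fact that \ax{CONT} is stronger than the whole first-order theory of the reals; your direct supremum arguments for square roots and for roots of odd-degree polynomials are correct and fill in exactly what that citation hides (the set $T$ is bounded above once $M$ is a standard coefficient bound beyond which a monic odd-degree polynomial is positive, and the $\varepsilon$-push at $s=\sup T$ rules out both $p(s)>0$ and $p(s)<0$). So that half is fine, and arguably more self-contained than the paper's.

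The gap is in the converse inclusion, in your choice of model. You populate $\Ob$ with ``all sufficiently regular timelike curves,'' and then your verification of \ax{CONT} rests on the claim that ``every body is explicitly parametrized,'' so that quantifiers over bodies can be traded for quantifiers over tuples of field elements, after which o-minimality of real closed fields finishes the job. That trade is legitimate for $\IOb$ (finitely many field elements per Poincar\'e frame) and for $\Ph$ (finitely many per lightlike line), but the collection of all sufficiently regular timelike curves in $\mathfrak{R}^d$ is not a definable family indexed by finite tuples, so the inductive translation into the language of ordered fields breaks down and \ax{CONT} is no longer guaranteed: a priori a bounded subset of $\Q$ definable with quantifiers over this huge set of bodies could fail to be semialgebraic. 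The paper sidesteps this entirely by taking $\B=\Ph\cup\IOb$: since \ax{AxCmv}, \ax{AxEv^-}, and \ax{AxSelf^-} are universally quantified over observers, a model with no properly accelerated observers satisfies them via \ax{SpecRel}, and the definability reduction you describe then goes through. Either restrict your accelerated observers to a semialgebraically parametrized family (and check \ax{AxCmv} for each of them), or drop them altogether; as written, your construction does not verify \ax{CONT}.
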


\begin{proof}
There is a model of \ax{AccRel} over every real closed field since
every model of \ax{SpecRel} over a real closed field in which
$\B=\Ph\cup\IOb$ is a model of \ax{AccRel} and \ax{SpecRel} has a
model even over every Euclidean ordered field by Theorem~\ref{thm-eof}.

Axiom schema \ax{CONT} is stronger than the whole first-order logic
theory of real numbers, see, e.g.,
\cite[Prop.~10.1.2]{SzPhd}. Consequently, if axiom \ax{AxOField} is assumed,
\ax{CONT} by itself implies that the quantities are real closed
fields.
\end{proof}

\section{Numbers implied by uniformly accelerated observers}

We have seen that assuming existence of observers can ensure the
existence of numbers.  So let us investigate another axiom of this
kind.  

The next axiom ensures the existence of uniformly accelerated
observers.  To introduce it, let us define the {\bf life-curve}
$\lc_{m}(k)$ of observer $k$ according to observer $m$ as the
worldline of $k$ according to $m$ {\it parametrized by the time
  measured by $k$},\label{life-curve} formally:
\begin{equation*}
\lc_{m}(k)\de\{\, \langle t,\vx \rangle\in \Q\times \Q^d \::
\:\exists \vy\enskip k\in \ev_k(\vy)=\ev_m(\vx)\land y_1=t\,\}.
\end{equation*}

\begin{description}
\item[\underline{\ax{Ax\exists UnifOb}}] It is possible to accelerate
  an observer uniformly:\footnote{In relativity theory, uniformly
    accelerated observers are moving along hyperbolas, see, e.g.,
    \cite[\S 3.8, pp.37-38]{dinverno}, \cite[\S 6]{MTW}, \cite[\S
      12.4, pp.267-272]{Rindler}.}
\begin{multline*}
\IOb(m)\rightarrow \exists k \Big[ \Ob(k) \land \dom \lc_m(k)=\Q\\\land
  \forall \vx \big[ \vx\in \ran \lc_m(k) \leftrightarrow
    x_2^2-x_1^2=a^2\land x_3=\ldots=x_d=0\big]\Big].
\end{multline*}
\end{description}

\begin{thm}\label{thm-e}
Let $d\ge3$. Assume \ax{AccRel} and \ax{Ax\exists UnifOb}. Then there
is a definable differentiable function $E:\Q\rightarrow\Q$ such that
$\ran E=\Q^+=[0,\infty)$, $\frac{dE}{dt}=E$ and $E(-t)=1/E(t)$ for all
  $t\in\Q$.
\end{thm}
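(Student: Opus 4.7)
The plan is to read off $E$ from the proper-time parametrization of a uniformly accelerated worldline. Fix an inertial observer $m$, apply \ax{Ax\exists UnifOb} with $a=1$, and obtain an observer $k$ whose life-curve has domain $\Q$ and image the hyperbola $x_2^2-x_1^2=1$, $x_3=\cdots=x_d=0$. Since $\lc_m(k)$ is a function we may write
\begin{equation*}
\lc_m(k)(t)=\langle f(t),g(t),0,\ldots,0\rangle
\end{equation*}
for definable $f,g:\Q\to\Q$ with $g(t)^2-f(t)^2=1$. After passing to the branch with $g\ge 1$ (restricting the parameter set if necessary), $g$ is positive throughout and $f$ is surjective onto $\Q$.

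The technical heart is to show that $f,g$ are differentiable with $f'=g$ and $g'=f$. For each $t_0$, \ax{AxCmv} supplies a comoving inertial observer $m_{t_0}$ whose worldview transformation to $m$ is a Poincar\'e transformation by Theorem~\ref{thm-poi}. Tangency of $m_{t_0}$'s worldline to $\lc_m(k)$ at $\lc_m(k)(t_0)$, together with synchronization of $k$'s and $m_{t_0}$'s proper times to first order, forces $f,g$ to be differentiable at $t_0$ with Minkowski-unit tangent $f'(t_0)^2-g'(t_0)^2=1$. Differentiating $g^2-f^2=1$ gives the compatibility $gg'=ff'$, and solving the two relations within the ordered field produces $f'^2=g^2$ and $g'^2=f^2$. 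Continuity of $f',g'$ together with the forward orientation inherited from $m_{t_0}$ pin the signs globally to $f'=g$ and $g'=f$. Note also that $(f+g)(g-f)=1$ with $g\ge 1$ gives $f+g>0$ on all of $\Q$.

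Define $E(t):=(f(t)+g(t))/(f(0)+g(0))$. Then $E$ is differentiable with $E(0)=1$ and
\begin{equation*}
E'(t)=\frac{g(t)+f(t)}{f(0)+g(0)}=E(t).
\end{equation*}
For the involution, let $h(t):=E(t)E(-t)$; by the chain rule and $E'=E$,
\begin{equation*}
h'(t)=E(t)E(-t)-E(t)E(-t)=0,
\end{equation*}
so $h\equiv h(0)=1$, i.e., $E(-t)=1/E(t)$. Finally, $E$ is strictly positive and strictly increasing (since $E'=E>0$); because $f$ surjects onto $\Q$ the value $E(t)$ becomes arbitrarily large, and $E(-t)=1/E(t)$ then makes it arbitrarily small and positive. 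The intermediate value property supplied by \ax{CONT} delivers $\ran E=\Q^+$.

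The principal obstacle I expect is the differentiability step: rigorously converting the local inertial approximation of \ax{AxCmv} into the analytic claim that the proper-time parametrization is differentiable at every $t_0$ with Minkowski-unit tangent, and then pinning the global sign choice $f'=g$ (rather than $f'=-g$) using continuity and the forward orientation of the comoving observer. Everything downstream is routine ordered-field calculation plus one \ax{CONT}-based IVT argument.
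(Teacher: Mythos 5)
The paper does not actually contain a proof of Theorem~\ref{thm-e}; it is deferred to \cite{Accwnst}. That said, your route is clearly the intended one: parametrize the $a=1$ hyperbola by $k$'s proper time, show the coordinates behave like $\sinh$ and $\cosh$, and set $E=(f+g)/(f(0)+g(0))$. Everything downstream of ``$f'=g$, $g'=f$'' is correct, granting the standard \ax{CONT}-based calculus for definable functions (both the step $h'\equiv 0\Rightarrow h$ constant and the intermediate-value step need \ax{CONT}, which you invoke; note also that since $E(t)E(-t)=1$ forces $E>0$, the stated $\ran E=[0,\infty)$ can only be meant as the strictly positive elements).

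The gap you flag at the differentiability step is genuine, and it is not a technicality: it \emph{is} the theorem. That $\lc_m(k)$ is a function, that it is differentiable, and that its tangent at $t_0$ is the unit timelike vector obtained from the comoving inertial observer's worldview transformation is the main body of the theory of \ax{AccRel}; it consumes \ax{AxCmv} together with \ax{CONT} in an essential way (recall the paper's remark that \ax{AccRel_0} plus even the full first-order theory of the reals does not prove the twin paradox, so no amount of soft argument substitutes for \ax{CONT} here). Writing ``tangency forces $f,g$ to be differentiable with $f'^2-g'^2=1$'' states the needed lemma rather than proving it. Three further holes. First, \ax{Ax\exists UnifOb} puts \emph{both} branches of $x_2^2-x_1^2=1$ into $\ran\lc_m(k)$, so your ``restricting the parameter set if necessary'' threatens to shrink $\dom E$ to a proper subset of $\Q$, whereas the theorem needs $E:\Q\to\Q$; you must first establish continuity of $\lc_m(k)$ and use \ax{CONT} to conclude its range is connected, hence a single branch, so no restriction occurs. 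Second, the sign-pinning via ``continuity of $f',g'$'' is unavailable, since differentiability does not yield continuous derivatives; instead use $f'^2=g^2\ge 1$, so $f'$ never vanishes, and a Rolle-type argument under \ax{CONT} to get $f$ strictly monotone, whence $f'=g$ everywhere or $f'=-g$ everywhere (in the latter case take $E=(g-f)/(g(0)-f(0))$). Third, $m$ and $k$ are only existentially given, so a word is needed that $E$ is \emph{parametrically} definable with $m,k$ as parameters, which is the notion of definability the paper uses.
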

Let $\bar \rac\cap \R$ denote the ordered field of real algebraic
numbers.  Theorem~\ref{thm-e} implies that the ordered field of
algebraic real numbers cannot be the structure of quantities of theory
\ax{AccRel} + \ax{Ax\exists UnifOb}:
\begin{thm}\label{thm-noalg}
Let $n\ge3$.  
\begin{equation*}
\bar\rac\cap\R\not\in\num_n(\ax{AccRel} + \ax{Ax\exists UnifOb})
\end{equation*}
\end{thm}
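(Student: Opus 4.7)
The plan is to extract an exponential function from Theorem~\ref{thm-e} and then invoke Lindemann's transcendence theorem to derive a contradiction from the assumption $\Q=\bar\rac\cap\R$. Theorem~\ref{thm-e} produces a definable differentiable $E:\Q\to\Q$ with $E'=E$, $E(-t)=1/E(t)$, and $\ran E=\Q^+$; in particular $c:=E(1)$ lies in $\Q=\bar\rac\cap\R$ and is therefore algebraic, while $E(-t)=1/E(t)$ forces $E(0)=1$ and $E$ nowhere zero.

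First I would derive the multiplicative functional equation $E(s+t)=E(s)E(t)$. For fixed $s\in\Q$ the definable function $F(t):=E(t+s)/E(t)$ satisfies $F'\equiv 0$ by the product and chain rules together with $E'=E$. Since \ax{AccRel} guarantees that $\Q$ is real closed (Theorem~\ref{thm-rc}) and that \ax{CONT} holds, a standard supremum argument yields a mean-value theorem for definable differentiable functions, so $F$ must be constant and $E(t+s)=E(t)E(s)$. Iterating, and using $E(-t)=1/E(t)$ together with the existence of $q$-th roots in the Euclidean field $\Q$, gives $E(r)=c^r$ for every $r\in\rac$.

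Finally I would embed $\Q=\bar\rac\cap\R$ as a dense Archimedean subfield of $\R$ (Pickert--Hion). The classical function $t\mapsto c^t$ on $\R$ agrees with $E$ on $\rac$, and hence, by continuity of $E$ and density of $\rac$ in $\bar\rac\cap\R$, agrees with $E$ on all of $\Q$. Since $\rac$ is also dense in $\R$, the $\Q$-limit defining $E'(0)$ coincides with the classical real limit of the same difference quotients, which equals $\ln c$; combined with $E'(0)=E(0)=1$ this forces $c=e$, contradicting the transcendence of $e$. The main obstacle is the two-step passage from the intrinsic calculus on the algebraic field $\Q$ to ordinary real analysis: one must check that $F'\equiv 0$ implies $F$ is constant inside $\Q$ (secured by \ax{CONT}) and that the $\Q$-derivative of $E$ matches the $\R$-derivative of its continuous extension (which follows from the density of $\bar\rac\cap\R$ in $\R$).
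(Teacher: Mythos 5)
Your proposal is correct and follows exactly the route the paper intends: the paper derives Theorem~\ref{thm-noalg} from Theorem~\ref{thm-e} (deferring details to \cite{Accwnst}) by arguing that a definable differentiable $E$ with $E'=E$, $E(0)=1$ over the Archimedean field $\bar\rac\cap\R$ would have to agree with the real exponential, forcing $E(1)=e$ to be algebraic, contradicting Hermite--Lindemann. Your two supporting steps --- the mean value theorem for definable functions via \ax{CONT}, and the identification of the $\Q$-derivative with the real limit along the dense subfield --- are precisely the points that need checking, and both go through as you describe.
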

See \cite{Accwnst} for proofs and more details of
Theorems~\ref{thm-e} and \ref{thm-noalg}.

\begin{rem}\label{rem-noalg}
By Theorem~\ref{thm-noalg}, if $n\ge3$, $\num_n(\ax{AccRel} +
\ax{Ax\exists UnifOb})$ is not an elementary class of ordered fields,
i.e., it is not a first-order logic axiomatizable class in the
language of ordered fields. Of course, it is a pseudoelementary class,
i.e., it is a reduct of an elementary class in a richer language.
\end{rem}

By Theorem~\ref{thm-noalg}, we know that not every real closed field
can be the quantity structure of \ax{AccRel} + \ax{Ax\exists
  UnifOb}. For example, the field of real algebraic numbers cannot be
the quantity structure of \ax{AccRel} + \ax{Ax\exists UnifOb}.
However, the problem that exactly which ordered fields can be the
quantity structures of \ax{AccRel} + \ax{Ax\exists UnifOb} is still
open:
\begin{que}\label{que-uob}
Exactly which ordered fields are the elements of classes
$\num_{n}(\ax{AccRel} + \ax{Ax\exists UnifOb})$ and
$\num_{n}(\ax{AccRel_0} + \ax{Ax\exists UnifOb})$?
\end{que}

Theorem~\ref{thm-e} suggests that the answer to Question~\ref{que-uob}
may have something to do with ordered exponential fields, see, e.g.,
\cite[\S 4]{DW}, \cite{SK}.

\section{Numbers required by general relativity}
\label{sec-gr}

Let us now see some similar questions about the properties of numbers
implied by axioms of general relativity. To do so, let us recall our
axiom system \ax{GenRel} of general relativity formulated in the same
streamlined language as \ax{AccRel} and \ax{SpecRel}. \ax{GenRel}
contains the localized versions of the axioms of \ax{SpecRel} and the
postulate that the worldview transformations between observers are
differentiable maps, which is the localized version of the theorem of
\ax{SpecRel} stating that the worldview transformations between
inertial observers are affine maps, see Theorem~\ref{thm-poi}.  We
have already introduced the localized versions of axioms \ax{AxEv} and
\ax{AxSelf}, see \ax{AxEv^-} and \ax{AxSelf^-} at page
\pageref{axev-}. Now let us state the localized versions of \ax{AxPh}
and \ax{AxSymD}.\footnote{For technical reasons, in \ax{GenRel} we use
  an equivalent version of \ax{AxSymD}, and we introduce that the
  speed of light is $1$ in \ax{AxPh} instead of in \ax{AxSym^-}.}

\begin{description}
\item[\underline{\ax{AxPh^-}}] The velocity of photons an observer ``meets'' is 1
when they meet, and it is possible to send out a photon in each
direction where the observer stands.
\end{description}

\begin{description}
\item[\underline{\ax{AxSym^-}}] Meeting observers see each other's
  clocks slow down the same way.
\end{description}

\begin{description}\label{axdiff}
\item[\underline{\ax{AxDiff}}] The worldview transformations between
   observers are functions having linear approximations at each
  point of their domain (i.e., they are differentiable maps).
\end{description}

For a precise formulation of axioms \ax{AxPh^-}, \ax{AxSym^-}, and
\ax{AxDiff}, as well as a ``derivation'' of the axioms of \ax{GenRel}
from that of \ax{SpecRel}, see, e.g., \cite{Synthese}, \cite[\S 9]{SzPhd}.
\begin{equation*}
\ax{GenRel}\de\{\ax{AxOFiled},\ax{AxPh^-},\ax{AxEv^-},\ax{AxSelf^-},\ax{AxSym^-},\ax{AxDiff}\}\cup\ax{CONT}
\end{equation*}

Axiom system \ax{GenRel} captures general relativity well since it is
complete with respect the standard models of general relativity, i.e.,
with respect to Lorentzian manifolds, see, e.g.,
\cite[Thm.4.1]{Synthese}, \cite[\S 9]{SzPhd}.

We call the worldline of observer $m$ \textit{timelike geodesic}, if
each of its points has a neighborhood within which this observer
``maximizes measured time" between any two encountered events, see
Figure~\ref{fig-geod} for illustration and \cite{Synthese} for a
formal definition of timelike geodesics in the language of
\ax{GenRel}.

According to the definition above, if there are only a few observers,
then it is not a big deal that the worldline of $m$ is a timelike
geodesic (it is easy to be maximal if there are only a few to be
compared to). To generate a real competition for the rank of having
a timelike geodesic worldline, we postulate the existence of great many
observers by the following axiom scheme of comprehension.

\begin{description}
\item[\underline{\ax{COMPR}}] For any parametrically definable
  timelike curve in any observer's worldview, there is another observer
  whose worldline is the range of this curve.
\end{description}
A precise formulation of \ax{COMPR} can be obtained from that of its
analogue in \cite[p.679]{logst}. Let us now show that \ax{COMPR}
implies axiom \ax{Ax\exists UnifOb}, hence it requires at least as
much properties of numbers.

\begin{figure}
\psfrag{k}[rt][rt]{${m}$}
\psfrag{h}[lt][lt]{$\forall {k}$}
\psfrag{d}[l][l]{$\exists {\delta > 0}$}
\psfrag{x}[rb][rb]{$\forall \vx$}
\includegraphics[keepaspectratio, width=0.5\textwidth]{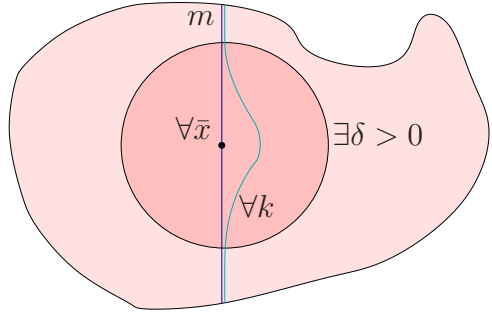}
\caption{Illustration for the definition of timelike geodesic in \ax{GenRel}}
\label{fig-geod}
\end{figure}

\begin{prop}\label{prop-uob}
\begin{equation*}
\num_{n}(\ax{AccRel} + \ax{COMPR})\subseteqq\num_{n}(\ax{AccRel} + \ax{Ax\exists UnifOb})
\end{equation*}
\end{prop}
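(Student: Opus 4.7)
The approach is to prove the stronger implication $\ax{AccRel}+\ax{COMPR}\vdash\ax{Ax\exists UnifOb}$, from which the proposition follows at once: every model of the left-hand theory is already a model of the right-hand theory with the same quantity part, so its quantity part lies in $\num_n(\ax{AccRel}+\ax{Ax\exists UnifOb})$.

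Fix an inertial observer $m$ and a nonzero $a\in\Q$. By Theorem~\ref{thm-rc}, $\ax{AccRel}$ makes $\Q$ a real closed field, so positive elements have square roots. In $m$'s worldview, define the curve
\begin{equation*}
\gamma(s):=\langle s,\sqrt{a^2+s^2},0,\ldots,0\rangle\qquad(s\in\Q).
\end{equation*}
It is parametrically definable from $a$, its range is the future branch $\{\vx : x_2^2-x_1^2=a^2,\ x_2>0,\ x_3=\ldots=x_d=0\}$ of the target hyperbola, and its tangent $\gamma'(s)=\langle 1,s/\sqrt{a^2+s^2},0,\ldots,0\rangle$ has squared Minkowski norm $1-s^2/(a^2+s^2)=a^2/(a^2+s^2)>0$, so $\gamma$ is everywhere timelike. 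Applying \ax{COMPR} to $\gamma$ in $m$'s worldview produces an observer $k$ whose worldline according to $m$ is precisely this branch, securing the range part of $\ax{Ax\exists UnifOb}$.

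The remaining and principal obstacle is the domain condition $\dom\lc_m(k)=\Q$: the observer $k$ must exist for every value of his own coordinate time. By \ax{AxSelf^-}, $k$'s self-worldline is an interval of the time-axis, so $\dom\lc_m(k)$ is an interval of $\Q$, and the task is to rule out a proper one. Here I would appeal to the formal version of \ax{COMPR} from \cite[p.679]{logst} together with \ax{CONT}: the former, modeled on its analogue there, allows us to bind $k$'s coordinate time to the parameter domain of $\gamma$ (which we took to be all of $\Q$), while the latter closes any definable gap in the interval of $k$'s existence. Combined, they force $\dom\lc_m(k)=\Q$, completing the derivation of $\ax{Ax\exists UnifOb}$ and thus the proposition.
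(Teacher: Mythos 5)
Your proposal is essentially the paper's own argument: the paper likewise observes that for each $a$ the set $\{\vx : x_2^2-x_1^2=a^2,\ x_3=\ldots=x_d=0\}$ is the range of a definable timelike curve, invokes \ax{COMPR} to obtain an observer with that worldline, and concludes that \ax{COMPR} implies \ax{Ax\exists UnifOb}, hence the inclusion of quantity classes. (The paper's proof is explicitly only a sketch; you are in fact more careful than it is in verifying timelikeness via real-closedness from Theorem~\ref{thm-rc} and in flagging the domain condition $\dom\lc_m(k)=\Q$, which the paper passes over in silence.) One concrete discrepancy: you parametrize only the branch $x_2=\sqrt{a^2+s^2}>0$, whereas \ax{Ax\exists UnifOb} as formalized demands $\vx\in\ran\lc_m(k)\leftrightarrow x_2^2-x_1^2=a^2\land x_3=\ldots=x_d=0$, i.e.\ the range must be the \emph{entire} two-branch hyperbola; the paper's sketch keeps both branches precisely to match this biconditional, so your observer does not literally witness the axiom as written (though physically a single branch is the more sensible reading, and the paper's ``curve'' $\{\langle x_1,\vx\rangle:\ldots\}$ is itself not single-valued in $x_1$). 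This is a mismatch with the letter of the axiom rather than a flaw in the idea; otherwise the two arguments coincide.
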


\begin{proof}[\colorbox{proofbgcolor}{\textcolor{axcolor}{On the proof}}]
For all $a\in\Q$, the hyperbola (line if $a=0$) 
\begin{equation}
\{\vx :
x_2^2-x_1^2=a^2,x_3=\ldots=x_d=0 \}
\end{equation}
can be parametrized by the
definable timelike curve
\begin{equation}
\{\langle x_1,\vx\rangle: x_2^2-x_1^2=a^2,x_3=\ldots=x_d=0\}.
\end{equation}
So by \ax{COMPR}, there is an observer whose worldline is this set. So
\ax{COMPR} implies \ax{Ax\exists UnifOb}. Therefore, every  model of
\ax{AccRel} + \ax{COMPR} is a model of of \ax{AccRel} +
\ax{Ax\exists UnifOb}. Hence the possible quantity structures of
\ax{AccRel} + \ax{COMPR} is a subset of the possible quantity
structures of \ax{AccRel} + \ax{Ax\exists UnifOb}.
\end{proof}

It is also quite easy to show that \ax{GenRel} does not require more
properties of numbers than \ax{AccRel}.

\begin{prop}
\begin{equation*}
\num_{n}(\ax{AccRel} + \ax{AxDiff})\subseteqq\num_{n}(\ax{GenRel})
\end{equation*}
\begin{equation*}
\num_{n}(\ax{AccRel} + \ax{AxDiff} + \ax{COMPR})\subseteqq\num_{n}(\ax{GenRel} + \ax{COMPR})
\end{equation*}
\end{prop}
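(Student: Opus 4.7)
The plan is to argue that every model of $\ax{AccRel} + \ax{AxDiff}$ is automatically a model of $\ax{GenRel}$ (with the same dimension and quantity part), so that the same model which witnesses $\mathfrak{Q} \in \num_n(\ax{AccRel}+\ax{AxDiff})$ already witnesses $\mathfrak{Q}\in\num_n(\ax{GenRel})$. Granted this, the first inclusion is immediate, and the second follows by the identical argument once one notes that \ax{COMPR} is carried over verbatim.

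So fix a model $\M$ of $\ax{AccRel}+\ax{AxDiff}$ with $d=n$ and quantity part $\mathfrak{Q}$. Of the axioms in $\ax{GenRel}=\{\ax{AxOField}, \ax{AxPh^-}, \ax{AxEv^-}, \ax{AxSelf^-}, \ax{AxSym^-}, \ax{AxDiff}\}\cup\ax{CONT}$, I observe that four are inherited verbatim from $\ax{AccRel}$: \ax{AxOField} lies in $\ax{SpecRel}\subseteq\ax{AccRel}$; \ax{AxEv^-} and \ax{AxSelf^-} are listed explicitly in $\ax{AccRel_0}$; and \ax{CONT} is the scheme added to $\ax{AccRel_0}$ to form \ax{AccRel}. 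The axiom \ax{AxDiff} is assumed directly. Only the localized axioms \ax{AxPh^-} and \ax{AxSym^-} remain. For each I would invoke \ax{AxCmv}: since \ax{AxPh^-} concerns only the velocity of a photon at the precise event where it meets an observer (and the existence of outgoing photons in every spatial direction at that event), and since \ax{AxSym^-} is a first-order comparison of clock rates at a meeting of two observers, both are purely local statements. The plan is therefore to replace each relevant observer locally by the comoving inertial observer supplied by \ax{AxCmv}, apply the global \ax{AxPh} and \ax{AxSymD} (which hold for inertial observers by $\ax{SpecRel}\subseteq\ax{AccRel}$), and transfer the conclusion back to the original observer using that the worldview transformation agrees to first order with the identity at the meeting event (this is where \ax{AxDiff} gets used).

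The main obstacle is purely bookkeeping: \ax{AxCmv}, \ax{AxPh^-}, and \ax{AxSym^-} are only described informally in the excerpt (their formal versions are in \cite{SzPhd,Synthese}), so one must check that the first-order-in-displacement agreement guaranteed by \ax{AxCmv}, together with the differentiability provided by \ax{AxDiff}, is enough to push each localized statement through without needing any additional hypothesis. Once this is verified, $\M\models\ax{GenRel}$, giving the first inclusion. For the second, if $\M$ additionally satisfies \ax{COMPR}, then viewing the same $\M$ as a model of \ax{GenRel} preserves the truth of \ax{COMPR} (which is a statement about observers and parametrically definable timelike curves, whose interpretation is unchanged), and the quantity part is the same $\mathfrak{Q}$. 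This yields the second inclusion by the identical argument.
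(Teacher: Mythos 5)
Your proposal is correct and follows essentially the same route as the paper's own (sketch) proof: identify \ax{AxPh^-} and \ax{AxSym^-} as the only axioms of \ax{GenRel} not already contained in $\ax{AccRel}+\ax{AxDiff}$, and derive them from \ax{AxCmv} together with the validity of \ax{AxPh} and \ax{AxSymD} for inertial observers in \ax{SpecRel}. Your added remarks about the first-order agreement with the comoving observer and the carrying over of \ax{COMPR} just make explicit what the paper leaves implicit in its ``On the proof'' sketch.
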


\begin{proof}[\colorbox{proofbgcolor}{\textcolor{proofcolor}{On the proof}}]
To prove this statement it is enough to show that the models of
\ax{AccRel} + \ax{AxDiff} are also models of \ax{GenRel}. Since
\ax{AxPh^-} and \ax{AxSym^-} are the only two axioms of \ax{GenRel}
which are not also contained in \ax{AccRel} + \ax{AxDiff}, we only
have to show that these two axioms are consequences of \ax{AccRel}.
Axioms \ax{AxPh^-} and \ax{AxSym^-} follow from \ax{AccRel} since they
are true for inertial observers in \ax{SpecRel} and by \ax{AxCmv}
accelerated observers locally see the world the same way as their
co-moving inertial observers.
\end{proof}

\begin{que}\label{que-gr}
Exactly which ordered fields are the elements of classes
$\num_{n}(\ax{AccRel} + \ax{COMPR})$
and $\num_{n}(\ax{GenRel} + \ax{COMPR})$?
\end{que}

Maybe the ordered field reducts of differentially closed fields of
Abraham Robinson, see \cite{MarkerDFC}, \cite{RobinsonDFC}, have to do
something with the answer to the question above.

\section{Proof of Theorem~\ref{thm-poi}}
\label{proof-poi}
In this section, we are going to prove Theorem~\ref{thm-poi}. To do
so, let us recall a version of Alexandrov--Zeeman theorem generalized
over fields. To state this theorem, we need some concepts.  Map
$\q:\Q^d\rightarrow\Q$ is a {\bf quadratic form} if 
\begin{equation}\label{eq-q}
\q(\lambda\vx)=\lambda^2\q(\vx)
\end{equation}
for all $\lambda\in\Q$ and $\vx\in\Q^d$, and 
\begin{equation}
(\vx,\vy)_\q\de \q(\vx+\vy)-\q(vx)-\q(\vy)
\end{equation}
 is a symmetric bilinear form.
 Quadratic form $\q$ is {\bf
  non-degenerate} if 
\begin{equation*}
\forall \vx\va\enskip (\vx,\va)_\q=0 \land \q(\va)=0\rightarrow
\va=\vo.
\end{equation*}
A map $f:\Q^d\rightarrow\Q^d$ is
called a {\bf semilinear map} iff there is a field automorphism $\alpha$
such that 
\begin{equation*}
f(\vx+\vy)=f(\vx)+f(\vy) \enskip\text{ and }\enskip f(\lambda
\vx)=\alpha(\lambda)f(\vx)
\end{equation*}
for all $\vx,\vy\in\Q^d$ and
$\lambda\in\Q$.  {\bf Witt index} of quadratic form $\q$ is the
maximal dimension of a subspace $X$ of $\Q^d$ with the property
$\q(\vx)=0$ for all $\vx\in X$.  {\bf $\q$-null cone} with vertex
$\va\in\Q^d$ is defined as 
\begin{equation*}
C(\va)=\{\vx: \q(\vx-\va)=0\}.
\end{equation*}
 
Now we are ready to recall the version of Alexandrov-Zeeman theorem we
need, see \cite{vro}, \cite{VKK}:
\begin{thm}[Vroegindewey]\label{thm-v}
Let $\langle\Q,+,\cdot\rangle$ be an commutative field. Let $d\ge 3$
and let $\q$ be a non-degenerate quadratic form
with Witt index 1. Then every bijection of $\Q^d$ taking $\q$-null
cones to $\q$-null cones is a composition of a translation and a
semilinear map $f$ with the property $\q\big(f(\vx)\big)=\lambda\alpha\big(\q(x)\big)$ for
some $\lambda\neq 0$ and field automorphism $\alpha$.
\end{thm}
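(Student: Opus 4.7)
The plan is to reconstruct the affine-linear structure of $\Q^d$ purely from null-cone incidences, apply the fundamental theorem of affine geometry to deduce semilinearity, and finally match the quadratic forms. Let $g$ denote the given bijection. Composing $g$ with the translation $\vx\mapsto\vx-g(\vo)$ reduces us to the case $g(\vo)=\vo$; call the resulting map $f$. The hypothesis $g(C(\va))=C(g(\va))$ for every $\va$ translates, by unwinding $\vx\in C(\va)$, into the pointwise statement that the binary relation $N(\vx,\vy)\Leftrightarrow\q(\vx-\vy)=0$ is preserved by $f$ in both directions.

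Next, exploit the Witt-index-one hypothesis to reconstruct lines from null incidences. Since totally isotropic subspaces have dimension at most $1$, a \emph{null line}---a line $L$ with $N(\vx,\vy)$ for every $\vx,\vy\in L$---is precisely an affine translate of a maximal isotropic direction. Because $f$ preserves $N$, it sends null lines to null lines. The geometric heart of the argument is to characterize arbitrary collinearity of three points $\va,\vp,\vq$ in terms of null incidences with auxiliary points, so that $f$ necessarily takes every affine line to an affine line. The dimension hypothesis $d\ge 3$ is essential here: through each point the null cone carries a $(d-2)$-parameter family of null directions, rich enough to triangulate any affine line through pairs of null lines meeting transversally; in dimension $2$ the null cone degenerates into two lines and the reconstruction fails. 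I expect this combinatorial/geometric step---showing that preservation of null lines forces preservation of all lines---to be the principal obstacle.

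Once $f$ takes lines to lines, the fundamental theorem of affine geometry (valid for $d\ge 2$ over any commutative field with at least three elements; small-field cases contain no non-degenerate quadratic form with Witt index $1$ in the required dimension) supplies a field automorphism $\alpha:\Q\to\Q$ such that $f$ is semilinear: $f(\vx+\vy)=f(\vx)+f(\vy)$ and $f(\lambda\vx)=\alpha(\lambda)f(\vx)$ for all $\vx,\vy\in\Q^d$ and $\lambda\in\Q$.

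Finally, I would verify the quadratic-form transformation law. Define $\q'(\vx)\de\alpha^{-1}\bigl(\q(f(\vx))\bigr)$. Semilinearity of $f$ together with the scaling identity \eqref{eq-q} shows that $\q'$ is again a quadratic form on $\Q^d$, and because $f$ preserves $N$ the form $\q'$ has the same null cone as $\q$. A standard lemma---two non-degenerate quadratic forms on a vector space that share their null cone must be proportional---then yields $\q'=\mu\q$ for some $\mu\ne 0$. Applying $\alpha$ to both sides gives $\q\bigl(f(\vx)\bigr)=\alpha(\mu)\,\alpha\bigl(\q(\vx)\bigr)$, which is exactly the claimed relation with $\lambda:=\alpha(\mu)\ne 0$. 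Together with the initial translation, this delivers the stated decomposition.
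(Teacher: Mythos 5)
First, a point of comparison: the paper does not prove Theorem~\ref{thm-v} at all. It is imported as a black box from Vroegindewey \cite{vro}, \cite{VKK} and then applied in the proof of Theorem~\ref{thm-decomp}. So there is no in-paper argument to measure your proposal against; it has to stand on its own as a proof of a known but nontrivial theorem.

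As written it does not stand, because the step you yourself label ``the principal obstacle'' --- passing from ``$f$ permutes null lines'' to ``$f$ permutes all lines'' --- is the entire content of the theorem, and you offer no argument for it beyond a dimension count. The surrounding material is fine and essentially forced: normalizing $g(\vo)=\vo$; recovering the vertex of a cone from the cone as a point set (it is the unique point lying on more than one null line of the cone, which uses Witt index $1$); the fact that three pairwise null-related points are collinear (this is exactly the paper's Lemma~\ref{lem-noltr} in the case $\q=\mu^2_c$), whence $f$ permutes null lines; and the endgame via the fundamental theorem of affine geometry plus the proportionality of two non-degenerate \emph{isotropic} forms sharing a zero set. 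But the heart of every Alexandrov--Zeeman-type proof is a synthetic reconstruction of arbitrary collinearity (or of the parallelism/affine structure) from null-cone incidences alone --- for instance, realizing a non-null line as the locus of vertices of null cones containing two fixed null lines, or building it from intersections of cones --- followed by a verification that this description is preserved by $f$. Saying the family of null directions is ``rich enough to triangulate any affine line'' names the goal, not a construction; until that construction is supplied, the proposal is an outline of the standard strategy rather than a proof. Two smaller caveats: your parenthetical dismissal of small fields is not accurate as stated (finite fields do carry non-degenerate Witt-index-$1$ forms in dimension $3$), though it is harmless here since the hypothesis $|\Q|\ge 3$ needed for the fundamental theorem of affine geometry only excludes the two-element field; and the ``standard lemma'' that forms with a common null cone are proportional genuinely requires isotropy and characteristic $\neq 2$, so those hypotheses should be invoked explicitly.
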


We are going to apply Theorem~\ref{thm-v} to the worldview transformations of inertial observers in \ax{SpecRel}.
To do so, we need several definitions and lemmas. 

For all $c>0$, let us define the $c\,${\bf-Minkowski quadratic form} as
\begin{equation*}
\mu^2_c(\vx)=c\cdot x_1^2-x_2^2-\ldots-x_d^2.
\end{equation*}

\begin{lem}\label{lem-0}
Assume \ax{AxOField}. Let $\vx\in\Q^d$ be such that $x_1=0$ and $\mu^2_c(\vx)=0$. Then $\vx=\vo$.
\end{lem}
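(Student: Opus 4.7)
The plan is to unfold the definition of $\mu^2_c$ under the hypothesis $x_1=0$ and then exploit the order-field structure of $\Q$.

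Substituting $x_1=0$ into $\mu^2_c(\vx)=c\cdot x_1^2-x_2^2-\ldots-x_d^2=0$ yields
\begin{equation*}
x_2^2+\ldots+x_d^2=0.
\end{equation*}
By \ax{AxOField}, the square of any element of $\Q$ is non-negative (this follows from the two order axioms: $0\le a$ gives $0\le a^2$ by axiom (ii), while for $a<0$ one adds $-a$ to both sides using axiom (i) to get $0\le -a$ and hence $0\le (-a)^2=a^2$). Therefore each $x_i^2\ge 0$ for $i=2,\ldots,d$.

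The key step is the standard order-theoretic observation that a sum of non-negative elements in a linearly ordered abelian group can vanish only if each summand vanishes; I would prove this by a short induction on $d-1$, or directly: if some $x_i^2>0$ then, since all other $x_j^2\ge 0$, adding them preserves the strict inequality, contradicting the sum being $0$. Hence $x_i^2=0$, and since $\Q$ is a field (no zero divisors) this forces $x_i=0$ for every $i\in\{2,\ldots,d\}$. Together with the hypothesis $x_1=0$, this gives $\vx=\vo$.

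There is no real obstacle here; the lemma is essentially a direct unpacking of definitions, the only non-trivial ingredient being the sign-of-squares property which is a routine consequence of \ax{AxOField}.
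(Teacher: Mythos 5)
Your proof is correct and follows essentially the same route as the paper: substitute $x_1=0$ to get $x_2^2+\ldots+x_d^2=0$ and conclude $x_2=\ldots=x_d=0$ from the ordered-field structure. The paper simply asserts this last implication for ordered fields, while you fill in the routine sign-of-squares and no-zero-divisors details.
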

\begin{proof}
Since $x_1=0$ and $\mu^2_c(\vx)=0$, we have that $x_2^2+\ldots+
x_d^2=0$. This implies that $x_2=\ldots= x_d=0$ in ordered
fields. Hence $\vx=\vo$ as stated.
\end{proof}

\begin{rem}
Lemma~\ref{lem-0} is not valid in every field. For example, in the
field of complex numbers $\vx=\langle 0,1, i\rangle$ is a nonzero vector but $x_1=0$ and $\mu^2_1(\vx)=0$. 
\end{rem}

\begin{lem}\label{lem-witt}
Assume \ax{AxOField}. Let $c>0$. Then Minkowski quadratic form $\mu^2_c$ has Witt index 1.
\end{lem}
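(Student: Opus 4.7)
The plan is to establish $\mu_c^2$ has Witt index equal to $1$ by showing both (i) no totally isotropic subspace of $\Q^d$ has dimension $\ge 2$, and (ii) a one-dimensional totally isotropic subspace exists. Of these, (i) is the content-bearing step; (ii) is essentially free in the intended applications.

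For (i), I would consider an arbitrary totally isotropic subspace $X \subseteq \Q^d$ and look at the $\Q$-linear projection $\pi \colon X \to \Q$ given by $\vx \mapsto x_1$. The point is that $\pi$ is injective: if $\vx \in X$ has $x_1 = 0$, then total isotropy gives $0 = \mu_c^2(\vx) = -(x_2^2 + \ldots + x_d^2)$, and Lemma~\ref{lem-0} (a sum of squares vanishes in an ordered field only when each summand does) forces $\vx = \vo$. A linear injection into the one-dimensional $\Q$-space $\Q$ has domain of dimension at most one, so $\dim X \le 1$.

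For (ii), whenever $c$ is a square in $\Q$, say $c = s^2$, the vector $\vx = \langle 1, s, 0, \ldots, 0\rangle$ is nonzero and satisfies $\mu_c^2(\vx) = s^2 - s^2 = 0$, so $\Q\vx$ is a one-dimensional totally isotropic subspace. In the intended application (the proof of Theorem~\ref{thm-poi}) the parameter $c$ is the squared speed of light $c_m^2$ of an inertial observer, hence manifestly a square.

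The main obstacle is really no obstacle: once one notices that an isotropic vector is pinned down by its time coordinate, the upper bound is a one-line linear-algebra computation whose only algebraic input is the order-theoretic positivity of sums of squares, already recorded in Lemma~\ref{lem-0}. Producing an isotropic line reduces to extracting a square root of $c$, which is available in the settings where the lemma is actually invoked.
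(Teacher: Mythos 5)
Your upper-bound argument is essentially the paper's proof in a cleaner dress: the paper takes two vectors $\vx,\vy$ spanning a totally isotropic subspace, forms $\vz=y_1\vx-x_1\vy$, notes $z_1=0$ and $\mu^2_c(\vz)=0$, and invokes Lemma~\ref{lem-0} to get $\vz=\vo$; your observation that the first-coordinate projection is injective on any totally isotropic subspace is the same idea, and it handles the degenerate case $x_1=y_1=0$ (where $y_1\vx=x_1\vy$ is vacuous) more uniformly. Your point (ii) is worth keeping: the paper's proof establishes only that the Witt index is at most $1$ and then asserts it equals $1$, silently skipping the existence of an isotropic line. As you note, that existence is not automatic for arbitrary $c>0$ over an arbitrary ordered field --- a nonzero isotropic vector exists exactly when $c$ is a sum of $d-1$ squares in $\Q$ (e.g.\ over $\rac$ with $d=3$ and $c=7$ there is none, so the Witt index would be $0$) --- but it is free in every application here, since the lemma is only ever used for $\mu^2_1$ (after the rescalings $S_{\ls_m}$, $T_{\ls_m}$ in Theorem~\ref{thm-decomp}), where $\langle 1,1,0,\ldots,0\rangle$ does the job. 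So your proposal is correct and, on the substantive step, coincides with the paper's; the extra care about the lower bound is a small but genuine improvement over the printed proof.
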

\begin{proof}
Let $\vx$ and $\vy$ be vectors such that
$\mu^2_c(\alpha\vx+\beta\vy)=0$ for all $\alpha,\beta\in\Q$. Let
$\vz=y_1\vx-x_1\vy$. Then $z_1=0$
and $\mu^2_c(\vz)=0$. Hence, by Lemma~\ref{lem-0}, $\vz=\vo$. So
$y_1\vx=x_1\vy$. Therefore, the subspace spanned by $\vx$ and $\vy$ is
$1$ dimensional. Thus the Witt index of $\mu^2_c$ is 1 as stated.
\end{proof}

The {\bf squared slope} of line $l$ is defined as
\begin{equation*}
\sqslope(l)=\frac{\sqspace(\vx,\vy)}{\timed(\vx,\vy)^2}
\end{equation*}
for all $\vx,\vy\in l$ for wihc $x_1\neq y_1$.

\begin{lem}\label{lem-noltr}
Assume \ax{AxOField}. Let $c>0$. There is no non-degenerate triangle
whose every side is of squared slope $c$.
\end{lem}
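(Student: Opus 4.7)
The plan is to pass from the geometric formulation to one about the Minkowski bilinear form $B_c(\vx,\vy)\de c\,x_1 y_1-x_2 y_2-\ldots-x_d y_d$ associated with $\mu_c^2$, observe that the three sides being $\mu_c^2$-null forces a bilinear-orthogonality relation, and then close the argument using the Lagrange identity on the spatial parts.

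First I would set up notation: suppose for contradiction that $\va,\vb,\vc\in\Q^d$ form a non-degenerate triangle with $\sqslope$ equal to $c$ on each side. Since squared slope is only defined for sides on which the time coordinate is non-constant, none of the vectors $\vu\de \vb-\va$, $\vv\de \vc-\va$, $\vv-\vu$ has vanishing first coordinate. The condition $\sqslope=c$ rewrites as $\mu_c^2(\vu)=\mu_c^2(\vv)=\mu_c^2(\vv-\vu)=0$. Polarizing the third equation via $\mu_c^2(\vv-\vu)=\mu_c^2(\vv)-2 B_c(\vu,\vv)+\mu_c^2(\vu)$ yields $B_c(\vu,\vv)=0$ as well.

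Next I would split each vector into time and space parts. Writing $\vu_s\de(u_2,\ldots,u_d)$ and $\vv_s\de(v_2,\ldots,v_d)$, and letting $\langle\cdot,\cdot\rangle$ denote the standard Euclidean inner product, the three null conditions become
\begin{equation*}
\langle\vu_s,\vu_s\rangle=c\,u_1^2,\qquad
\langle\vv_s,\vv_s\rangle=c\,v_1^2,\qquad
\langle\vu_s,\vv_s\rangle=c\,u_1 v_1.
\end{equation*}
Since $c>0$ and $u_1\neq 0$, we have $\vu_s\neq\vo$. The key step is then the Lagrange identity, which is purely algebraic and so holds over any ordered field:
\begin{equation*}
\langle\vu_s,\vu_s\rangle\langle\vv_s,\vv_s\rangle-\langle\vu_s,\vv_s\rangle^2=\sum_{2\le i<j\le d}(u_i v_j-u_j v_i)^2.
\end{equation*}
The left-hand side equals $c^2 u_1^2 v_1^2-c^2 u_1^2 v_1^2=0$, so every summand on the right vanishes, forcing $\vv_s=\lambda\vu_s$ for some $\lambda\in\Q$ (this is where \ax{AxOField} is used, via the fact that a sum of squares equals zero only if each square does).

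Finally, substituting back into $\langle\vu_s,\vv_s\rangle=c\,u_1 v_1$ gives $\lambda c u_1^2=c u_1 v_1$, hence $v_1=\lambda u_1$, so $\vv=\lambda\vu$. This means $\vc-\va$ is a scalar multiple of $\vb-\va$, contradicting the assumption that the triangle is non-degenerate. The only real obstacle I foresee is the temptation to invoke Cauchy--Schwarz as an inequality and then take the equality case; using the Lagrange identity directly sidesteps this and keeps the argument valid in any ordered field rather than requiring a metric or completeness.
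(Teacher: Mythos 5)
Your proof is correct, and it departs from the paper's at the decisive final step. Both arguments open the same way: the three slope conditions say that the two edge vectors and their sum/difference are $\mu^2_c$-null, and polarizing yields the vanishing of the associated bilinear form on the pair. From there the paper concludes that the entire plane spanned by the edge vectors is $\mu^2_c$-null and invokes Lemma~\ref{lem-witt} (Witt index $1$), whose proof forms the combination $q_1\vp-p_1\vq$, observes that it has zero time component and is null, and annihilates it via Lemma~\ref{lem-0}. You instead work with the spatial components and apply the Lagrange identity, so that the vanishing of the spatial Gram determinant forces every $2\times 2$ minor $u_iv_j-u_jv_i$ ($2\le i<j\le d$) to vanish; proportionality of the spatial parts then propagates to the time coordinates through the cross relation $\langle\vu_s,\vv_s\rangle=c\,u_1v_1$. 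Both closings rest on the same ordered-field fact that a sum of squares vanishes only termwise, but yours is more self-contained, whereas the paper's detour through the Witt index costs nothing in context because that lemma is needed anyway as a hypothesis of the Vroegindewey theorem (Theorem~\ref{thm-v}) applied later. Your remark about avoiding Cauchy--Schwarz as an inequality is apt (the paper does use it, but only in Lemma~\ref{lem-t1}); the one point to make explicit is that extracting the scalar $\lambda$ from the vanishing minors requires $\vu_s\neq\vo$, which you correctly secure from $c>0$ and $u_1\neq0$.
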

\begin{proof}
Let $\vx$, $\vy$, and $\vz$ be the vertices  of a triangle whose sides are of
squared slope $c$.  Then $c\cdot\timed(\vx,\vy)^2=\sqspace(\vx, \vy)$, $c\cdot\timed(\vy,\vz)^2=\sqspace(\vy,\vz)$, and
$c\cdot\timed(\vz,\vx)^2=\sqspace(\vz,\vx)$.
Let $\vp=\vy-\vx$ and $\vq=\vz-\vy$. Then
\begin{equation}\label{p}
c p_1^2=p_2^2+\ldots +p_d^2,
\end{equation}
\begin{equation}\label{q}
c q_1^2=q_2^2+\ldots +q_d^2,\text{ and}
\end{equation}
\begin{equation}\label{pq}
c(p_1+q_1)^2=(p_2+q_2)^2+\ldots +(p_d+q_d)^2.
\end{equation}
In other words $\mu^2_c(\vp)=\mu^2_c(\vq)=\mu^2_c(\vp+\vq)=0$. 
By subtracting equations \eqref{p} and \eqref{q} from equation \eqref{pq}, we get
\begin{equation}
2cp_1 q_1=2p_2q_2+\ldots +2p_dq_d.
\end{equation}
Let $\alpha$ and $\beta$ be arbitrary elements of $\Q$. Then 
\begin{multline}
\mu^2_c(\alpha\vp+\beta\vq)\\=\alpha^2\mu^2_c(\vp)+2\alpha\beta(c p_1 q_1-p_2q_2-\ldots -p_dq_d)+ \beta^2\mu^2_c(\vq)
\end{multline}
 for all $\alpha,\beta\in\Q$. Therefore,
 $\mu^2_c(\alpha\vp+\beta\vq)=0$.  By Lemma~\ref{lem-witt}, $\mu^2_c$
 has Witt index $1$. So $\vp$ and $\vq$ are in the same 1 dimensional
 subspace of $\Q^d$. Hence $\vx$, $\vy=\vx+\vp$, and $\vz=\vx+\vp+\vq$
 are collinear.
\end{proof}

The {\bf $f$-image of set $H$} is defined as follows:
\begin{equation*}
f[H]=\left\{\,b:\exists a\big[ a\in H \land f(a)=b\big]\,\right\}.
\end{equation*}

\begin{prop}\label{prop-bij}
Assume \ax{AxOField}, \ax{AxEv}, and \ax{AxPh}. Let $m,k\in\IOb$. Then
$\w_{mk}$ is a bijection of $\Q^d$ taking lines of squared slope $\ls^2_m$ to
lines of squared slope $\ls^2_k$.
\end{prop}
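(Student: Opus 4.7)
The plan is to split the statement into three sub-claims: (i) $\w_{mk}$ is a total function on $\Q^d$; (ii) by the symmetric role of $m$ and $k$, both $\w_{mk}$ and $\w_{km}$ are functions and they are mutual inverses, so $\w_{mk}$ is a bijection; (iii) $\w_{mk}$ sends each line of squared slope $\ls_m^2$ onto a line of squared slope $\ls_k^2$. Throughout I will use that $\ls_m,\ls_k\in\Q$, so $\Q^d$ is equipped with explicit null vectors for the Minkowski forms $\mu^2_{\ls_m^2}$ and $\mu^2_{\ls_k^2}$.

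For totality, observe that $\sqspace(\vx,\vx)=0=\ls_m^2\cdot\timed(\vx,\vx)^2$, so \ax{AxPh} guarantees a photon at $\vx$ and $\ev_m(\vx)\neq\emptyset$ for every $\vx\in\Q^d$; then \ax{AxEv} supplies some $\vy$ with $\ev_m(\vx)=\ev_k(\vy)$, i.e.\ $\w_{mk}(\vx,\vy)$ holds. For functionality I must show: if $\ev_k(\vy)=\ev_k(\vy')$ is nonempty, then $\vy=\vy'$. For every $\vw$ in the null cone $L_\vy:=\{\vw:\mu^2_{\ls_k^2}(\vw-\vy)=0\}$, axiom \ax{AxPh} produces a photon $p$ with $\W(k,p,\vy)$ and $\W(k,p,\vw)$; since $p\in\ev_k(\vy)=\ev_k(\vy')$ one also has $\W(k,p,\vy')$, and the converse direction of \ax{AxPh} forces $\vw\in L_{\vy'}$. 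Hence $L_\vy\subseteq L_{\vy'}$, and by symmetry $L_\vy=L_{\vy'}$. Writing $\vu:=\vy'-\vy$, this equality says $\mu^2_{\ls_k^2}(\vv-\vu)=0$ for every null $\vv$. Expanding with the bilinear form $(\cdot,\cdot)_{\mu^2_{\ls_k^2}}$ and comparing the substitutions $\vv$ and $-\vv$ yields $\mu^2_{\ls_k^2}(\vu)=0$ together with $(\vv,\vu)_{\mu^2_{\ls_k^2}}=0$ for every null $\vv$. The $2(d-1)$ explicit null vectors $\langle 1,0,\ldots,0,\pm\ls_k,0,\ldots,0\rangle$ (with $\pm\ls_k$ in coordinate $i$ for $i=2,\ldots,d$) span $\Q^d$, so the linear functional $(\cdot,\vu)_{\mu^2_{\ls_k^2}}$ vanishes identically; non-degeneracy of $\mu^2_{\ls_k^2}$ then forces $\vu=\vo$. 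The analogous statement for $\w_{km}$ holds by swapping $m$ and $k$, so $\w_{mk}$ and $\w_{km}$ are mutually inverse bijections of $\Q^d$.

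For the line-preservation claim, let $\ell$ be a line of squared slope $\ls_m^2$ and pick any three distinct points $\vx_1,\vx_2,\vx_3\in\ell$; set $\vy_i:=\w_{mk}(\vx_i)$. Each pair $\vx_i,\vx_j$ is null-separated in $m$'s view, so \ax{AxPh} provides a photon through $\vx_i$ and $\vx_j$, which lies in $\ev_k(\vy_i)\cap\ev_k(\vy_j)$; the other direction of \ax{AxPh}, now applied to $k$, forces $\vy_i,\vy_j$ to be null-separated in $k$'s view. Thus $\vy_1,\vy_2,\vy_3$ are pairwise null-separated, and Lemma~\ref{lem-noltr} rules out any non-degenerate triangle, so they are collinear, lying on a single line $\ell'$ of squared slope $\ls_k^2$. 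Hence $\w_{mk}[\ell]\subseteq\ell'$; running the same argument for $\w_{km}=\w_{mk}^{-1}$ applied to $\ell'$ produces the reverse inclusion and upgrades this to equality $\w_{mk}[\ell]=\ell'$.

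The delicate step is the injectivity argument: passing from the set-theoretic identity $L_\vy=L_{\vy'}$ to the vector equality $\vy=\vy'$ over a possibly very small ordered field. The crucial ingredient is the abundance of concrete null vectors inherited from $\ls_k\in\Q$, together with non-degeneracy of the Minkowski form, both of which must be checked before the spanning argument can be invoked.
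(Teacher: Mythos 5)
Your proposal is correct, and its overall skeleton coincides with the paper's: totality of $\w_{mk}$ from \ax{AxPh} (nonempty events everywhere) plus \ax{AxEv}, bijectivity by the symmetric roles of $m$ and $k$, and line preservation by taking three distinct points of $\ell$, observing that their images are pairwise null-separated, invoking Lemma~\ref{lem-noltr} to force collinearity, and then using the inverse map $\w_{km}$ to upgrade $\w_{mk}[\ell]\subseteq\ell'$ to equality. The one place where you genuinely diverge is the injectivity step. The paper argues more economically: it takes just the two null directions $\langle 1,\pm\ls_m,0,\ldots,0\rangle$ through $\vx$, notes that at most one of the two resulting lines can contain $\vy\neq\vx$, and concludes (implicitly again via Lemma~\ref{lem-noltr}, which confines a photon's worldline to a single null line) that some photon lies in $\ev_m(\vx)$ but not in $\ev_m(\vy)$. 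You instead show that $\ev_k(\vy)=\ev_k(\vy')$ forces the full null cones $L_{\vy}$ and $L_{\vy'}$ to coincide, and then kill $\vu=\vy'-\vy$ by polarization: substituting $\vv$ and $-\vv$ into $\mu^2_{\ls_k^2}(\vv-\vu)=0$ gives $\mu^2_{\ls_k^2}(\vu)=0$ and $(\vv,\vu)_{\mu^2_{\ls_k^2}}=0$ for all null $\vv$, and the explicit null vectors $\langle 1,0,\ldots,\pm\ls_k,\ldots,0\rangle$ span $\Q^d$, so non-degeneracy forces $\vu=\vo$. Your version is self-contained linear algebra over an arbitrary ordered field and does not route through the no-null-triangle lemma at this point; the paper's version is shorter and reuses machinery it needs anyway for the line-preservation step. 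Both are valid, and your closing remark correctly identifies where the care is needed (the spanning set of rational-coefficient null vectors and non-degeneracy of the form).
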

\begin{proof}
Let $m\in\IOb$ and let $\vx$ and $\vy$ be two distinct coordinate
points. Let $\vv\de\langle 1, \ls_m,0,\ldots, 0\rangle$ and $\vu\de\langle
1, -\ls_m,0,\ldots,0\rangle$.  By \ax{AxOField}, at most one of the
lines 
\begin{equation*}
l_1\de\{\vx+\lambda\cdot\vv :\lambda\in\Q\}\enskip\text{ and }\enskip l_2\de\{\vx+\lambda\cdot
\vu:\lambda\in\Q\}
\end{equation*}
can contain $\vy$ since $a\vv=b\vu$ implies
$a=b=0$. So, by \ax{AxPh}, there is a light signal which is in
$\ev_m(\vx)$ but not in $\ev_m(\vy)$ since $\sqslope(l_1)=\sqslope(l_2)=\ls^2_m$. Thus inertial observers see
different events at different coordinate points, i.e.,
$\ev_m(\vx)=\ev_m(\vy)$ implies $\vx=\vy$. Therefore, binary relation
$\w_{mk}$ is an injective function for all $m,k\in\IOb$.

Let $m,k\in\IOb$. By \ax{AxPh}, every inertial observer sees a
nonempty event in every coordinate point.  By \ax{AxEv}, inertial
observers coordinatize the same events. Therefore, for all $\vx\in\Q^d$, there
is a $\vy\in\Q^d$ such that $\w_{mk}(\vx)=\vy$. So $\dom \w_{mk}=\ran
\w_{km}=\Q^d$.  Consequently, $\w_{mk}$ is a bijection of $\Q^d$ for
all $m,k\in\IOb$.

Now we show that $\w_{mk}$ takes lines of squared slope $\ls^2_m$ to
lines of squared slope $\ls^2_k$.  To do so, let $l$ be a line of
squared slope $\ls^2_m$ and let $\vx$, $\vy$, $\vz$ be three distinct
points of $l$. By \ax{AxPh}, there are light signals $p_{xy}$,
$p_{yz}$, and $p_{zx}$ such that $p_{xy},p_{zx}\in\ev_m(\vx)$,
$p_{yz},p_{xy}\in\ev_m(\vy)$, and $p_{zx},p_{yz}\in\ev_m(\vz)$. Since
$\w_{mk}$ is a bijection, $\w_{mk}(\vx)$, $\w_{mk}(\vy)$, and
$\w_{mk}(\vz)$ are also distinct points. By the definition of
$\w_{mk}$, we have $p_{xy},p_{zx}\in\ev_k\big(\w_{mk}(\vx)\big)$,
$p_{yz},p_{xy}\in\ev_k\big(\w_{mk}(\vy)\big)$, and
$p_{zx},p_{yz}\in\ev_k\big(\w_{mk}(\vz)\big)$. So, by \ax{AxPh},
coordinate points $\w_{mk}(\vx)$, $\w_{mk}(\vy)$, and $\w_{mk}(\vz)$
form a triangle such that all of its sides are of squared slope
$\ls^2_k$. Therefore, by Lemma~\ref{lem-noltr}, they have to be on a
line of squared slope $\ls^2_k$. So the $\w_{mk}$-image of $l$ is a
subset of a line of $\ls^2_k$. Since $m$ and $k$ were arbitrary
inertial observers we also have that the $\w_{km}$-image of the line
containing $\w_{mk}[l]$ is the subset of a line of squared slope
$\ls^2_m$. Since $\w_{mk}$ is a bijection and its inverse is $\w_{km}$, we
have that $\w_{km}\big[\w_{mk}[l]\big]=l$. Consequently, $\w_{mk}[l]$
cannot be a proper subset of a line, but it has to be a whole line of squared
slope $\ls^2_k$.  This completes the proof of the proposition.
\end{proof}

\begin{cor}
Assume \ax{SpecRel}. Let $m$ and $k$ be inertial observers. Then
$\w_{mk}$ is a bijection of $\Q^d$ taking lines of squared slope 1 to
lines of squared slope 1. \qed
\end{cor}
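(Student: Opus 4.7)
The plan is to observe that this corollary is essentially the specialization of Proposition~\ref{prop-bij} to the case where the speed of light equals $1$, so the only genuine content beyond the proposition is checking that \ax{SpecRel} forces $\ls_m = 1$ for every inertial observer $m$. Once that is in hand, the ``squared slope $\ls_m^2$'' in the source and ``squared slope $\ls_k^2$'' in the target of Proposition~\ref{prop-bij} both become ``squared slope $1$'', and the statement is immediate.

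To check $\ls_m = 1$: the second clause of \ax{AxSymD} (which is part of \ax{SpecRel}) directly asserts that for any inertial observer $m$ there is a photon $p$ with $\W(m,p,0,\ldots,0)$ and $\W(m,p,1,1,0,\ldots,0)$. By the defining property of $\ls_m$ supplied by \ax{AxPh}, we then have
\begin{equation*}
\sqspace\bigl(\langle 0,\ldots,0\rangle,\langle 1,1,0,\ldots,0\rangle\bigr)
=\ls_m^{2}\cdot\timed\bigl(\langle 0,\ldots,0\rangle,\langle 1,1,0,\ldots,0\rangle\bigr)^{2},
\end{equation*}
that is, $1 = \ls_m^{2}\cdot 1$, and since $\ls_m>0$ by \ax{AxPh} we conclude $\ls_m = 1$. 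The same argument applied to $k$ gives $\ls_k = 1$.

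Now invoke Proposition~\ref{prop-bij}, whose hypotheses \ax{AxOField}, \ax{AxEv}, \ax{AxPh} are all contained in \ax{SpecRel}: $\w_{mk}$ is a bijection of $\Q^d$ taking lines of squared slope $\ls_m^{2}=1$ to lines of squared slope $\ls_k^{2}=1$, which is exactly the assertion of the corollary. There is no real obstacle here; the only thing one must be careful about is not to confuse the roles of $m$ and $k$, but since both light-speeds are $1$, source and target classes of lines coincide.
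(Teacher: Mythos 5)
Your proof is correct and matches the paper's intent exactly: the corollary is stated with an immediate \qed precisely because it is Proposition~\ref{prop-bij} specialized via the second clause of \ax{AxSymD}, which forces $\ls_m=\ls_k=1$ just as you argue. Nothing further is needed.
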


Let us call a liner bijection of $\Q^d$ {\bf almost Lorentz
  transformation} iff there is a $\lambda\neq 0$ such that
$\mu^2_1\big(A(\vx)\big)=\lambda\mu^2_1(\vx)$ for all $\vx\in\Q^d$. 

We think of functions as special binary relations. Hence we compose
them as relations.  The {\bf composition} of binary relations $R$ and
$S$ is defined as:
\begin{equation*}\label{rcomp}
{R \fatsemi S}\de \{\langle a,c\rangle: \exists b\enskip R(a,b)\land  S(b,c) \}.
\end{equation*}
So $(g\fatsemi f)(x)=f\big(g(x)\big)$ if $f$ and $g$ are functions. We
will also use the notation $x\fatsemi g\fatsemi f$ for $(g\fatsemi
f)(x)$ because the latter is easier to grasp. In the same spirit, we will
sometimes use the notation $x\fatsemi f$ for $f(x)$. 
The {\bf inverse} of $R$ is defined as:
\begin{equation*}\label{rinv}
{R^{-1}}\de \{\langle a,b\rangle:  R(b,a) \}.
\end{equation*}

Let us introduce, for all $c>0$, the {\bf spatial distance and time
rescaling maps} as
\begin{equation*}
S_c(\vx)=\langle x_1,cx_2,\ldots, cx_d\rangle\enskip\text{ and }\enskip T_c(\vx)=\langle
cx_1, x_2,\ldots, x_d\rangle
\end{equation*}
for all $\vx\in\Q^d$. It is clear that
$T^{-1}_c=T_{1/c}$ and $S^{-1}_c=S_{1/c}$.

Let $\alpha$ be an automorphism of field $\langle
\Q,\cdot,+\rangle$ and let $\tilde\alpha$  be the map
$\tilde\alpha(\vx)=\langle \alpha(x_1),\ldots, \alpha(x_d) \rangle$ for all $\vx\in\Q^d$.
A map from $\Q^d$ to $\Q^d$ is called {\bf automorphism-induced-map}
if it is the form $\tilde\alpha$ for some automorphism
$\alpha$.\footnote{Let us note that we have not required that $\alpha$
  is order preserving.}

\begin{thm}\label{thm-decomp}
Let $d\ge3$.  Assume \ax{AxOField}, \ax{AxEv}, and \ax{AxPh}. Let
$m,k\in\IOb$. Then 
\begin{itemize}
\item $\w_{mk}=S^{-1}_{\ls_m}\fatsemi A\fatsemi
\tilde\alpha\fatsemi T\fatsemi S_{\ls_k}$ where $T$ is  a translation, $A$ is an
almost Lorentz transformation and  $\alpha$ is field automorphism.
\item $\w_{mk}=T_{\ls_m} \fatsemi A\fatsemi
\tilde\alpha\fatsemi T\fatsemi T^{-1}_{\ls_k}$ where $T$ is  a translation, $A$ is an
almost Lorentz transformation and  $\alpha$ is field automorphism.
\end{itemize}
\end{thm}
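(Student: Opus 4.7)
The strategy is to reduce to Vroegindewey's theorem (Theorem~\ref{thm-v}) applied to the standard Minkowski form $\mu^2_1$. By Proposition~\ref{prop-bij}, $\w_{mk}$ is a bijection taking squared-slope-$\ls^2_m$ lines to squared-slope-$\ls^2_k$ lines, but Theorem~\ref{thm-v} requires a single quadratic form. The plan is therefore to conjugate $\w_{mk}$ by suitable rescalings so that the resulting bijection preserves squared-slope-$1$ lines, apply Theorem~\ref{thm-v}, and then undo the rescalings.

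For the first form, set $\Phi \de S_{\ls_m} \fatsemi \w_{mk} \fatsemi S^{-1}_{\ls_k}$. Since $S_c$ rescales spatial coordinates by $c$, it multiplies squared slopes by $c^2$; chasing slopes through the three factors shows that both $\Phi$ and $\Phi^{-1}$ send squared-slope-$1$ lines onto squared-slope-$1$ lines. Next, the $\mu^2_1$-null cone $C(\va)$ equals the union of squared-slope-$1$ lines through $\va$: by Lemma~\ref{lem-0} every nonzero null vector has $x_1 \ne 0$, hence generates a non-vertical line of squared slope $1$ through $\va$. For $d \ge 3$ there are enough such lines that $\va$ is their unique common point, and applying $\Phi^{-1}$ for the reverse inclusion gives $\Phi[C(\va)] = C(\Phi(\va))$ for every $\va$. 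Since $\mu^2_1$ is non-degenerate (its associated bilinear form is diagonal with nonzero entries) and has Witt index $1$ by Lemma~\ref{lem-witt}, Theorem~\ref{thm-v} applies and yields a field automorphism $\alpha$, a semilinear map $f$ with $\mu^2_1(f(\vx)) = \lambda\,\alpha(\mu^2_1(\vx))$ for some $\lambda \ne 0$, and some $\bar c \in \Q^d$ so that $\Phi(\vx) = f(\vx) + \bar c$.

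Now I would split $f$ by defining $A(\vx) \de \tilde\alpha^{-1}(f(\vx))$. Additivity of $f$ and $\tilde\alpha^{-1}$ makes $A$ additive, and the semilinearity $f(\lambda\vx) = \alpha(\lambda)\,f(\vx)$ combined with $\tilde\alpha^{-1}(\alpha(\lambda)\,\vy) = \lambda\,\tilde\alpha^{-1}(\vy)$ makes $A$ $\Q$-linear. The identity $\mu^2_1(\tilde\alpha(\vy)) = \alpha(\mu^2_1(\vy))$ then gives $\mu^2_1(A(\vx)) = \alpha^{-1}(\lambda)\,\mu^2_1(\vx)$, so $A$ is almost Lorentz. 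Since $f(\vx) = \tilde\alpha(A(\vx))$, we obtain $\Phi(\vx) = \tilde\alpha(A(\vx)) + \bar c = (A \fatsemi \tilde\alpha \fatsemi T)(\vx)$ with $T$ the translation by $\bar c$, whence $\w_{mk} = S^{-1}_{\ls_m} \fatsemi A \fatsemi \tilde\alpha \fatsemi T \fatsemi S_{\ls_k}$.

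The second form is produced by the same argument with the spatial rescaling $S_c$ replaced by the temporal rescaling $T_c$: since $T_c$ divides squared slopes by $c^2$, the bijection $T^{-1}_{\ls_m} \fatsemi \w_{mk} \fatsemi T_{\ls_k}$ is again squared-slope-$1$-preserving, so the same derivation applies verbatim. The main obstacle is the reduction step for Theorem~\ref{thm-v}: showing that a bijection merely preserving squared-slope-$1$ lines actually preserves whole $\mu^2_1$-null cones with vertices tracked correctly, which is precisely where Lemma~\ref{lem-0} and the assumption $d \ge 3$ are essential.
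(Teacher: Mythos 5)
Your proof is correct and follows essentially the same route as the paper's: conjugate $\w_{mk}$ by the rescaling maps, use Proposition~\ref{prop-bij} to see that the conjugated bijection preserves squared-slope-$1$ lines and hence $\mu^2_1$-null cones, apply Theorem~\ref{thm-v}, and then split the resulting semilinear map into an almost Lorentz transformation and an automorphism-induced map (the paper packages this last step as Lemma~\ref{lem-aldecomp}). The only detail you elide is that the automorphism occurring in the semilinearity of $f$ need not a priori be the same $\alpha$ as in $\mu^2_1\big(f(\vx)\big)=\lambda\alpha\big(\mu^2_1(\vx)\big)$; the paper proves the two coincide via Lemma~\ref{lem-aut}, using that an ordered field has characteristic $\neq 2$.
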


\begin{proof}
By definitions, $S_c$ and $T^{-1}_c$ are linear bijections of $\Q^d$
taking lines of squared slope $1$ to lines of squared slope
$c^2$. Therefore, by Proposition~\ref{prop-bij}, both maps
$S_{\ls_m}\fatsemi\w_{mk}\fatsemi S_{\ls_k}^{-1}$ and
$T_{\ls_m}^{-1}\fatsemi\w_{mk}\fatsemi T_{\ls_k}$ are bijections of
$\Q^d$ taking lines of squared slope 1 to lines of squared slope 1.
Since the $\mu^2_1$-null cone $C(\va)$ is the union of lines of
squared slope $1$ through $\va$, both
$S_{\ls_m}\fatsemi\w_{mk}\fatsemi S_{\ls_k}^{-1}$ and
$T_{\ls_m}^{-1}\fatsemi\w_{mk}\fatsemi T_{\ls_k}$ map $\mu^2_1$-null
cones to $\mu^2_1$-null cones. Therefore, by Theorem~\ref{thm-v} and
Lemma~\ref{lem-aldecomp}, they are compositions of an almost Lorentz
transformation $A$, a field-automorphism-induced map $\tilde\alpha$,
and a translation $T$.
\end{proof}

Some of the following statements assume only that the quantity part is
a field. Therefore, let us introduce the following axiom:
\begin{description}
\item[\underline{\ax{AxField}}]
 The quantity part $\langle \Q,+,\cdot\rangle$ is a (commutative) field.
\end{description}

\begin{lem}\label{lem-aut}
Assume \ax{AxField} and that $1+1\neq0$. Let $\alpha$ and $\beta$ be
two automorphisms of $\langle\Q,+,\cdot\rangle$ such that
$\alpha(a)^2=\beta(a)^2$ for all $a\in\Q$. Then $\alpha=\beta$.
\end{lem}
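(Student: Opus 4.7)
The plan is to reduce to a single automorphism and then exploit the equation $\gamma(a)^2=a^2$ together with the hypothesis $1+1\neq 0$.

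First I would set $\gamma\de\beta^{-1}\circ\alpha$. Since $\alpha$ and $\beta$ are field automorphisms, so is $\gamma$, and applying $\beta^{-1}$ to $\alpha(a)^2=\beta(a)^2$ (together with the fact that $\beta^{-1}$ is a ring homomorphism) yields $\gamma(a)^2=a^2$ for every $a\in\Q$. Our goal then becomes showing $\gamma=\mathsf{id}$, because that gives $\alpha=\beta$.

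Next, from $\gamma(a)^2=a^2$ we factor $(\gamma(a)-a)(\gamma(a)+a)=0$ in the field $\Q$, so for each $a\in\Q$ we have either $\gamma(a)=a$ or $\gamma(a)=-a$. Define
\begin{equation*}
A\de\{a\in\Q:\gamma(a)=a\},\qquad B\de\{a\in\Q:\gamma(a)=-a\}.
\end{equation*}
Then $A\cup B=\Q$ and, using $1+1\neq 0$, $A\cap B=\{0\}$, since $a=-a$ forces $2a=0$ and hence $a=0$. Clearly $1\in A$ (as $\gamma(1)=1$).

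The core step is to show $B=\{0\}$. Suppose, toward a contradiction, that $c\in B$ with $c\neq 0$. Consider $1+c\in\Q=A\cup B$ and compute $\gamma(1+c)=\gamma(1)+\gamma(c)=1-c$. If $1+c\in A$ then $1-c=1+c$, forcing $2c=0$, hence $c=0$, a contradiction. If $1+c\in B$ then $1-c=-(1+c)$, forcing $2=0$, contradicting $1+1\neq 0$. Thus $B=\{0\}$, so $\gamma(a)=a$ for all $a\in\Q$, i.e.\ $\alpha=\beta$.

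I do not foresee a serious obstacle; the only place the hypothesis $1+1\neq 0$ enters is to make the two possibilities $\gamma(a)=\pm a$ genuinely distinct (for $a\neq 0$) and to rule out the case $1+c\in B$ above. In characteristic $2$ the lemma would be vacuous in the sense that squaring is injective and every automorphism would automatically satisfy the hypothesis without being determined by it.
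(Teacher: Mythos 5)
Your proof is correct and is essentially the paper's own argument: the key step in both is to evaluate the automorphism at $1+a$ for an element $a$ where the "wrong sign" occurs and derive $2a=0$ or $2=0$, using $1+1\neq 0$. The only cosmetic difference is that you first reduce to the single automorphism $\gamma=\beta^{-1}\circ\alpha$ satisfying $\gamma(a)^2=a^2$, whereas the paper works with $\alpha$ and $\beta$ directly.
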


\begin{proof}
For all $a\in\Q$, we have that $\alpha(a)=\beta(a)$ or
$\alpha(a)=-\beta(a)$.  Let $a\in\Q$ such that $\alpha(a)=-\beta(a)$.
Then $\alpha(1+a)=1+\alpha(a)=1-\beta(a)$. Also
$\alpha(1+a)=\beta(1+a)=1+\beta(a)$ or
$\alpha(1+a)=-\beta(1+a)=-1-\beta(a)$. So $1-\beta(a)=1+\beta(a)$ or
$1-\beta(a)=-1-\beta(a)$. Therefore, $\beta(a)=0$ since $1+1\neq
0$. Hence $a=0$. Thus $\alpha(a)=\beta(a)$ for all $a\in\Q$.
\end{proof}

Let $\Id_H$ denote the {\bf identity map} from $H\subseteq \Q^d$ to
$H$, i.e., $\Id_H(\vx)=\vx$ for all $\vx\in H$.

\begin{rem}
It is easy to see that Lemma~\ref{lem-aut} is not valid if the field
has characteristic $2$, i.e., if $1+1=0$. For example, the $4$ element
field has two automorphisms $\Id$ and $\alpha$; and $\alpha^2=\Id^2$, but $\alpha\neq \Id$.
\end{rem}

\begin{lem}\label{lem-aldecomp}
Assume \ax{AxField}. Let $f:\Q^d\rightarrow\Q^d$ be a semilinear
transformation having the property
\begin{equation}\label{eq-al}
\mu^2_1\big(f(\vx)\big)=\lambda\alpha\big(\mu^2_1(\vx)\big)
\end{equation}
for some
$\lambda\neq0$ and field automorphism $\alpha$.  Then there are almost
Lorentz transformations $A$ and $A^*$ such that $f=\tilde\alpha\fatsemi
A =A^*\fatsemi \tilde\alpha$.
\end{lem}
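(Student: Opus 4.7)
The plan is to obtain $A$ and $A^*$ by ``stripping off'' $\tilde\alpha$ from one side of $f$. Concretely, set $A\de \tilde\alpha^{-1}\fatsemi f$ and $A^*\de f\fatsemi \tilde\alpha^{-1}$, so that $f=\tilde\alpha\fatsemi A$ and $f=A^*\fatsemi \tilde\alpha$ hold by construction. What remains is to verify that both $A$ and $A^*$ are almost Lorentz transformations, i.e., linear bijections of $\Q^d$ satisfying the scaled Minkowski identity.

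The computation that powers everything is that $\mu^2_1$ intertwines $\tilde\alpha$ with $\alpha$: since $\mu^2_1(\vx)=x_1^2-x_2^2-\ldots-x_d^2$ is a polynomial over $\mathbb{Z}$ and $\alpha$ is a ring automorphism of $\Q$, one has $\mu^2_1\big(\tilde\alpha(\vx)\big)=\alpha\big(\mu^2_1(\vx)\big)$, and analogously $\mu^2_1\big(\tilde\alpha^{-1}(\vx)\big)=\alpha^{-1}\big(\mu^2_1(\vx)\big)$. With this at hand the properties of $A$ are mechanical. For $r\in\Q$, semilinearity of $f$ gives $A(r\vy)=f\big(\alpha^{-1}(r)\tilde\alpha^{-1}(\vy)\big)=\alpha\big(\alpha^{-1}(r)\big)A(\vy)=rA(\vy)$; additivity is immediate; bijectivity follows since $f$ and $\tilde\alpha^{-1}$ are bijections (if one insists on deriving bijectivity of $f$, it follows from \eqref{eq-al} together with non-degeneracy of the associated bilinear form and a dimension count on $\Q^d$). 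For the quadratic condition,
\begin{equation*}
\mu^2_1\big(A(\vy)\big)=\lambda\alpha\big(\mu^2_1(\tilde\alpha^{-1}(\vy))\big)=\lambda\alpha\big(\alpha^{-1}(\mu^2_1(\vy))\big)=\lambda\mu^2_1(\vy),
\end{equation*}
using \eqref{eq-al} and the intertwining identity, so $A$ is almost Lorentz with constant $\lambda$.

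The argument for $A^*$ is entirely symmetric: linearity is $A^*(r\vx)=\tilde\alpha^{-1}\big(\alpha(r)f(\vx)\big)=\alpha^{-1}(\alpha(r))A^*(\vx)=rA^*(\vx)$, and
\begin{equation*}
\mu^2_1\big(A^*(\vx)\big)=\alpha^{-1}\big(\mu^2_1(f(\vx))\big)=\alpha^{-1}\big(\lambda\alpha(\mu^2_1(\vx))\big)=\alpha^{-1}(\lambda)\mu^2_1(\vx),
\end{equation*}
exhibiting the required nonzero scalar $\alpha^{-1}(\lambda)$. There is no real obstacle beyond the intertwining identity; the only thing to keep track of is the composition convention $(g\fatsemi f)(\vx)=f(g(\vx))$, so that $\alpha$ cancels $\alpha^{-1}$ and $\tilde\alpha$ cancels $\tilde\alpha^{-1}$ on the correct side in each verification.
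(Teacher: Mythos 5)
Your decomposition is the same as the paper's ($A=\tilde\alpha^{-1}\fatsemi f$, and your $A^*=f\fatsemi\tilde\alpha^{-1}$ coincides with the paper's $\tilde\alpha\fatsemi A\fatsemi\tilde\alpha^{-1}$), and the intertwining identity $\mu^2_1\big(\tilde\alpha(\vx)\big)=\alpha\big(\mu^2_1(\vx)\big)$ together with your two quadratic-form computations are exactly right. But there is a genuine gap in the linearity verification. You write $A(r\vy)=f\big(\alpha^{-1}(r)\tilde\alpha^{-1}(\vy)\big)=\alpha\big(\alpha^{-1}(r)\big)A(\vy)$, which silently assumes that the field automorphism $\beta$ witnessing the semilinearity of $f$ (i.e., the one with $f(a\vx)=\beta(a)f(\vx)$) is the same $\alpha$ that appears in \eqref{eq-al}. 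Nothing in the hypotheses gives you this: $f$ is semilinear with respect to \emph{some} automorphism, while $\alpha$ is introduced only through the quadratic identity, and in the application (Theorem~\ref{thm-v}) that is all one gets. If $\beta\neq\alpha$, then $A(r\vy)=\beta\big(\alpha^{-1}(r)\big)A(\vy)$, so $A$ is merely semilinear with respect to $\beta\circ\alpha^{-1}\neq\Id$, hence not linear and not an almost Lorentz transformation; the same problem recurs in your treatment of $A^*$.

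Establishing $\beta=\alpha$ is the actual content of the paper's proof, and it is not formal bookkeeping. One computes $\mu^2_1\big(f(a\vx)\big)$ in two ways: via semilinearity and then \eqref{eq-al} it equals $\beta(a)^2\lambda\alpha\big(\mu^2_1(\vx)\big)$, while via \eqref{eq-al} and the homogeneity $\mu^2_1(a\vx)=a^2\mu^2_1(\vx)$ it equals $\lambda\alpha(a)^2\alpha\big(\mu^2_1(\vx)\big)$. Choosing any $\vx$ with $\mu^2_1(\vx)\neq0$ (e.g.\ $\vet$) and cancelling $\lambda\neq0$ yields $\alpha(a)^2=\beta(a)^2$ for all $a\in\Q$, whence $\alpha=\beta$ by Lemma~\ref{lem-aut} (note this step uses $1+1\neq0$). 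Once that identification is in place, the rest of your argument goes through essentially verbatim.
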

\begin{proof}
Let $A$ be $\tilde\alpha^{-1}\fatsemi f$, i.e., 
\begin{equation}\label{eq-A}
A(\vx)=f\big(\tilde\alpha^{-1}(\vx)\big)
\end{equation}
for all $\vx\in\Q^d$.  
$A$ is a bijection since both $\tilde\alpha^{-1}$ and $f$ are so.
$A$ is additive, i.e.,
$A(\vx+\vy)=A(\vx)+A(\vy)$ for all $\vx,\vy\in\Q^d$, since
$\tilde\alpha^{-1}$ and $f$ are so.

Since $f$ is semilinear, there is a automorphism $\beta$ such
that 
\begin{equation}\label{eq-b}
f(a\vx)=\beta(a)f(\vx)
\end{equation}
for all $\vx\in\Q^d$ and $a\in\Q$. Consequently, we have
\begin{equation*}
  \mu^2_1\big(f(a\vx)\big)\stackrel{\eqref{eq-b}}{=}\mu^2_1\big(\beta(a)f(\vx)\big)\stackrel{\eqref{eq-q}}{=}\beta(a)^2\mu^2_1\big(f(\vx)\big)\stackrel{\eqref{eq-al}}{=}\beta(a)^2\lambda\alpha\big(\mu^2_1(\vx)\big)
\end{equation*}
and
\begin{equation*}
\mu^2_1\big(f(a\vx)\big)\stackrel{\eqref{eq-al}}{=}\lambda\alpha\big(\mu^2_1(a\vx)\big)\stackrel{\eqref{eq-q}}{=}\lambda\alpha\big(a^2\mu^2_1(\vx)\big)=\lambda\alpha(a)^2\alpha\big(\mu^2_1(\vx)\big).
\end{equation*}
for all $a\in\Q$.  Consequently,
$\lambda\beta(a)^2\alpha\big(\mu^2_1(\vx)\big)=\lambda\alpha(a)^2\alpha\big(\mu^2_1(\vx)\big)$
for all $a\in\Q$.  So $\alpha(a)^2=\beta(a)^2$ for all
$a\in\Q$. Therefore, by Lemma~\ref{lem-aut}, $\alpha=\beta$.
Consequently, equation \eqref{eq-b} becomes
\begin{equation}\label{eq-a}
f(a\vx)=\alpha(a)f(\vx).
\end{equation}
Thus $A$
is a linear bijection since
\begin{multline*}
A(a\vx)\stackrel{\eqref{eq-A}}{=}f\big(\tilde\alpha^{-1}(a \vx)\big)\stackrel{\eqref{eq-a}}{=}f\big(\alpha^{-1}(a)\tilde\alpha^{-1}(\vx)\big)\\=\alpha\big(\alpha^{-1}(a)\big)f\big(\tilde\alpha^{-1}(\vx)\big)=af\big(\tilde\alpha^{-1}(\vx)\big)\stackrel{\eqref{eq-A}}{=}a A(\vx)
\end{multline*}
for all $\vx\in\Q^d$ and $a\in\Q$.

Now we are going to show that $\mu^2_1\big(A(\vx)\big)=\lambda\mu^2_1(\vx)$ for all $\vx\in\Q^d$.
Let $\vx\in\Q^d$ and let $\vy=\tilde\alpha^{-1}(\vx)$. 
\begin{multline*}
\mu^2_1\big(A(\vx)\big)\stackrel{\eqref{eq-A}}{=}\mu^2_1\big(f\big(\tilde\alpha^{-1}(\vx)\big)\big)=\mu^2_1\big(f(\vy)\big)\\\stackrel{\eqref{eq-al}}{=}\lambda\alpha\big(\mu^2_1(\vy)\big)=\lambda\mu^2_1\big(\tilde\alpha(\vy)\big)=\lambda\mu^2_1(\vx).
\end{multline*}
This proves that $A$ is an almost Lorentz transformation; and $f=\tilde\alpha \fatsemi A$ by the
definition of $A$.

\medskip
We also have that $f=A^*\fatsemi \tilde\alpha$ for almost Lorentz transformation
$A^*=\tilde\alpha\fatsemi A\fatsemi \tilde \alpha^{-1}$.
\end{proof}

Vectors $\vx,\vy\in\Q^d$ are called {\bf orthogonal} in the Euclidean
sense, in symbols $\vx\perp_e\vy$, iff $x_1y_1+\ldots +x_dy_d=0$.

Vectors $\vx,\vy\in\Q^d$ are called {\bf Minkowski orthogonal}, in
symbols $\vx\perp_\mu\vy$, iff $(\vx,\vy)_{\mu^2_1}=0$, i.e.,
$x_1y_1=x_2y_2\ldots +x_dy_d$.

\begin{lem}\label{lem-alort}
Assume \ax{AxField}. Let $A$ be an almost
Lorentz transformation. Then $\vx\perp_\mu\vy$ iff $A(\vx)\perp_\mu
A(\vy)$ for all $\vx,\vy\in\Q^d$.
\end{lem}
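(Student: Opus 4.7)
The plan is to reduce Minkowski orthogonality to an algebraic identity involving the bilinear form $(\cdot,\cdot)_{\mu^2_1}$ associated with $\mu^2_1$, and then exploit linearity of $A$ together with the scaling relation $\mu^2_1(A(\vx)) = \lambda \mu^2_1(\vx)$ to transfer the property across $A$.

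Concretely, first I would unwind the definition: $\vx \perp_\mu \vy$ means $(\vx,\vy)_{\mu^2_1} = 0$, where by the definition of the associated bilinear form stated at the beginning of Section~\ref{proof-poi}, $(\vx,\vy)_{\mu^2_1} = \mu^2_1(\vx+\vy) - \mu^2_1(\vx) - \mu^2_1(\vy)$. Then I would compute
\begin{equation*}
(A(\vx),A(\vy))_{\mu^2_1} = \mu^2_1(A(\vx)+A(\vy)) - \mu^2_1(A(\vx)) - \mu^2_1(A(\vy)).
\end{equation*}
Since $A$ is a linear bijection (this is part of the definition of almost Lorentz transformation, recalled just before Lemma~\ref{lem-aldecomp}), the first term equals $\mu^2_1(A(\vx+\vy))$. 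Applying the defining scaling property $\mu^2_1 \circ A = \lambda \mu^2_1$ to all three summands yields
\begin{equation*}
(A(\vx),A(\vy))_{\mu^2_1} = \lambda\bigl(\mu^2_1(\vx+\vy) - \mu^2_1(\vx) - \mu^2_1(\vy)\bigr) = \lambda\,(\vx,\vy)_{\mu^2_1}.
\end{equation*}

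Finally, since $\lambda \neq 0$ is a nonzero element of the field $\langle\Q,+,\cdot\rangle$, multiplication by $\lambda$ is a bijection on $\Q$ fixing $0$, so $(A(\vx),A(\vy))_{\mu^2_1} = 0$ if and only if $(\vx,\vy)_{\mu^2_1} = 0$; that is the desired equivalence. There is essentially no obstacle here: the lemma is a one-line consequence of linearity of $A$ plus the defining scaling property, once one notices that the bilinear form $(\cdot,\cdot)_{\mu^2_1}$ is precisely the polarization of $\mu^2_1$ used in the definition just given. The only mild care needed is to observe that nothing beyond \ax{AxField} is required, because the cancellation of $\lambda$ uses only that $\Q$ is a field and $\lambda \neq 0$.
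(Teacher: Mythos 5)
Your proof is correct and follows essentially the same route as the paper's: both polarize $\mu^2_1$ to the bilinear form, use linearity of $A$ and the scaling relation to obtain $\bigl(A(\vx),A(\vy)\bigr)_{\mu^2_1}=\lambda\,(\vx,\vy)_{\mu^2_1}$, and conclude from $\lambda\neq 0$. You merely make explicit the intermediate step $A(\vx)+A(\vy)=A(\vx+\vy)$ that the paper leaves implicit.
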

\begin{proof}
By definition, $\vx\perp_\mu\vy$ iff $(\vx,\vy)_{\mu^2_1}=0$. Also by
definition
$\big(A(\vx),A(\vy)\big)_{\mu^2_1}=\mu^2_1\big(A(\vx)+A(\vy)\big)-\mu^2_1\big(A(\vx)\big)-\mu^2\big(A(\vy)\big)$. Since
$A$ is an almost Lorentz transformation, $\big(A(\vx),A(\vy)\big)_{\mu^2_1}=\lambda\cdot
(\vx,\vy)_{\mu^2_1}$ for some $\lambda\ne 0$.  Therefore, $\big(A(\vx),A(\vy)\big)_{\mu^2_1}=0$
iff $(\vx,\vy)_{\mu^2_1}=0$; and this is what we wanted to prove.
\end{proof}

\noindent
Let us introduce the {\bf time unit vector} as follows $\vet\de \langle 1, 0,\ldots,0\rangle$. 

\begin{prop}\label{prop-async}
Assume \ax{AxField}.  Let $A$ be an almost Lorentz
transformation. Then $y_1=0$ and $A(\vy)_1=0$ iff $A(\vet)\perp_e
A(\vy)$ and $A(\vy)_1=0$ for all $\vy\in\Q^d$.
\end{prop}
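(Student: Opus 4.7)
The plan is to use Lemma~\ref{lem-alort} (that almost Lorentz transformations preserve Minkowski orthogonality) together with the observation that Minkowski and Euclidean orthogonality between $A(\vet)$ and a vector with first coordinate zero coincide. Since both sides of the biconditional already assume $A(\vy)_1=0$, the proposition really amounts to showing that, under this assumption, $y_1=0$ is equivalent to $A(\vet)\perp_e A(\vy)$.

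First I would unpack the definitions. Because $\vet=\langle 1,0,\ldots,0\rangle$, the condition $\vet\perp_\mu\vy$ expands to $1\cdot y_1-0\cdot y_2-\cdots-0\cdot y_d=0$, i.e., exactly to $y_1=0$. So $y_1=0$ is equivalent to $\vet\perp_\mu\vy$, which by Lemma~\ref{lem-alort} is equivalent to $A(\vet)\perp_\mu A(\vy)$.

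Next I would observe that when $A(\vy)_1=0$, the Minkowski inner product
\begin{equation*}
A(\vet)_1 A(\vy)_1 - A(\vet)_2 A(\vy)_2-\cdots-A(\vet)_d A(\vy)_d
\end{equation*}
and the Euclidean inner product
\begin{equation*}
A(\vet)_1 A(\vy)_1 + A(\vet)_2 A(\vy)_2+\cdots+A(\vet)_d A(\vy)_d
\end{equation*}
differ only by a sign on the spatial part while both have the temporal term $A(\vet)_1 A(\vy)_1=0$. Hence the two vanish simultaneously, and $A(\vet)\perp_\mu A(\vy)\Leftrightarrow A(\vet)\perp_e A(\vy)$ under the hypothesis $A(\vy)_1=0$.

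Chaining the two equivalences gives the proposition. There is no real obstacle here; the only thing to double-check is that the hypothesis $A(\vy)_1=0$ is kept on both sides of the iff throughout, so that the identification of Minkowski and Euclidean orthogonality with $A(\vet)$ is legitimate in each direction of the argument.
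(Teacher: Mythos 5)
Your proposal is correct and follows essentially the same route as the paper's own proof: reduce to the equivalence of $y_1=0$ with $A(\vet)\perp_e A(\vy)$ under the hypothesis $A(\vy)_1=0$, pass through $\vet\perp_\mu\vy$ and Lemma~\ref{lem-alort}, and then identify Minkowski with Euclidean orthogonality because the temporal term vanishes. Nothing further is needed.
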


\begin{proof}
Let $\vy\in\Q^d$.  It is enough to show that $y_1=0$ is equivalent to
 $\vet\perp_e A(\vy)$ assuming that $A(\vy)_1=0$. It is clear that $y_1=0$
iff $\vet\perp_\mu\vy$. By Lemma~\ref{lem-alort}, $\vet\perp_\mu\vy$
iff $A(\vet)\perp_\mu A(\vy)$. Since $A(\vy)_1=0$, we have
$A(\vet)\perp_\mu A(\vy)$ iff $A(\vet)\perp_e A(\vy)$. Therefore,
$y_1=0$ iff $\vet\perp_e A(\vy)$ provided that
$A(\vy)_1=0$.
\end{proof}

Let $m$ and $k$ be inertial observers and let $\vx,\vy\in\Q^d$.
Events $\ev_m(\vx)$ and $\ev_m(\vy)$ are {\bf simultaneous} for $k$
iff $x'_1=y'_1$ for all $\vx'$ and $\vy'$ for which
$\ev_m(\vx)=\ev_k(\vx')$ and $\ev_m(\vy)=\ev_k(\vy')$. Events
$\ev_m(\vx)$ and $\ev_m(\vy)$ are {\bf separated orthogonally to the
  plane of motion} of $k$ according to $m$ iff $x_1=y_1$ and
$(\vx-\vy)\perp_e \big(\w_{km}(\vet)-\w_{km}(\vo)\big)$, see Figure~\ref{fig-ort}.

\begin{thm}\label{thm-async} 
Let $d\ge 3$. Assume \ax{AxOField}, \ax{AxPh}, and \ax{AxEv}. Let $m$
and $k$ be inertial observers and let $\vx,\vy\in\Q^d$. Events
$\ev_m(\vx)$ and $\ev_m(\vy)$ are simultaneous for both $m$ and $k$
iff $\ev_m(\vx)$ and $\ev_m(\vy)$ are separated orthogonally to the
plane of motion of $k$ according to $m$.\footnote{Specially, if $\sqspeed_m(k)=0$, the same events are simultaneous for $m$ and
  $k$.}
\end{thm}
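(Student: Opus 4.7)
The plan is to use the decomposition of $\w_{mk}$ from Theorem~\ref{thm-decomp} to reduce the statement to Proposition~\ref{prop-async}. Setting $\Delta := \vx - \vy$, both sides of the claimed equivalence contain the conjunct $\Delta_1 = 0$ (simultaneity for $m$), so the real work is to show, under the hypothesis $\Delta_1 = 0$, that simultaneity of the two events for $k$ is equivalent to $\Delta \perp_e \bigl(\w_{km}(\vet) - \w_{km}(\vo)\bigr)$.

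First I would use Theorem~\ref{thm-decomp} to write $\w_{mk} = T_{\ls_m} \fatsemi A \fatsemi \tilde\alpha \fatsemi T \fatsemi T^{-1}_{\ls_k}$ with $A$ an almost Lorentz transformation, $\alpha$ a field automorphism, and $T$ a translation. Translations drop out of differences and the remaining maps are all additive (for $\tilde\alpha$ this is because $\alpha$ is), hence
\begin{equation*}
\w_{mk}(\vx)-\w_{mk}(\vy) = T^{-1}_{\ls_k}\bigl(\tilde\alpha(A(T_{\ls_m}(\Delta)))\bigr).
\end{equation*}
Because $T^{-1}_{\ls_k}$ only rescales the first coordinate and $\tilde\alpha$ acts coordinatewise with $\alpha(0)=0$ and $\alpha(q)\neq 0$ for $q\neq 0$, simultaneity of the events for $k$ is equivalent to $A(T_{\ls_m}(\Delta))_1 = 0$.

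Next I would apply Proposition~\ref{prop-async} with $A$ replaced by $A^{-1}$ (which is an almost Lorentz transformation with constant $\lambda^{-1}$) and $\vy$ replaced by $A(T_{\ls_m}(\Delta))$. Since $A^{-1}\bigl(A(T_{\ls_m}(\Delta))\bigr) = T_{\ls_m}(\Delta)$ has first coordinate $\ls_m\Delta_1$, the proposition yields that, under $\Delta_1 = 0$, the condition $A(T_{\ls_m}(\Delta))_1 = 0$ is equivalent to $A^{-1}(\vet) \perp_e T_{\ls_m}(\Delta)$. Since $T_{\ls_m}$ alters only the first coordinate and $\Delta_1 = 0$, the Euclidean inner product of $A^{-1}(\vet)$ with $T_{\ls_m}(\Delta)$ coincides with that of $A^{-1}(\vet)$ with $\Delta$, so this condition simplifies to $\Delta \perp_e A^{-1}(\vet)$.

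Finally I would compute $\w_{km}(\vet) - \w_{km}(\vo)$ by inverting the decomposition, namely $\w_{mk}^{-1} = T_{\ls_k} \fatsemi T^{-1} \fatsemi \tilde\alpha^{-1} \fatsemi A^{-1} \fatsemi T_{1/\ls_m}$. Using $T_{\ls_k}(\vet) - T_{\ls_k}(\vo) = \ls_k\vet$, the linearity of $A^{-1}$, and $\tilde\alpha^{-1}(\lambda\vet) = \alpha^{-1}(\lambda)\vet$, a direct calculation gives
\begin{equation*}
\w_{km}(\vet) - \w_{km}(\vo) = \alpha^{-1}(\ls_k)\cdot T_{1/\ls_m}\bigl(A^{-1}(\vet)\bigr),
\end{equation*}
which is a nonzero scalar multiple of $A^{-1}(\vet)$ with only its first coordinate rescaled. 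When $\Delta_1=0$, neither the scalar nor the first-coordinate rescaling affects the Euclidean inner product with $\Delta$, so $\Delta \perp_e \bigl(\w_{km}(\vet) - \w_{km}(\vo)\bigr)$ is equivalent to $\Delta \perp_e A^{-1}(\vet)$, closing the chain of equivalences. The main obstacle is the bookkeeping around the field automorphism $\alpha$ and the two time-rescalings $T_{\ls_m}, T_{\ls_k}$ in the decomposition; what makes it tractable is that all the ``extra'' operations affect only the first coordinate, which is pinned to $0$ by the simultaneity-for-$m$ hypothesis.
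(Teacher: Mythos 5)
Your proposal is correct and follows essentially the same route as the paper's proof: both rest on the decomposition of $\w_{mk}$ from Theorem~\ref{thm-decomp} and reduce the statement to Proposition~\ref{prop-async}. The only difference is presentational --- where the paper disposes of the rescalings $S_c$, the automorphism $\tilde\alpha$ and the translation $T$ with a ``without loss of generality'' remark and then applies Proposition~\ref{prop-async} to the reduced $\w_{km}$, you track these maps explicitly (using the $T_c$-form of the decomposition and applying the proposition to $A^{-1}$), which makes the bookkeeping the paper glosses over fully visible.
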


\begin{proof}
Let $\vx'=\w_{mk}(\vx)$, $\vy'=\w_{mk}(\vy)$, and $\vv=\vy-\vx$, see
 Figure~\ref{fig-ort}.
\begin{figure}[h!btp]
\psfrag{m}[tl][tl]{${m}$}
\psfrag{k}[tl][tl]{${k}$}
\psfrag{o}[br][br]{$\vo$}
\psfrag{1}[br][br]{$1$}
\psfrag{x}[tr][tr]{$\vx$}
\psfrag{y}[tr][tr]{$\vy$}
\psfrag{xx}[tl][tl]{$\vx'=\w_{mk}(\vx)$}
\psfrag{yy}[br][br]{$\vy'=\w_{mk}(\vy)$}
\psfrag{wo}[tl][tl]{$\w_{km}(\vo)$}
\psfrag{w1}[tl][tl]{$\w_{km}(\vet)$}
\psfrag{v}[bl][bl]{$\vv$}
\psfrag{wmk}[bl][bl]{$\w_{mk}$}
\psfrag{wkm}[tl][rl]{$\w_{km}$}
\begin{center}
\includegraphics[keepaspectratio, width=\textwidth]{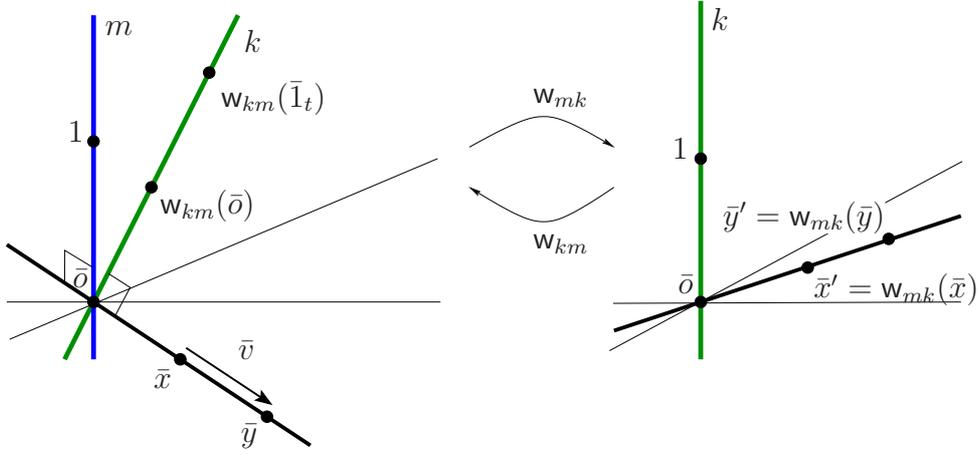}
\caption{Illustration for the proof of Theorem~\ref{thm-async}}
\label{fig-ort}
\end{center}
\end{figure}
By Theorem~\ref{thm-decomp}, $\w_{km}=S^{-1}_{\ls_k}\fatsemi A\fatsemi
\tilde\alpha \fatsemi T\fatsemi S_{\ls_m}$ for some field automorphism
$\alpha$, translation $T$ and almost Lorentz transformation $A$. Maps
$S_c$, $\tilde\alpha$ and $T$ do not change the facts whether
$\ev_m(\vx)$ and $\ev_m(\vy)$ are simultaneous for both $m$ and $k$;
and whether they are separated orthogonally to the plane of motion of
$k$ according to $m$. Therefore, we can assume, without loss of
generality, that $\w_{mk}$ is an almost Lorentz transformation.  Then
$\w_{km}(\vo)=\vo$. Therefore, events $\ev_m(\vx)$ and $\ev_m(\vy)$
are orthogonal to the plane of motion of $k$ according to $m$ iff
$v_1=0$ and $\vv\perp_e\w_{km}(\vet)$.  Let $\vv'=\w_{mk}(\vv)$, then
$\ev_m(\vx)$ and $\ev_m(\vy)$ are orthogonal to the plane of motion
iff $\w_{km}(\vv')_1=0$ and $\w_{km}(\vv')\perp_e\w_{km}(\vet)$.  By
Proposition~\ref{prop-async}, this is equivalent to
$\w_{km}(\vv')_1=0$ and $\vv'=0$. This means that $x_1=y_1$ and
$x'_1=y'_1$, i.e., that $\ev_m(\vx)$ and $\ev_m(\vy)$ are simultaneous
both for $m$ and $k$; and that is what we wanted to prove.
\end{proof}

Let $a\in\Q$ such that $a\neq 0$. Let us introduce {\bf dilation} $D_a$ as the
transformation mapping $\vx$ to $a\vx$ for all $\vx\in\Q^d$. It is clear that $D^{-1}_a=D_{1/a}$.

\begin{lem}\label{lem-decomp}
Assume \ax{AxField}. Let $A$ be an almost Lorentz transformation such
that $\mu^2_1\big(A(\vx)\big)=a^2\mu^2_1(\vx)$ for all
$\vx\in\Q^d$. There are a unique Lorentz transformation $L$ and a
unique dilation $D$ such that $A=D\fatsemi L=L \fatsemi
D$.
\end{lem}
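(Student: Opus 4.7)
The plan is to extract the scaling factor of $A$ acting on $\mu^2_1$ and use it to isolate a dilation. Set $D \de D_a$ and define $L \de D_{1/a} \fatsemi A$, so that $L(\vx) = A(\vx)/a$. Then $L$ is a linear bijection as a composition of linear bijections, and it preserves $\mu^2_1$:
\begin{equation*}
\mu^2_1\big(L(\vx)\big) = \mu^2_1\big(A(\vx)/a\big) = \frac{1}{a^2}\mu^2_1\big(A(\vx)\big) = \mu^2_1(\vx),
\end{equation*}
using the scaling identity \eqref{eq-q}. Hence $L$ is a Lorentz transformation.

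The identities $A = D \fatsemi L = L \fatsemi D$ are then immediate from the linearity of $L$: for every $\vx$,
\begin{equation*}
L\big(D_a(\vx)\big) = L(a\vx) = aL(\vx) = D_a\big(L(\vx)\big) = A(\vx),
\end{equation*}
so dilations commute with every linear map, and both compositions coincide with $A$. For uniqueness, suppose $A = D_b \fatsemi L'$ with $L'$ a Lorentz transformation and $b \neq 0$. Linearity of $L'$ gives $A(\vx) = b L'(\vx)$, so applying $\mu^2_1$ yields $a^2 \mu^2_1(\vx) = b^2 \mu^2_1(\vx)$ for all $\vx$. Choosing $\vx = \vet$ (which has $\mu^2_1(\vet) = 1$) forces $b^2 = a^2$; thereafter $L' = D_{1/b} \fatsemi A$ is determined by $b$.

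The only delicate point is that $b^2 = a^2$ leaves a sign ambiguity: both $(D_a, L)$ and $(D_{-a}, L \fatsemi D_{-1})$ give valid decompositions, since $-L$ is a Lorentz transformation whenever $L$ is. I would expect the intended reading of the lemma to be that $a > 0$ is chosen (which is legitimate because only $a^2$ enters the hypothesis), after which $D$ and $L$ are uniquely determined. Apart from this convention, all remaining work is purely formal manipulation using linearity of $A$ and the scaling law \eqref{eq-q} for $\mu^2_1$.
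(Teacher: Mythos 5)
Your proof is correct and follows essentially the same route as the paper's: set $L=D_a^{-1}\fatsemi A$, verify that it preserves $\mu^2_1$, use linearity to commute the dilation past $L$, and read off uniqueness from the scaling factor. Your remark about the sign ambiguity (that $D_{-a}$ composed with $-L$ gives another valid decomposition) is a genuine point that the paper's own uniqueness argument glosses over, and fixing the sign of $a$ (or restricting dilations to positive scalars) is the natural repair.
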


\begin{proof}
Let $L$ be $D^{-1}_a\fatsemi A$. $L$ is a Lorentz transformation since 
\begin{equation*}
\mu^2_1\big(L(\vx)\big)=\mu^2_1\left(\frac{1}{a}A(\vx)\right)=\frac{1}{a^2}\mu^2_1\big(A(\vx)\big)=\frac{1}{a^2}a^2\mu^2_1(\vx)=\mu^2_1(\vx).
\end{equation*}
Therefore, $A=D_a\fatsemi L$ for Lorentz transformation $L$ and
dilation $D_a$.  Since $A$ is linear, $A=D^{-1}_a\fatsemi A \fatsemi
D_a$. Thus $A=D^{-1}_a\fatsemi D_a\fatsemi L \fatsemi D_a=L\fatsemi D_a$.

\medskip
If $A= D\fatsemi L$ for a Lorentz transformation $L$ and dilation $D$,
then $D$ has to be $D_a$ since $\mu^2_1\big(L(\vx)\big)=\mu^2_1(\vx)$
and $\mu^2_1\big(A(\vx)\big)=a^2\mu^2_1(\vx)$. Therefore, both $D$ and
$L$ are unique in the decomposition of $A$. The same proof works when
$A$ is decomposed as $A=L\fatsemi D$.
\end{proof}

\begin{lem}\label{lem-t1}
Assume \ax{AxOField}. Let $\vx,\vy\in\Q^d$ such that $\mu^2_1(\vx)>0$ and $(\vx,\vy)_{\mu^2_1}=0$. Then $\mu^2_1(\vy)<0$.
\end{lem}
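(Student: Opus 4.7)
\medskip
\noindent\textbf{Proof plan.} The plan is to reduce the claim to a Cauchy--Schwarz type estimate on the spatial components. First I would expand the hypothesis: by the definition of $(\cdot,\cdot)_{\mu^2_1}$ (and a direct computation of $\mu^2_1(\vx+\vy)$), the assumption $(\vx,\vy)_{\mu^2_1}=0$ becomes
\begin{equation*}
x_1 y_1 = x_2 y_2 + \ldots + x_d y_d,
\end{equation*}
so in particular $(x_1 y_1)^2 = (x_2 y_2 + \ldots + x_d y_d)^2$. Squaring and comparing will let me separate the time coordinate from the space coordinates.

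Next I would invoke the Lagrange identity
\begin{equation*}
\Bigl(\sum_{i=2}^d x_i^2\Bigr)\Bigl(\sum_{i=2}^d y_i^2\Bigr) - \Bigl(\sum_{i=2}^d x_i y_i\Bigr)^{\!2} = \sum_{2\le i<j\le d}(x_i y_j - x_j y_i)^2,
\end{equation*}
which is a polynomial identity valid over any commutative ring; its right-hand side is a sum of squares, hence $\ge 0$ in the ordered field $\Q$ by \ax{AxOField}. Combining with the previous display yields
\begin{equation*}
(x_1 y_1)^2 \le (x_2^2+\ldots+x_d^2)(y_2^2+\ldots+y_d^2).
\end{equation*}

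Now I use $\mu^2_1(\vx)>0$, i.e.\ $x_1^2 > x_2^2+\ldots+x_d^2 \ge 0$. Since the right-hand side here is non-negative and $x_1^2$ is strictly larger, multiplying the displayed inequality by $y_2^2+\ldots+y_d^2$ (which is $\ge 0$) gives
\begin{equation*}
x_1^2 y_1^2 \le (x_2^2+\ldots+x_d^2)(y_2^2+\ldots+y_d^2) \le x_1^2\,(y_2^2+\ldots+y_d^2).
\end{equation*}
Because $x_1^2>0$, division yields $y_1^2 \le y_2^2+\ldots+y_d^2$, i.e.\ $\mu^2_1(\vy)\le 0$. To upgrade $\le$ to strict $<$, I split cases: if $y_2^2+\ldots+y_d^2 > 0$, then the middle inequality above is strict (since $x_1^2 > x_2^2+\ldots+x_d^2$), yielding $\mu^2_1(\vy)<0$. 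If instead $y_2^2+\ldots+y_d^2 = 0$, then $y_2=\ldots=y_d=0$ in the ordered field, so the hypothesis forces $x_1 y_1=0$, and since $x_1\ne 0$ we get $\vy=\vo$; this degenerate case should be excluded by the intended reading of the lemma (otherwise only $\mu^2_1(\vy)\le 0$ holds).

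\medskip
\noindent The only non-routine step is the Cauchy--Schwarz inequality over a general ordered field, which I would handle not via any analytic ``discriminant trick'' but through the purely algebraic Lagrange identity displayed above, whose positivity comes for free from \ax{AxOField}. Everything else is straightforward arithmetic in $\Q$.
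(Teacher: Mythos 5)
Your proof is correct and follows essentially the same route as the paper's: reduce Minkowski orthogonality to $x_1y_1=x_2y_2+\ldots+x_dy_d$ and apply Cauchy--Schwarz to the spatial components. The only differences are that you run the inequality chain forward instead of by contradiction, and you make Cauchy--Schwarz self-contained via the Lagrange identity, where the paper cites an external reference for an ordered-field-valid proof; both are fine.

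Your case split at the end catches something real. When $y_2^2+\ldots+y_d^2=0$ the hypotheses force $\vy=\vo$, and then $\mu^2_1(\vy)=0$, so the lemma as literally stated is false in that degenerate case. The paper's indirect argument has the same hidden gap: its strict inequality $x_1^2y_1^2>(x_2^2+\ldots+x_d^2)(y_2^2+\ldots+y_d^2)$ fails when both sides are zero, i.e.\ exactly when $\vy=\vo$. The gap is harmless where the lemma is used (in Proposition~\ref{prop-al} one has $\mu^2_1(\vy')>0$, so $\vy'\neq\vo$), but the statement should carry the hypothesis $\vy\neq\vo$; you were right to flag it rather than silently assume it.
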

\begin{proof}
Assume indirectly that $\mu^2_1(\vy)\ge0$, i.e., $y^2_1\ge
y^2_2+\ldots+y^2_d$. Since $x^2_1>x^2_2+\ldots+x^2_d$, we have that
$x^2_1y^2_1>(x_2^2+\ldots x_d^2)(y^2_2+\ldots+y^2_d)$. By
Cauchy--Schwarz inequality\footnote{For a simple proof of
  Cauchy--Schwarz inequality that works also in ordered fields, see
  \cite[\S 17]{proofs}.}\ we have $(x_2^2+\ldots x_d^2)(y^2_2+\ldots+y^2_d)\ge
  (x_2y_2+\ldots+ x_dy_d)^2$. Since $x_1y_1=x_2y_2+\ldots+ x_dy_d$, we
  have that $x^2_1y^2_1>(x_1y_1)^2$. This contradiction proves that
  $\mu^2_1(\vy)<0$.
\end{proof}

\begin{prop}\label{prop-al}
Let $d\ge3$. Assume \ax{AxOField}. Let $A$ be an almost Lorentz
transformation. Then there is a $\lambda>0$ such that
$\mu^2_1\big(A(\vx)\big)=\lambda\mu^2_1(\vx)$ for all $\vx\in\Q^d$.
\end{prop}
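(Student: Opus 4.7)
The only real content of the proposition beyond the definition of almost Lorentz transformation is the sign of $\lambda$, because the definition already supplies some $\lambda\neq 0$ with $\mu^2_1\big(A(\vx)\big)=\lambda\mu^2_1(\vx)$ for all $\vx\in\Q^d$. I would argue by contradiction, assuming $\lambda<0$, and derive a clash by combining Lemma~\ref{lem-alort} (that almost Lorentz maps preserve Minkowski orthogonality) with Lemma~\ref{lem-t1} (that no nonzero vector can be both timelike and Minkowski-orthogonal to a timelike vector).

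The key configuration uses $d\ge 3$: fix two standard spacelike basis vectors $\vy\de\langle 0,1,0,\ldots,0\rangle$ and $\vz\de\langle 0,0,1,0,\ldots,0\rangle$. A direct computation gives $\mu^2_1(\vy)=\mu^2_1(\vz)=-1$ and $(\vy,\vz)_{\mu^2_1}=0$, so $\vy$ and $\vz$ are Minkowski orthogonal. Applying $A$, Lemma~\ref{lem-alort} yields $A(\vy)\perp_\mu A(\vz)$, while the defining identity, together with the assumption $\lambda<0$, forces $\mu^2_1\big(A(\vy)\big)=\mu^2_1\big(A(\vz)\big)=-\lambda>0$; in particular both images are nonzero timelike vectors (nonzero because $A$ is a bijection and $\vy,\vz\neq\vo$).

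Now I would apply Lemma~\ref{lem-t1} with first argument $A(\vy)$ and second argument $A(\vz)$: since $\mu^2_1\big(A(\vy)\big)>0$ and $\big(A(\vy),A(\vz)\big)_{\mu^2_1}=0$, the lemma concludes $\mu^2_1\big(A(\vz)\big)<0$, contradicting $\mu^2_1\big(A(\vz)\big)=-\lambda>0$. This rules out $\lambda<0$ and, combined with $\lambda\neq 0$, yields $\lambda>0$, as claimed.

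I do not expect a serious obstacle; the real work has been front-loaded into Lemmas~\ref{lem-alort} and~\ref{lem-t1}. The only subtlety is the dimension hypothesis $d\ge 3$, which is used precisely to furnish two linearly independent spacelike directions that are already Minkowski orthogonal in the standard basis; in $d=2$ only one spatial coordinate direction is available and this particular clash cannot be set up.
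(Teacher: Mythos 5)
Your proposal is correct and coincides with the paper's own proof: the same two spacelike basis vectors $\langle 0,1,0,\ldots,0\rangle$ and $\langle 0,0,1,0,\ldots,0\rangle$, the same appeal to Lemma~\ref{lem-alort} to transport Minkowski orthogonality, and the same contradiction with Lemma~\ref{lem-t1} under the assumption $\lambda<0$. Your closing remark about where $d\ge 3$ enters is also exactly the right observation.
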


\begin{proof}
Since $A$ is an almost Lorentz transformation there is a $\lambda\neq
0$ such that $\mu^2_1\big(A(\vx)\big)=\lambda\mu^2_1(\vx)$ for all
$\vx\in\Q^d$. We are going to prove that this $\lambda$ has to be
positive.  Assume indirectly that $\lambda<0$. Let
$\vy=\langle0,1,0,\ldots,0\rangle$ and $\vz=\langle
0,0,1,0,\ldots,0\rangle$. Then $\mu^2_1(\vy)=\mu^2_1(\vz)=-1$ and
$(\vy,\vz)_{\mu^2_1}=0$. Let $\vy'=A(\vy)$ and $\vz'=A(\vz)$. Then
$\mu^2_1(\vy')>0$ and $\mu^2_1(\vz')>0$ since $\lambda<0$; and
$(\vy',\vz')_{\mu^2_1}=0$ by Lemma~\ref{lem-alort}.  These properties
of $\vy'$ and $\vz'$ contradict Lemma~\ref{lem-t1}. Therefore,
$\lambda>0$.
\end{proof}

\begin{rem} 
Proposition~\ref{prop-al} is not valid if $d=2$ since reflection
$\sigma_{tx}:\langle t,x\rangle\mapsto \langle x,t\rangle$ is an
almost Lorentz transformation and
$\mu^2_1\big(\sigma_{tx}(\vx)\big)=-\mu^2_1(\vx)$ for all $\vx\in\Q^2$.
\end{rem}


\begin{prop}\label{prop-aldecomp}
Let $d\ge3$. Assume that $\langle\Q, + ,\cdot\rangle$ is a Euclidean
 field. Then every almost Lorentz transformation is a
composition of a Lorentz transformation and a dilation.\qed
\end{prop}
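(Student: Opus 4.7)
The plan is essentially to combine Proposition~\ref{prop-al} with Lemma~\ref{lem-decomp}, using the defining property of a Euclidean field to bridge them. Lemma~\ref{lem-decomp} gives the desired decomposition $A = D \fatsemi L = L \fatsemi D$ for any almost Lorentz transformation $A$ satisfying $\mu^2_1(A(\vx)) = a^2\mu^2_1(\vx)$, but it requires the scalar distortion factor to be presented as a square. So the only thing missing is knowing that the distortion factor \emph{is} a square.

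First I would invoke Proposition~\ref{prop-al}, which uses $d \geq 3$ and \ax{AxOField}, to obtain a \emph{positive} $\lambda \in \Q$ such that $\mu^2_1(A(\vx)) = \lambda \mu^2_1(\vx)$ for all $\vx \in \Q^d$. Next, I would use the Euclidean field hypothesis: every positive element of $\Q$ has a square root, so there exists $a \in \Q$ with $a^2 = \lambda$ (and $a \neq 0$ since $\lambda \neq 0$). This places $A$ exactly in the hypothesis of Lemma~\ref{lem-decomp}, which then yields a unique Lorentz transformation $L$ and a unique dilation $D_a$ with $A = D_a \fatsemi L$, completing the proof.

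The whole argument is just the assembly step. There is no substantive obstacle: the three prior results carry all the content, and the Euclidean field assumption is precisely the ingredient needed to convert the positivity of $\lambda$ (coming from $d \geq 3$ and the ordered field structure) into the square representation $\lambda = a^2$ demanded by the decomposition lemma. This also explains why the statement needs both $d \geq 3$ (for Proposition~\ref{prop-al}, which fails in dimension 2 by the remark about $\sigma_{tx}$) and the Euclidean hypothesis (which is strictly more than ordered, as it provides the square root of $\lambda$).
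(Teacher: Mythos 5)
Your proof is correct and matches the paper's own argument exactly: Proposition~\ref{prop-al} gives a positive distortion factor $\lambda$, the Euclidean hypothesis writes $\lambda=a^2$, and Lemma~\ref{lem-decomp} supplies the decomposition. Nothing to add.
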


\begin{proof}
The statement follows from Lemma~\ref{lem-decomp} and
Proposition~\ref{prop-al} since in Euclidean fields every
positive number has a square root.
\end{proof}

\begin{rem}
Proposition~\ref{prop-aldecomp} does not remain valid over arbitrary
ordered fields. To construct a counterexample, let $d=4$, $\langle \Q,
+, \cdot, \le \rangle$ be the ordered field of rational numbers, and
let $A$ be the following linear map $A(\vx)=\left\langle
\frac{3x_1+x_2}{2}, \frac{x_1+3x_2}{2}, x_3-x_4, x_3+x_4 \right\rangle$ for all
$\vx\in\rac^4$. It is straightforward to check that
$\mu^2_1\big(A(\vx)\big)=2\mu^2_1(\vx)$ for all $\vx\in\rac^4$; so $A$
is an almost Lorentz transformation. However, $A$ cannot be the
composition of a dilation $D$ and a Lorentz transformation $L$ over
the field of rational numbers since then $A$ would also be the
composition of $D$ and $L$ over the field of real numbers; and, by
Lemma~\ref{lem-decomp}, the dilation in the unique decomposition of
$A$ over the field of real numbers is $D_{\sqrt{2}}$, which does not
map $\rac^4$ to $\rac^4$.
\end{rem}

Now we are ready to prove Theorem~\ref{thm-poi}. In
Theorem~\ref{thm-poi0} we prove a slightly stronger result since we
will not use axiom \ax{AxSelf}.
\begin{thm}\label{thm-poi0}
Let $d\ge3$.  Assume \ax{AxOField}, \ax{AxEv}, \ax{AxPh}, and
\ax{AxSymD}. Let $m,k\in\IOb$. Then $\w_{mk}$ is a Poincar\'e
transformation.
\end{thm}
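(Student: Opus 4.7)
The plan is to apply Theorem~\ref{thm-decomp} to normalize $\w_{mk}$ and then use \ax{AxSymD} to eliminate the remaining ambiguities in that normal form. Concretely, the second clause of \ax{AxSymD} together with \ax{AxPh} forces $\ls_m=\ls_k=1$, so the spatial rescalings $S_{\ls_m}$ and $S_{\ls_k}$ in Theorem~\ref{thm-decomp} collapse to the identity, and
\[
\w_{mk}=A\fatsemi\tilde\alpha\fatsemi T
\]
with $A$ an almost Lorentz transformation, $\tilde\alpha$ induced by a field automorphism $\alpha$, and $T$ a translation. Proposition~\ref{prop-al} (applicable since $d\ge3$) supplies a $\lambda>0$ with $\mu^2_1\big(A(\vz)\big)=\lambda\mu^2_1(\vz)$ for every $\vz\in\Q^d$. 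The task reduces to showing $\lambda=1$ and $\alpha=\Id$: once these hold, $A$ is a Lorentz transformation and $\w_{mk}$ is an affine bijection preserving $\timed^2-\sqspace$ of differences, hence a Poincar\'e transformation.

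To pin $\lambda$ and $\alpha$ down I would feed Theorem~\ref{thm-async} into the spatial-distance clause of \ax{AxSymD}. Pick any $\vx,\vy\in\Q^d$ with $x_1=y_1$ and $\vx-\vy\perp_e\w_{km}(\vet)-\w_{km}(\vo)$. Theorem~\ref{thm-async} guarantees that the events $\ev_m(\vx)$ and $\ev_m(\vy)$ are simultaneous for both $m$ and $k$, whereupon \ax{AxSymD} gives $\sqspace(\vx,\vy)=\sqspace(\w_{mk}(\vx),\w_{mk}(\vy))$. Because all four time components agree in their own frames, this becomes $\mu^2_1(\vx-\vy)=\mu^2_1\big(\w_{mk}(\vx)-\w_{mk}(\vy)\big)$, and since $T$ is a translation while $A$ and $\tilde\alpha$ are additive one computes $\w_{mk}(\vx)-\w_{mk}(\vy)=\tilde\alpha\big(A(\vz)\big)$ for $\vz=\vx-\vy$. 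Substituting the actions of $A$ and $\tilde\alpha$ on $\mu^2_1$ yields the functional equation
\[
\alpha(\lambda)\,\alpha\big(\mu^2_1(\vz)\big)=\mu^2_1(\vz)
\]
valid for every test vector $\vz$ of the above form.

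To extract scalar information from this I would observe that, for $d\ge3$, the Euclidean orthogonal complement of the spatial projection of $\w_{km}(\vet)-\w_{km}(\vo)$ inside the spatial hyperplane $\{\vz:z_1=0\}$ is at least one-dimensional. So I can pick a nonzero $\vu$ in it and test with $\vz=t\vu$ for arbitrary $t\in\Q$; writing $s:=u_2^2+\ldots+u_d^2>0$, the functional equation becomes $\alpha(\lambda)\alpha(t^2 s)=t^2 s$ for every $t$. Setting $t=1$ gives $\alpha(\lambda)\alpha(s)=s$, and feeding this back into the general case yields $\alpha(t^2)=t^2$, i.e.\ $\alpha(t)^2=\Id(t)^2$ for every $t\in\Q$. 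Since ordered fields have characteristic zero, Lemma~\ref{lem-aut} then forces $\alpha=\Id$, and the surviving identity $\alpha(\lambda)\alpha(s)=s$ collapses to $\lambda s=s$, hence $\lambda=1$.

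The main obstacle I anticipate is making the ``choose $\vu$'' step uniform across cases: one must handle both the generic situation in which $\w_{km}(\vet)-\w_{km}(\vo)$ has nonzero spatial projection (so the orthogonality constraint carves out a $(d-2)$-dimensional subspace) and the degenerate situation in which $k$ is at rest relative to $m$ (so that projection vanishes and the constraint becomes vacuous, leaving the full spatial hyperplane). Once $\alpha=\Id$ and $\lambda=1$ are secured, $A$ is a Lorentz transformation and $\w_{mk}=A\fatsemi T$ is manifestly a Poincar\'e transformation, which completes the argument.
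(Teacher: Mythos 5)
Your proposal is correct and follows essentially the same route as the paper: normalize via Theorem~\ref{thm-decomp} (the rescalings vanish because \ax{AxSymD} forces $\ls_m=\ls_k=1$), use Theorem~\ref{thm-async} to find a horizontal direction whose events are simultaneous for both observers, feed the distance clause of \ax{AxSymD} into the scaled test vectors $t\vu$ to get $\alpha(t)^2=t^2$, and conclude $\alpha=\Id$ and $\lambda=1$ via Lemma~\ref{lem-aut}. The only cosmetic differences are that you invoke Proposition~\ref{prop-al} (the paper gets $\lambda=1$ directly from $\lambda\neq0$) and that you explicitly flag the degenerate case of zero relative velocity, which the paper leaves implicit.
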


\begin{proof}
Since, by \ax{AxSymD}, the speed of light is $1$ according to every
inertial observer, $\w_{mk}$ is a composition of an almost Lorentz
transformation $A$, a field-automorphism-induced map $\tilde\alpha$
and a translation $T$ by Theorem \ref{thm-decomp}.  Specially,
$\w_{mk}$ maps lines to lines.

By \ax{AxOField}, there is a line $l$ orthogonal to the plane of
motion of $k$ according to $m$. By Theorem~\ref{thm-async}, both $l$
and $\w_{mk}[l]$ are horizontal. Therefore, by \ax{AxSymD}, $\w_{mk}$
maps $l$ to $\w_{mk}[l]$ preserving the squared Euclidean distances of
the points of $l$. Let $\vv$ be a direction vector of $l$.\footnote{That is,
  $\vv=\vy-\vx$ for two distinct points $\vx$ and $\vy$ of $l$.} Then,
by axiom \ax{AxSymD}, we have that
\begin{equation}\label{A}
\sqlength(x\vv)=\sqlength\big(\tilde\alpha (A(x\vv))\big)
\end{equation}
for all
$x\in\Q$ since both $x\vv$ and $\tilde\alpha\big( A( x\vv)\big)$ are
horizontal vectors. Since both $l$ and $\w_{mk}[l]$ are horizontal, we
have that 
\begin{equation}\label{eq-x}
\mu^2_1(\vv)=\sqlength(\vv)\enskip\text{ and }\enskip\mu^2_1\big(\tilde\alpha
(A (\vv))\big)=\sqlength\big(\tilde\alpha (A(\vv))\big).
\end{equation}
Since $A$ is an almost Lorentz transformation, there is a 
$\lambda\neq0$ such that 
\begin{equation}\label{eq-y}\mu^2_1\big(A
(\vx)\big)=\lambda\mu^2_1(\vx)
\end{equation}
for all $\vx\in\Q^d$. Thus
\begin{multline}\label{B}
\sqlength\big(\tilde\alpha (A(\vv))\big)\stackrel{\eqref{eq-x}}{=}\mu^2_1\big(\tilde\alpha (A (\vv))\big)=\alpha
\big(\mu^2_1(A(\vv))\big)\stackrel{\eqref{eq-y}}{=}\alpha\big(\lambda\mu^2_1(\vv)\big)\\=\alpha(\lambda)\alpha\big(\mu^2_1(\vv)\big)\stackrel{\eqref{eq-x}}{=}\alpha(\lambda)\alpha\big(\sqlength(\vv)\big)
\end{multline}
Therefore, by the fact that that $\sqlength(a\vy)=a^2\sqlength(\vy)$ and  Equations~\eqref{A} and \eqref{B}, we get
\begin{equation}\label{*}
x^2\sqlength(\vv)=\alpha(\lambda)\alpha(x)^2\alpha\big(\sqlength(\vv)\big)
\end{equation}
for all $x\in\Q$. Specially, 
\begin{equation}\label{**}
\sqlength(\vv)=\alpha(\lambda)\alpha\big(\sqlength(\vv)\big)
\end{equation}
by choosing $x=1$ in equation~\eqref{*}.  Equations \eqref{*} and
\eqref{**} imply that $x^2=\alpha(x)^2$ for all $x\in\Q$.
Consequently, $\alpha=\Id_\Q$ by Lemma~\ref{lem-aut}. Thus
$\tilde\alpha=\Id_{\Q^d}$ and $1=\alpha(\lambda)$ by equation
\eqref{*}. So $\lambda=1$, i.e., $A$ is a Lorentz transformation.

So $\tilde\alpha$ has to be the identity map and $A$
has to be a Lorentz transformation. Thus $\w_{mk}$ is a composition of
a Lorentz transformation and a translation, i.e., it is a Poincar\'e
transformation as it was stated.
\end{proof}

\section{Concluding remarks}
We have seen that the possible structures of quantities strongly
depend on the other axioms of spacetime. Typically, axioms requiring
the existence of additional observers reduce the possible structures
of quantities, see Theorems~\ref{thm-eof}, \ref{thm-eof3}, \ref{thm-e}
and Proposition~\ref{prop-uob}. We have proved several propositions
about the connection between spacetime axioms and the possible structures
of numbers. However, there are still great many open questions in this
research area, see Questions~\ref{que-rd}, \ref{que-thx}, \ref{que-2},
\ref{que-uob}, \ref{que-gr} at pages \pageref{que-rd},
\pageref{que-thx}, \pageref{que-2}, \pageref{que-uob},
\pageref{que-gr}, and Conjecture~\ref{conj-of} at page
\pageref{conj-of}.

\section{Acknowledgments}
This research is supported by the Hungarian Scientific Research Fund
for basic research grants No.~T81188 and No.~PD84093, as well as by a
Bolyai grant for J.~X.~Madar\'asz.

\bibliography{LogRelBib}
\bibliographystyle{plain}

\end{document}